\documentclass[onecolumn,pra,eqsecnum,aps,10pt]{revtex4-2}
\usepackage{mathrsfs,units}
\usepackage{amssymb, amsmath, amsthm, dsfont, graphicx,mathrsfs,color,graphicx,physics,bm}
\usepackage[hidelinks]{hyperref}
\makeatletter
\long\def\@makecaption#1#2{%
  \vskip\abovecaptionskip
  \sbox\@tempboxa{#1. #2}%
  \raggedright #1. #2\par
  \vskip\belowcaptionskip}
\makeatother
\usepackage{upgreek}
\usepackage{algorithm}
\usepackage{placeins}
\usepackage{algpseudocode}
\usepackage{orcidlink}
\usepackage{natbib} 
\usepackage{dcolumn}

\newcommand{\indep}{\perp\!\!\!\perp}

\DeclareFontFamily{U}{mathx}{}
\DeclareFontShape{U}{mathx}{m}{n}{<-> mathx10}{}
\DeclareSymbolFont{mathx}{U}{mathx}{m}{n}
\DeclareMathSymbol{\bigtimes}{1}{mathx}{"91}

\algrenewcommand\algorithmicrequire{\textbf{Input:}}
\algrenewcommand\algorithmicensure{\textbf{Output:}}

\newtheorem{definition}{Definition}

\newtheorem{example}{Example}
\newtheorem{problem}{Problem}
\newtheorem{remark}{Remark}
\newtheorem{fact}{Fact}
\newtheorem{theorem}{Theorem}
\newtheorem{lemma}[theorem]{Lemma}
\newtheorem{proposition}[theorem]{Proposition}

\def\bitcoin{%
  \leavevmode
  \vtop{\offinterlineskip 
    \setbox0=\hbox{B}%
    \setbox2=\hbox to\wd0{\hfil\hskip-.03em
    \vrule height .3ex width .15ex\hskip .08em
    \vrule height .3ex width .15ex\hfil}
    \vbox{\copy2\box0}\box2}}

\algrenewcommand\algorithmicprocedure{\textbf{Procedure}
}

\newcolumntype{d}[1]{D{.}{.}{#1}}

\begin{document}
\title{Strategies for quantum-enabled Bitcoin miners}
\author{Zach Manson\, \orcidlink{0000-0001-6462-4101}}
\affiliation{Institute for Quantum Science and Technology, University of Calgary, Alberta, Canada T2N 1N4}
\author{Barry C. Sanders\, \orcidlink{0000-0002-8326-8912}}
\affiliation{Institute for Quantum Science and Technology, University of Calgary, Alberta, Canada T2N 1N4}

\begin{abstract}
We study the impact that two miners equipped with quantum computers purpose-built for quantum Bitcoin mining will have on the 51\% attack threshold of the Bitcoin network, given that the miners are playing a competitive game against each other to be the first to mine a block.
We extend an existing game-theoretic framework for Bitcoin mining and compute the resultant payoff matrices. 
From these payoff matrices, we determine optimal quantum mining strategies for two non-colluding and aggressive quantum miners with multiple opportunities at finding a valid block in an otherwise classical Bitcoin network.
We show that these optimal quantum mining strategies have a negligible effect on the 51\% attack threshold. 
The novelty of our work is the inclusion of the Aggressive Quantum Mining Strategy and the realistic approach of allowing the quantum miners to restart their search if their measurements do not yield a valid block when determining the optimal quantum mining strategies. 
Our result is important for evaluating quantum-mining threats on cryptocurrencies based on Proof-of-Work,
e.g. Bitcoin.
\end{abstract}
\maketitle

\tableofcontents

\section{Introduction}
\label{sec:introduction}

Bitcoin is a decentralized digital currency whose security
relies on the assumption that it is prohibitively expensive for a consortium to control the majority of the Bitcoin network's computational power in a ``51\% attack"~\cite{Ant14}. 
However, if quantum computers are built to scale, this assumption will no longer hold, as Grover's algorithm could be used to reduce the cost of executing a 51\% attack~\cite{Sat20}. 
Prior works have studied this threat, but only under narrow and simplistic adversarial models~\cite{Sat20, LRS19}.
These works show that, in theory, Grover's algorithm could significantly lower the cost of executing a 51\% attack, but the magnitude of this threat within a more practical setting remains unknown.

We study this same threat in a broader and more realistic adversarial model and quantify the impact that Grover's algorithm has on the cost of executing a 51\% attack.
First we define a quantum miner and then state the problem we address.
\begin{definition}[Quantum miner]
\label{def:quantumminer}
A quantum miner is a Bitcoin miner with access to a quantum computer.
\end{definition}
\begin{remark}
\label{remark:weak}
Prior work restricts a quantum miner to undertaking a peaceful mining strategy and just one quantum computation per race~\cite{LRS19},
whereas we permit Sattath's aggressive quantum mining strategy~\cite{Sat20}
and up to~$c$ quantum computations and up to an arbitrary~$\kappa$ Grover iterations over all quantum computations,
compared to prior work for which
\begin{equation}
\label{eq:prior}
c=1,\,
\kappa=k_\text{opt}
\end{equation}
for~$k_\text{opt}$
the optimal number of Grover iterations that maximizes the success probability of Grover's algorithm
for partially inverting the $\operatorname{SHA-256}$ hash function to yield a valid proof-of-work (PoW).
\end{remark}
\noindent
Specifically, we solve the following problem.
\begin{problem}
\label{prob:quantumthreat}
Given two quantum miners in the Bitcoin network, determine their optimal quantum mining strategies and the resulting threshold for the 51\% attack.
\end{problem}
\noindent
A quantum mining strategy is the number of Grover iterations performed by the quantum miners for each of their measurements.

Bitcoin's (capital ``B'' referring to the network/protocol) security against 51\% attacks relies on network-wide consensus~\cite{Ant14}.
To achieve this consensus, Bitcoin relies on the consensus mechanism of PoW, where participants, called miners, compete to solve a computational problem based on unstructured search, with the winning miner earning the right to extend the blockchain and receiving bitcoins (lowercase ``b" referring to the currency) as a reward.
As a large number of miners participate in this competition, and because the difficulty of the underlying computational problem is adjusted periodically based on the total computational power of the network, it is prohibitively expensive for a consortium to dominate the mining process and execute a 51\% attack.
However, if quantum computers are built to scale, miners could use Grover's algorithm to solve the Bitcoin mining problem with a theoretical quadratic advantage over classical methods~\cite{ABL+17}.
Furthermore, quantum Bitcoin miners (quantum miners) may adopt a quantum-specific technique called the aggressive quantum mining strategy~\cite{Sat20}, which increases the rate at which temporary forks in the blockchain occur.
Classically, this is a known security issue, as a higher fork rate decreases the cost to successfully execute a 51\% attack~\cite{Sat20}.

Whereas prior work concerns the theoretical possibility of the threat arising from quantum miners decreasing the cost of successfully executing a 51\% attack~\cite{Sat20, LRS19},
practically the threat level to the Bitcoin network is unknown. 
The quantum threat to Bitcoin in general has been studied from several different angles. 
Kiktenko et al.~\cite{KPA+18} study the quantum threats to both Bitcoin's digital signature scheme and the hash function used in Bitcoin mining, and propose a quantum key distribution-based replacement for Proof-of-Work.
More recently, Babbush et al.~\cite{BZG+26} show that Shor's algorithm can break Bitcoin's digital signature scheme with substantially fewer resources than previously estimated, and Dallaire-Demers and BTQ Technologies~\cite{DDBTQ} show that quantum Bitcoin mining requires an extraordinary amount of power.
No existing work, however, studies the effect of quantum miners on Bitcoin's stale rate under a realistic adversarial model. 
In this work, we fill in this gap by extending an existing game-theoretic model of quantum Bitcoin mining to incorporate behaviour that more accurately reflects how quantum mining would be performed in practice.
Our model accounts for the quantum miners' ability to perform multiple quantum measurements between successive blocks, as well as their use of the aggressive quantum mining strategy. 
We determine the optimal quantum mining strategies for the quantum miners under these conditions, and analyze their effect on the 51\% attack threshold.

In this work, we consider a setting where two quantum miners, Alice and Bob, are hidden within an otherwise classical Bitcoin network.
We extend the game-theoretic approach of Lee et al.~\cite{LRS19} to determine optimal quantum mining strategies for quantum miners who perform multiple quantum measurements between consecutive blocks and use the aggressive quantum mining strategy.
These optimal quantum mining strategies correspond to Nash equilibria, which are strategies for which no quantum miner can increase their expected payoff alone.
These Nash equilibria are computed via the Lemke-Howson algorithm, which computes an exact Nash equilibrium for two-player games~\cite{LH64}.
We simulate the employment of these optimal quantum mining strategies within the Bitcoin network to determine their effect on the 51\% attack threshold, and finally determine the conditions for which executing a 51\% attack becomes feasible in the presence of quantum miners.

Our research is structured as follows.
Here we provide a table of symbols and abbreviations for the reader's convenience.
In~\S\ref{sec:background}, we provide information on the background of our system. 
In~\S\ref{sec:approach}, we present our extended model, the mathematics which describes our model, and the methods used to determine our results.
\S\ref{sec:results} details our key results, which are discussed in~\S\ref{sec:discussion}. 
Finally, we present our conclusions in~\S\ref{sec:conclusion}. 
\clearpage
\begin{table}[ht!]
\caption{Symbols and abbreviations.}
\label{tab:symbols}
\renewcommand{\arraystretch}{1.2}
\begin{ruledtabular}
\begin{tabular}{ll}
\textbf{Symbol / Abbreviation} & \textbf{Definition} \\
\hline
UTXO & Unspent transaction output \\
BTC  & Bitcoin currency \\
TXID & Transaction ID \\
PoW  & Proof-of-Work \\
ASIC & Application-Specific Integrated Circuit \\
DAG  & Directed Acyclic Graph \\
FQS & Finite-Budget Quantum Search \\
PQMS & Peaceful Quantum Mining Strategy \\
AQMS & Aggressive Quantum Mining Strategy \\
S2QR & Symmetric Two-Player Quantum Races \\
TFNP & Total Function Nondeterministic Polynomial \\
PPAD & Polynomial Parity Arguments on Directed Graphs \\
\hline
$V$ & Bitcoin protocol version \\
$\operatorname{hash}$ & Cryptographic hash function \\
$\operatorname{SHA-256}^{2}(s)$ & Double-application of SHA-256 on input $s$ \\
$\mathcal{M}$ & Mempool \\
$T$ & Set of all valid but unconfirmed transactions \\
$R$ & Merkle root \\
$H$ & Block header \\
$P$ & Previous block hash \\
$t$ & Timestamp \\
$\tau$ & Difficulty target \\
$N$ & Nonce \\
$\bar{t}_{11}$ & Median timestamp of the last 11 blocks \\
$\bar{t}$ & Network-adjusted time \\
$\mathcal{S}(X)$ & Set of all subsets whose elements follow the partial ordering of $X$ \\
$f_R$ & Merkle root computation oracle \\
$f_{t}^{\bar{t},\,\overline{t_{11}}}$ & Timestamp validation oracle \\
$\tilde{\tau}$ & Uncompressed difficulty target \\
$D$ & Network difficulty \\
$\gamma$ & Propagation parameter \\
$\gamma_\text{S}$ & Sattath's propagation parameter \\
$\lambda$ & Block arrival rate \\
$p_{\text{stale}}$ & Stale rate \\
$q$ & Hash rate of the Bitcoin network \\
$q_c$ & Fractional hash rate \\
\end{tabular}
\end{ruledtabular}
\end{table}
\begin{table}[ht!]
\caption[]{Symbols and abbreviations (continued).}
\renewcommand{\arraystretch}{1.2}
\begin{ruledtabular}
\begin{tabular}{ll}
\textbf{Symbol / Abbreviation} & \textbf{Definition} \\
\hline
$f$ & 32-bit Boolean oracle \\
$M$ & Subset of marked inputs \\
$\mathscr{H}$ & Hilbert space \\
$\operatorname{uni}(\mathscr{H})$ & Set of unitary operators on $\mathscr{H}$ \\
$\operatorname{her}(\mathscr{H})$ & Set of Hermitian operators on $\mathscr{H}$ \\
$U_f$ & 32-qubit quantum phase oracle \\
$\mathcal{H}$ & Single-qubit Hadamard gate \\
$U_0$ & 32-qubit Householder reflection \\
$G$ & 32-qubit Grover iterate \\
$x$ & Label in computational basis returned by the FQS algorithm \\
$\operatorname{mea}$ & Quantum measurement \\
$[x]$ & $\{1, 2, \dots, x\}$ \\
$[x]_0$ & $\{0, 1, 2, \dots, x\}$ \\
$p(k)$ & Success probability for the FQS algorithm after $k$ iterations \\
$k_{\text{opt}}$ & Optimal number of Grover iterations \\
$k_{\text{opt}}^{32}$ & Optimal number of Grover iterations for 32-bit search \\
\hline
$\wp$ & Finite set of players \\
$n_\wp$ & Number of players \\
$S_i$ & Set of actions for player $i$ \\
$\sigma_i$ & Mixed strategy for player $i$ \\
$\operatorname{supp}(\sigma)$ & Support of mixed strategy $\sigma$ \\
$\Sigma_i$ & Set of mixed strategies for player $i$ \\
$s^*$ & Pure strategy profile \\
$\sigma^*$ & Mixed strategy profile \\
$u_i$ & Payoff function for player~$i$\\
$u$ & Alice's payoff function \\
$v$ & Bob's payoff function \\
$U(\sigma^*)$ & Expected payoff for mixed strategy profile $\sigma^*$ \\
$A$ & Alice's payoff matrix \\
$B$ & Bob's payoff matrix \\
\hline
$D_0$ & Network difficulty threshold for at least one expected solution to the Bounded PoW problem \\
$c$ & Number of quantum-state opportunities for Alice and Bob \\
$\kappa$ & Grover iteration budget \\
$S_0$ & Alice's and Bob's pure strategy set for a single quantum measurement \\
$S_0^c$ & Alice's and Bob's pure strategy set \\
$\mathbf{k}^{(j)}$ & $j$th pure strategy of~$S_0^c$ in lexicographic ordering \\
$\alpha/\beta$ & Row and column indices of Alice's and Bob's payoff matrices \\
$X_{\text{A}}^i,\, X_{\text{B}}^i$ & Outcome of Alice's / Bob's $i$th Grover state measurement \\
$A_i / B_i$ & Alice's / Bob's payoff matrix for their $i$th quantum measurement \\
$A_i^{\text{meas}} / B_i^{\text{meas}}$ & Payoff matrix for Alice's/Bob's $i$th quantum measurement alone\\
$A_i^{\text{AQMS}} / B_i^{\text{AQMS}}$ & Payoff matrix for Alice's/Bob's $i$th quantum measurement from Sattath's AQMS \\
$\mathcal{A}_i / \mathcal{B}_i$ & Cumulative Grover iterations performed by Alice / Bob \\
$\ell_{\text{A}} / \ell_{\text{B}}$ & Index of Alice's / Bob's most recent measurement before the other's $i$th \\
$S_0'$ & Sampled subset of $S_0$ used for computation \\
$n_q$ & Number of sampled Grover iteration counts from $S_0$ \\
$(S_0^c)'$ & Reduced pure strategy set of Alice and Bob \\
$n_s$ & Number of simulated days \\
$b_i$ & Number of blocks mined on simulated day~$i$ \\
$f_i$ & Number of forks observed on simulated day~$i$ \\
$P_{95}/P_{99}$ & 95th/99th percentiles \\
$P_{> \nicefrac{1}{3}}$ & Empirical probability of $p_\text{stale} > \nicefrac{1}{3}$ on any given day \\
$P_\text{det}$ & Empirical detection probability \\
$q_i$ & p-value for simulated day~$i$ under null hypothesis that no quantum miners are present \\
$F_{\text{A}}^i / F_{\text{B}}^i$ & Cumulative probability that Alice / Bob fails their first $i$ measurements \\
\end{tabular}
\end{ruledtabular}
\end{table}
\clearpage
\section{Background}
\label{sec:background}

In this section, we provide the essential background to frame the context of our research.  We begin with an explanation of Bitcoin mining: detailing what Bitcoin is, how the Bitcoin network operates, and the mechanics of mining Bitcoin. Then we discuss quantum mining: detailing how Grover's algorithm could be utilized to mine Bitcoin, the implications of quantum Bitcoin mining over the classical case, and introducing the aggressive quantum mining strategy. Finally, we discuss quantum races: summarizing the notion of quantum races~\cite{LRS19} and explaining the relevant game-theoretic concepts.
\subsection{Bitcoin protocol}
\label{subsec:bitcoinmining}

Here we give an overview of the Bitcoin protocol.
We begin by describing how the Bitcoin network operates by detailing how ``blocks'' store transactions and how the block headers are linked to form the blockchain. 
Then we describe what Proof-of-Work is and how it is quintessential to the mining process. 
Finally, we discuss temporary forks in the blockchain and their relation to the 51\% attack. 
\subsubsection{Structure of the blockchain}

We describe the structure of the blockchain.  
We begin with a brief overview of the Bitcoin network.
Then, we explain cryptographic hash functions, which 
play an important role in the Bitcoin protocol.
We then move on to a discussion on what a bitcoin is, and how the ownership of bitcoin is recorded. 
Next, we discuss how the chain of ownership is recorded in transactions, and then explain how these transactions are aggregated into a data structure called a block, discussing all components.
Finally, we discuss how the block headers link to form a digital ledger called the blockchain.

We begin by providing a brief overview of the network that Bitcoin operates on.
Bitcoin (capital ``B'' referring to the network/protocol) is structured as a decentralized peer-to-peer network~\cite{Ant14}.
In this network, the computers that participate, called nodes, are peers to each other and interconnect to form a mesh network, meaning that each node connects to multiple others with no hierarchy.
These nodes communicate and synchronize with each other by following the current Bitcoin protocol labeled~$V$,
which is the system of rules that governs how the Bitcoin network operates.
Whereas all nodes share the responsibility of routing information and data, some nodes can take on additional responsibilities.
Nodes that take on all possible responsibilities are called ``full nodes''.
A user interacts with the Bitcoin network to send and receive bitcoin currency through a wallet application.

Now we give an overview of cryptographic hash functions.
Cryptographic hash functions are an essential component to Bitcoin, used for secure identification and ensuring the integrity of data~\cite{Ant14, LanCaw20}.
\begin{definition}[Hash function~\cite{GGF17} ]
\label{def:hashfunction}
The $m$-bit hash of a bit string of arbitrary size (with $^*$ for wildcard notation) is the uniformly distributed mapping
\begin{equation}
\label{eqn:hashheader}
\operatorname{hash}:\{0,1\}^*\to\{0,1\}^m:s\mapsto\operatorname{hash}(s)
.\end{equation}
\end{definition}
\begin{remark}[Approved hash function~\cite{GGF17} ]
\label{remark:approvedhash}
An approved hash function such as $\operatorname{SHA-256}^2$ is efficiently computable, one-way (computationally difficult to invert), and collision-resistant (computationally difficult to find two inputs that map to the same output).  
\end{remark}
We define the partial inversion of a hash function as follows.
\begin{definition}[Partial inversion of a hash function ]
\label{def:partialinversion}
Given an m-bit hash function and a threshold $\Theta \in \{0, 1\}^m$, the partial inversion of the hash is a bit string $s \in \{0, 1\}^*$ such that 
\begin{equation}\label{eqn:partialinversion}
\operatorname{hash}(s) < \Theta
.\end{equation}
\end{definition}
\noindent In the Bitcoin protocol, the accepted hash function SHA-256 is double applied
\begin{equation}\label{eqn:doubleSHA}
    \operatorname{SHA-256}^2(s) := \operatorname{SHA-256}(\operatorname{SHA-256}(s))
\end{equation}
when hashing~\cite{Ant14, NIST15}.

Now we discuss what a bitcoin is from a conceptual standpoint.
At its core, a bitcoin (lowercase ``b'' referring to the currency) is not a physical or digital entity but a unit of account (an economic term for quantifying the value of goods or services) recorded on a public, decentralized ledger~\cite{LanCaw20}.
The ownership of bitcoin is defined by the right to control a collection of Unspent Transaction Outputs (UTXOs), which can conceptually be thought of as discrete, indivisible quantities of bitcoin that are recognized by the entire network as valid~\cite{Ant14}. 
The term ``indivisible'' means that a UTXO must be spent in full, and cannot be divided or partially spent~\cite{Ant14}.
\begin{definition}[UTXO~\cite{Ant14}]
\label{def:UTXO}
A UTXO is a 2-tuple of an integer representing an amount of bitcoin denominated in satoshis ($1\, \text{satoshi}=10^{-8}\,\text{BTC}$) and a byte stream (a sequence of bytes) of variable size called the locking script, which sets the conditions for that amount of bitcoin to be spent and also includes the length of the script itself.
\end{definition}
\noindent The conditions of the locking script are written in a simple Turing-incomplete scripting language, allowing for a plethora of requirements to be set for spending a UTXO~\cite{Ant14}.
However, the most common conditions involve requiring proof of ownership through cryptographic keys~\cite{Ant14}.
\begin{definition}[Private key~\cite{LanCaw20}]
\label{def:privatekey}
A private key is a 256-bit integer chosen uniformly at random.
\end{definition}
\begin{definition}[Public key~\cite{Ant14}]
\label{def:publickey}
A public key is derived from the private key using a cryptographic one-way function.
\end{definition}
\noindent Currently, the one-way function used by the Bitcoin protocol is elliptic curve cryptography on the secp256k1 curve~\cite{Ant14}.
\begin{definition}[Address~\cite{Ant14}]
\label{def:address}
An address is a human-readable string derived from a public key, specifying the destination for receiving bitcoin.
\end{definition}
\noindent Thus, ownership of bitcoin is conferred by controlling private-public key pairs and fulfilling the conditions of the locking scripts~\cite{Ant14}.
When bitcoin is exchanged or transacted, UTXOs are spent by the sender, and new UTXOs are created, for which only the receiver can unlock with their key pair(s).

Here we explain the purpose and structure of Bitcoin transactions.
Transactions record the creation and consumption of UTXOs, and also record the proof of ownership of bitcoin~\cite{Ant14}.
Each transaction consumes existing UTXOs as input, and creates new UTXOs as outputs. 
Each input references a UTXO to be spent and includes the transaction ID
\begin{equation}
\label{eqn:txid}
    \textsc{txid} := \operatorname{SHA-256}^2(\texttt{bytes}(x))
\end{equation}
where $\texttt{bytes}(x)$ is the byte representation of a transaction~$x$, the output index (specifying which UTXO from that transaction is being referenced), an unlocking script (satisfying the conditions for spending that UTXO), and a sequence number (for relative time-locks and enabling additional features)~\cite{Ant14}.
We now define a Bitcoin transaction formally in a way that differs slightly from Antonopoulos's definition~\cite{Ant14}
but is congruent.
\begin{definition}[Bitcoin Transaction ]
\label{def:transaction}
A Bitcoin transaction is a 2-tuple comprising a list of inputs and a list of outputs,
with each input being a 4-tuple referencing a specific UTXO and comprising
\begin{itemize}
\item[] a 256-bit transaction identification (\textsc{txid}),
\item[] a 32-bit index pointing to a specific output from a previous transaction,
\item[] a variable-size (up to 10\,000 bytes) unlocking script (\textsc{scriptSig}) that provides the necessary information for the locking script, and
\item[] a 32-bit sequence number (\textsc{nSequence}) for establishing complex time locks
and to set flags to enable additional transaction features,
\end{itemize}
and each output being a new UTXO created by the transaction.
\end{definition}
\noindent 
If the total value of the UTXOs being spent exceeds the amount being sent to the recipient(s), then a new UTXO is created in the output sending the change back to the sender~\cite{Ant14}.
Although the change is sent back to the sender, the value of the input UTXOs is typically not equal to the value of the output UTXOs; the difference represents the transaction fee that the sender agrees to pay as a rule of the protocol~\cite{Ant14}. 
Transactions are grouped and stored in a decentralized and distributed digital ledger, which is a digitally-maintained record of transactions that is shared, replicated, and synchronized across multiple network nodes~\cite{Ant14, LanCaw20}.
All full nodes in the network maintain a set of all unspent outputs from all transactions to construct transaction inputs quickly and to efficiently validate transactions to ensure immutability~\cite{Ant14}.
Similarly, all nodes except those running a lightweight version of the Bitcoin protocol maintain a partially-ordered set (poset) of unconfirmed transactions known as mempool, which we now define~\cite{Ant14}.
\begin{definition}[Mempool]
The mempool 
\begin{equation}
    \mathcal{M} := (T, \prec)
\end{equation}
is a poset where~$T$ is the set of all valid but unconfirmed Bitcoin transactions and $\prec$ is a binary relation on~$T$ such that for $x, y \in T$, $x \prec y$ iff (if and only if) the bitstring corresponding to the \textsc{txid} of~$x$ is in the inputs of~$y$. 
\end{definition}

Now we explain how blocks are structured and the nature of their components.
Bitcoin transactions are permanently recorded in a digital ledger by first grouping them into container data structures called blocks~\cite{Ant14, Nak08}, which are illustrated in Fig.~\ref{fig:blockstructure}.
In this figure, the Merkle tree is a binary tree where individual transactions are hashed at the leaves, arranged in an arbitrary order.
Parent nodes are formed by recursively hashing the concatenation of each pair of children until a single hash remains, called the Merkle root~$R$~\cite{Ant14}. 
\begin{figure}
    \centering
    \includegraphics[width=\textwidth]{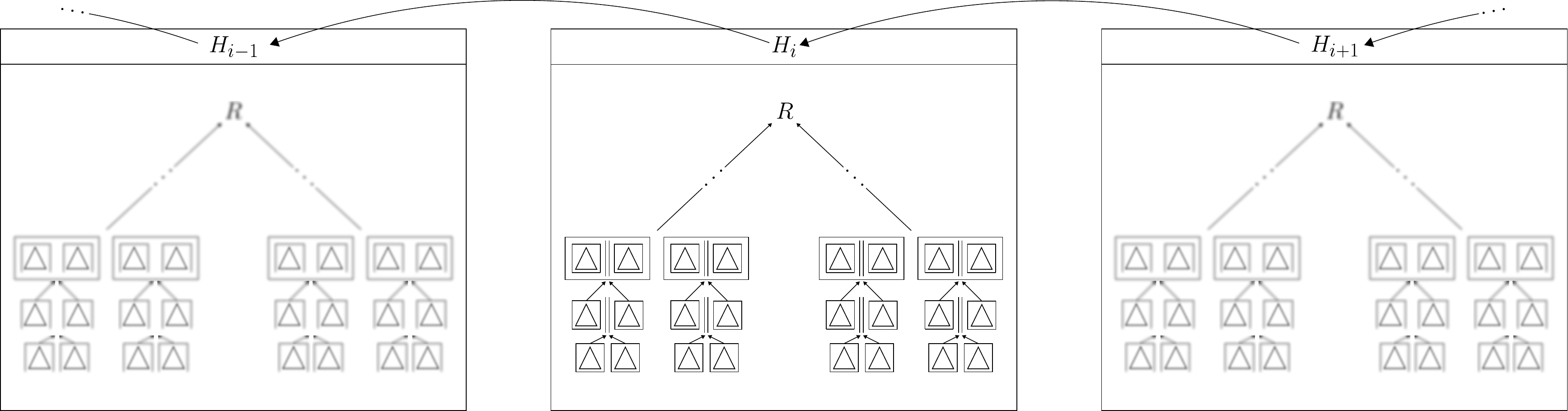}
    \caption{The structure of a block. 
    Blocks are depicted as squares that are partitioned into two rectangles. 
    The upper rectangle is labeled~$H_i$, which is the block header of the $i^\text{th}$ block, and the lower rectangle is the Merkle tree with root~$R$.
    Within the Merkle tree, a triangle ($\triangle$)  represents the hash of an individual transaction.
    A square ($\square$) represents the application of a hash function whose inputs are the contents inside the square, and $||$ denotes concatenation.
    The Merkle tree construction is shown for an even number of transactions.
    For an odd number of transactions, the last transaction is duplicated to make the number of transactions even.
    The arrows between blocks represent the hash of the previous block's header.}
    \label{fig:blockstructure}
\end{figure}
\begin{definition}[Block~\cite{Ant14} ]
\label{def:blockheader}
A block is a container data structure comprising a block header
\begin{equation}
\label{eqn:Hdef}
H:=(V,P,R,t,\tau,N)\in\{0,1\}^{32}\times \{0, 1\}^{256} \times \{0, 1\}^{256} \times \{0, 1\}^{32} \times \{0, 1\}^{32} \times \{0, 1\}^{32}
\end{equation}
which is an 80-byte sextuple containing metadata and a list of transactions aggregated into a Merkle tree for efficient verification of membership and summarization of the included transactions. 
The elements of the block header are 
\begin{itemize}
    \item[] a Bitcoin protocol version~$V\in\{0,1\}^{32}$, 
    \item[] the double $\operatorname{SHA-256}$ hash~$P \in \{0, 1\}^{256}$ of the preceding block's header~$H$, 
    \item[] a Merkle root~$R\in\{0,1\}^{256}$,
    \item[] a timestamp $t\in\{0,1\}^{32}$ that records the approximate Unix time when the block was created,
    \item[] a difficulty target~$\tau \in \{0, 1\}^{32}$ encoding a threshold that~$H$ must satisfy to be considered valid (see Def.~\ref{def:powcondition}), and
    \item[] an arbitrary nonce~$N\in\{0,1\}^{32}$ set by the block's creator. 
\end{itemize}
\end{definition}
\begin{remark}[Compression of the difficulty target~$\tau$~\cite{Ant14} ]
To reduce the size of the block header, the difficulty target~$\tau$ is stored as a 4-byte compressed encoding of a 256-bit string~$\tilde{\tau}$.
Let
\begin{equation}
\label{eq:bytesbig-endianorder}
b_{0,1,2,3}\in\{0, 1\}^8
\end{equation}
be the bytes of $\tau$ in big-endian order. 
The full 256-bit difficulty target is computed via
\begin{equation}
    \label{eqn:targetcompression}
    \tilde{\tau} := 2^{8(b_0 - 3)} \left(2^{16}b_1 + 2^8 b_2 + b_3\right).
\end{equation}
The value of~$\tilde{\tau}$ is updated every 2016 blocks according to the recurrence relation
\begin{equation}
    \label{eqn:difficultyupdate}
    \tilde{\tau}_{i+1} \gets \frac{t_{2016}}{20160\,\mathrm{min}}\tilde{\tau}_i,
\end{equation}
where~$t_{2016}$ is the time to mine the last 2016 blocks.
\end{remark}
\noindent The Merkle root~$R$ is stored in the block header to summarize the transactions included and enable efficient verification of membership, requiring only $\mathcal{O}(\log_2 n_T)$ computations for~$n_T$ transactions~\cite{Ant14}.
The previous block hash~$P$ in each block's header uniquely and unambiguously identifies the previous block, and serves as a back link in a linked-list structure, linking all blocks chronologically.

Now we explain what the blockchain is. 
The blockchain is structured as an ordered back-linked list of blocks that extends all the way to the first block, which is known as the genesis block~\cite{Ant14}.
The genesis block does not have a predecessor, and contains a trivial hash in its header.
This structure is illustrated in Fig.~\ref{fig:blockchain}.
\begin{figure}
    \centering
    \includegraphics[width=\textwidth]{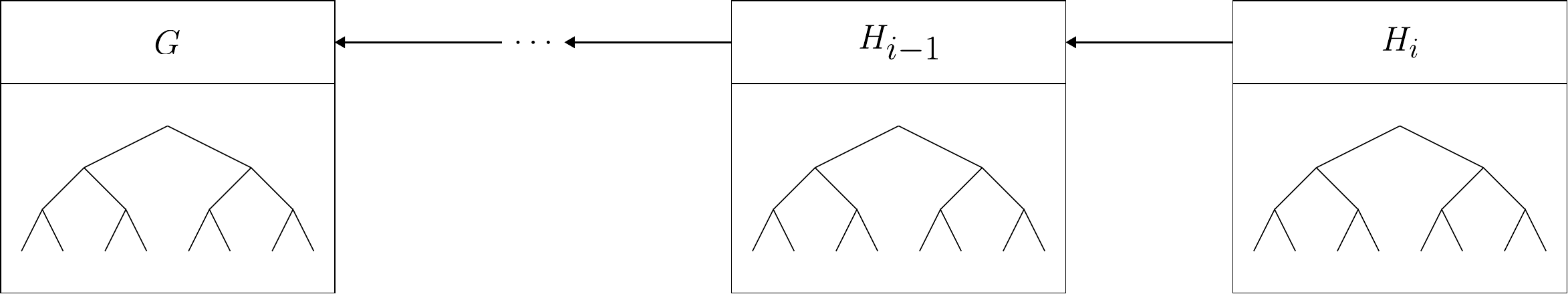}
    \caption{The blockchain depicted as a back-linked list of blocks.
    Each block is shown as a square divided into two rectangles. 
    The upper rectangle represents the block header~$H$ of the $i^\text{th}$ block, and the lower rectangle contains the Merkle tree of transactions, depicted as a tree structure.
    Back-links are depicted by arrows between block headers, representing the previous block hash stored in each block's header.
    The ellipses denote further preceding and succeeding blocks in the blockchain, and the block with header labeled~$G$ represents the genesis block, which has no predecessor.} 
    \label{fig:blockchain}
\end{figure}
\noindent Any alteration to a previous block would change its hash, invalidating the block hash of all subsequent blocks.
Correcting this invalidation requires significant computational effort, making the blockchain resistant to tampering~\cite{Ant14}.
Every node in the network maintains a local copy of the blockchain and individually verifies it against the copies held by other nodes to ensure consistency and maintain integrity~\cite{Ant14}.
In cases where nodes have blockchains that conflict with each other, each node independently determines which blockchain to consider authoritative by choosing the blockchain with the largest block height (longest length)~\cite{Ant14}. 
\begin{definition}[Block height~\cite{LanCaw20}]
\label{def:blockheight}
The block height is the number of blocks between a given block and the genesis block, which has a block height of 0. 
\end{definition}
\noindent As a result, this structure forms a decentralized and immutable digital ledger that records all Bitcoin transactions. 
\subsubsection{Proof of work}

Here we describe the mechanics of Proof-of-Work.
We begin by discussing consensus in the context of blockchains and the importance of achieving consensus among the decentralized network. 
Then, we explain the specifics of Proof-of-Work and how it functions to achieve consensus.
Afterwards, we describe the process of mining Bitcoin and how it relates to Proof-of-Work. 
Finally, we describe the role of the coinbase transaction and discuss how its structure is exploited when mining.

We explain the concept of consensus in the Bitcoin network and how it emerges from the protocol. 
In general, consensus refers to a general agreement among a group. 
We start by giving an accepted definition that describes how consensus is achieved in the Bitcoin network~\cite{Ant14}, which we refer to as operational consensus because it focuses on the mechanism rather than the concept of a general agreement itself.
\begin{definition}[Operational consensus ]
\label{def:consensusoperational}
Operational consensus is achieved when a majority of nodes in the network have the same blocks in the blockchain that they each validate as requiring the most total hashes.
\end{definition}
\noindent While this definition captures the mechanism by which consensus emerges, we also offer a definition that aligns more directly with the conceptual meaning of a general agreement.
\begin{definition}[Consensus agreement ]
\label{def:consensusconceptual}
A consensus agreement is a general agreement that all nodes in the network accept operational consensus.
\end{definition}
\noindent In the Bitcoin network, consensus emerges through a mechanism called Proof-of-Work (PoW)~\cite{Ant14}, which enables nodes to achieve a consensus agreement without a fixed election period or central coordination.
PoW enforces the rules of the Bitcoin protocol that determine which blockchain is most authoritative and which blocks can be appended to it.
At the same time, PoW also provides the necessary incentives
for nodes to participate in maintaining and extending the blockchain.

Now we discuss the technical details of PoW. 
PoW is a consensus mechanism that performs two functions~\cite{Ant14}.
One function is to establish the eligibility of blocks to be appended to the blockchain.
The second function is to enforce the verification of blocks and transactions by nodes.
To determine which blocks can be appended, participating nodes compete to solve a cryptographic problem based on the partial inversion of the $\operatorname{SHA-256}$ hash function.
As part of this process, the timestamp included in each block's header must satisfy a validity condition~\cite{Ant14}, which we now define.
\begin{definition}[Valid timestamp ]
\label{def:validtimestamp}
A valid timestamp~$\mathfrak{t}$ satisfies 
\begin{equation}
\label{eqn:validtimestamp}
\bar{t}_{11} < \mathfrak{t} < \bar{t} + 2\,\textnormal{hours}
\end{equation}
where $\bar{t}_{11}$ is the median timestamp of the last 11 blocks, and~$\bar{t}$ is the network-adjusted time, which is the median of a node's local system time and the times reported by all connected peers.
\end{definition}
\begin{definition}[PoW condition ]
\label{def:powcondition}
The PoW condition is
\begin{equation} \label{eqn:miningcondition}
    \operatorname{SHA-256}^2 \left(V \mathbin\Vert P \mathbin\Vert R \mathbin\Vert \mathfrak{t} \mathbin\Vert \tau \mathbin\Vert N\right) < \tilde{\tau},    
\end{equation}
where $\mathbin\Vert$ denotes concatenation, $R$ is a valid Merkle root computed from the ordered transactions included in the block, $\mathfrak{t}$ is a valid timestamp, and $\tau \in \{0, 1\}^{32}$ is the difficulty target.
\end{definition}
\noindent A block may be appended to the blockchain iff it satisfies the PoW condition.
The problem of finding a block header that satisfies the PoW condition is now formalized as a computational problem.
Given any set~$X$, the power set $2^X$ is the set containing all subsets of~$X$.
If~$X$ is a poset, we define $\mathcal{S}(X)$ as the set of all subsets whose elements are partially ordered according to the partial ordering of~$X$.
\begin{problem}[PoW problem ]
\label{prob:proofofwork}
Given a version~$V$, a previous block hash~$P$, a difficulty target~$\tau$, a mempool of valid transactions~$\mathcal{M}$, a Merkle root computation oracle
\begin{equation}
\label{eq:fR}
f_R: \mathcal S(\mathcal M) \to \{0, 1\}^{256}
\end{equation}
the network-adjusted time~$\bar{t}$, the median timestamp of the last 11 blocks~$\bar{t}_{11}$,  and a timestamp validation oracle
\begin{equation}
\label{eq:ft}
f_t^{\bar{t}, \bar{t}_{11}} : \{0, 1\}^{32} \to \{0 ,1\},
\end{equation}
find a timestamp~$t$, a sequence of transactions $T_x \in \mathcal{S}(\mathcal{M})$, and a nonce~$N$ such that the resultant block header~$H$ satisfies the PoW condition.
\end{problem}
\begin{definition}[Network difficulty~\cite{LanCaw20} ]
The network difficulty function
\begin{equation}
    \label{eqn:networkdifficulty}
    D(\tilde{\tau}) := \nicefrac{2^{256}}{\tilde{\tau}}
\end{equation}
represents the expected number of hashes required to find a block header satisfying the PoW condition, where $\tilde{\tau}$ is the full 256-bit difficulty target.
\end{definition}
\begin{remark}[Rationale for~$D$]
As the $2^{256}$ outputs of $\operatorname{SHA-256}$ are uniformly distributed (Definition~\ref{def:hashfunction}), the probability that an output of $\operatorname{SHA-256}$ is less than a given~$\tilde{\tau}$ is $\nicefrac{\tilde{\tau}}{2^{256}}$. 
Thus, the expected number of hashes computed before finding a header that satisfies the PoW condition is the network difficulty function (\ref{eqn:networkdifficulty}).
Although we define the network difficulty~$D$ as a function of $\tilde{\tau}$, we refer to the evaluated value~$D(\tilde{\tau})$ simply as~$D$.
\end{remark}
\noindent
In practice, solving Problem~\ref{prob:proofofwork} by iterating over all $2^{320}$ possible values of the bitstrings $(f_R(T_x), t, N)$ is infeasible due to the overhead required for recomputing the Merkle root and validating the timestamp. 
Instead of solving Problem~\ref{prob:proofofwork} all at once, nodes fix all bitstrings except the nonce and execute a brute-force search over the $2^{32}$ possible nonce values.
This process is repeated over multiple instances of this bounded problem until a solution to Problem~\ref{fig:powproblem} is found.
\begin{problem}[Bounded PoW problem ]
\label{prob:boundedproofofwork}
Given a version~$V$, a previous block hash~$P$, a Merkle root~$R$, a valid timestamp~$\mathfrak{t}$, and a difficulty target~$\tau$, find a nonce~$N$ such that the resultant~$H$ satisfies the PoW condition.
\end{problem}
\begin{figure}
    \centering
    \includegraphics[width=\textwidth]{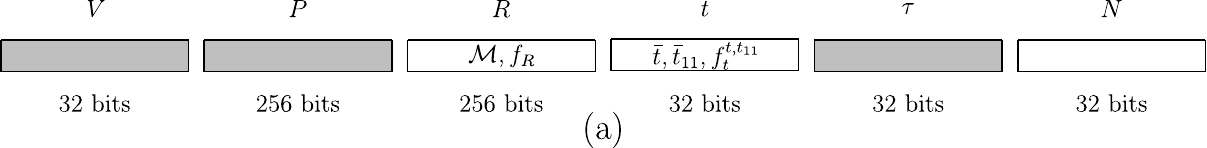}
    \includegraphics[width=\textwidth]{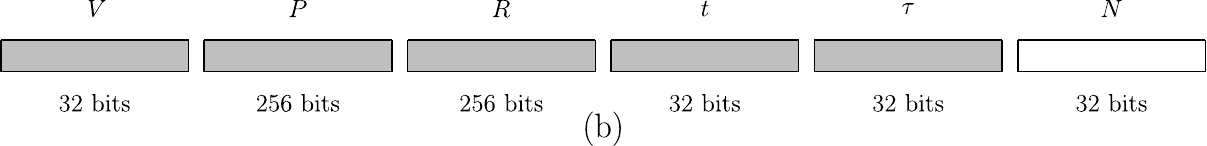}
    \caption{ A visualization of finding a proof-of-work for (a) the PoW problem and (b) the bounded PoW problem.
    The concatenated block header is represented by six rectangles.
    A rectangle represents a specific bitstring in the concatenation of the block header.
    Each rectangle is labeled above by the symbol of the element in the block header.
    The length of the bitstring for each label is indicated below the rectangle.
    A filled rectangle represents a bitstring whose value remains constant.
    An empty rectangle denotes a bitstring whose value is computed by the node finding a proof-of-work.
    In (a), for the Merkle Root~$R$, the mempool~$\mathcal{M}$ and the oracle~$f_R$ are inputs, as are $\bar{t}$, $\bar{t}_{11}$, and $f_t^{\bar{t}, \bar{t}_{11}}$ for the timestamp~$t$.
    In (b), only the nonce~$N$ is computed. 
    The nonce is represented as an empty rectangle in both figures, as it does not require any inputs.}
    \label{fig:powproblem}
\end{figure}
\noindent The probability that a valid nonce exists within a search space of size $2^{32}$ is approximately
\begin{equation}
    \label{eqn:validnonceprob}
    \nicefrac{2^{32}}{D}=\nicefrac{\tilde{\tau}}{2^{224}}.
\end{equation}
The solution to Problems~\ref{prob:proofofwork} and~\ref{prob:boundedproofofwork} is called the proof-of-work (lowercase).
Whereas Antonopoulos uses ``Proof-of-Work" to refer to both the mechanism and any solution to Problem~\ref{prob:proofofwork}~\cite{Ant14}, we adopt the lower case to distinguish a solution from the consensus mechanism.
A proof-of-work demonstrates that a node has expended sufficient computational effort based on the PoW condition.
Satisfying the PoW condition grants that node the right to append a block to the blockchain~\cite{Ant14}.
In return, the node receives a reward for its efforts, which encourages other nodes to participate in the competition.

We describe the process of mining Bitcoin and outline the responsibilities of nodes that participate as miners.
Bitcoin mining refers to the process of extending the blockchain by constructing candidate blocks and searching for a proof-of-work~\cite{Ant14}.
Bitcoin miners, the nodes that participate in this process, are responsible for~\cite{Ant14}
\begin{itemize}
    \item[] validating new blocks and transactions when they are received by broadcast,
    \item[] placing validated transactions into their mempool,
    \item[] constructing the coinbase transaction, which pays the reward to the miner,
    \item[] selecting transactions from their mempool and constructing a Merkle tree with root $R$, which includes the coinbase transaction,
    \item[] constructing a candidate block with values for $V$, $P$, and $\tau$ using both public sources of information and local data,
    \item[] solving Problem~\ref{prob:proofofwork} to find a proof-of-work and form a valid block header, and
    \item[] broadcasting the newly-mined block to the network. 
\end{itemize}
\noindent As a reward for extending the blockchain, a miner receives the transaction fees of all included transactions, and a fixed amount of newly minted bitcoins, with this fixed amount being halved approximately every four years~\cite{Ant14, Nak08}.
As a result, miners choose transactions from their mempool commencing with the highest transaction fee available when forming a candidate block~\cite{LanCaw20}.
Additionally, miners are responsible for independently updating the value of~$\tau$ every 2016 blocks (\ref{eqn:difficultyupdate}).
As all miners compute the value of~$\tau$ based on the publicly available blockchain, all miners will compute the same result, and any block with an incorrect value of~$\tau$ in its header is considered invalid~\cite{Ant14}.
\begin{remark}[Average blocks per day~\cite{Ant14}]
\label{remark:averageblocks}
On average, the Bitcoin blockchain produces 144 blocks per day. 
\end{remark}

We describe the role and structure of the coinbase transaction.
The mining reward is paid to the miner by what is known as the coinbase transaction~\cite{Ant14}.
The coinbase transaction is the first transaction in a block and is constructed by a miner as part of the mining process. 
Unlike other transactions, the coinbase transaction does not consume any UTXO as input.
Instead, the coinbase transaction includes trivial data in place of the transaction input, and the transaction output is constructed by the miner so that the funds go to that miner's wallet. 
Importantly, the unlocking script is replaced by coinbase data, which is a field containing arbitrary data that must contain between 2 and 100 bytes.
For blocks with version 2 or greater, the first few bytes of this field are reserved to contain the block height~\cite{Ant14}.

We describe how miners exploit the structure of the coinbase transaction to extend their search space when mining.
The 32-bit nonce within the block header allows miners to quickly generate up to $2^{32}$ distinct block headers for each candidate block~\cite{Ant14}. 
However, because the network computes hashes at an extremely high rate (primarily due to specialized mining devices called Application-Specific Integrated Circuits, or ASICs), the mining difficulty has been adjusted such that this limited nonce space is often exhausted before a proof-of-work is found.
Rather than waiting for another miner to succeed and then reset the competition, miners require a way to seed new search spaces once the standard range of nonces is exhausted. 
Although adjusting the timestamp or reordering the transactions in the Merkle tree can alter the block header, these methods have limitations: only a small range of timestamps are considered valid, and reordering transactions requires a full recomputation of the Merkle tree.
The practical solution has been to exploit the structure of the coinbase transaction. 
Miners utilize the 2 to 100 bytes of arbitrary data permitted in the coinbase field to seed new search spaces in a technique called \texttt{extraNonce}.
This technique requires recomputing the left-most branch of the Merkle tree for each iteration of the \texttt{extraNonce}. 
Although \texttt{extraNonce} is not part of the Bitcoin protocol, it is widely adopted by miners in practice.
\subsubsection{Temporary forks and the 51\% attack}

Now we discuss what temporary forks are and how they relate to the 51\% attack. 
We begin with an explanation of what a temporary fork is, and how they occur both naturally and intentionally.
Then, we introduce the so-called ``propagation parameter", and discuss how it can be used for modelling forks in the blockchain.
Finally, we discuss the 51\% attack and how it relates to the rate at which forks appear in the blockchain.

We explain what temporary forks are and how they arise in the blockchain.
In the blockchain, temporary forks occasionally occur when multiple blocks are mined nearly simultaneously~\cite{Ant14}.
These blocks are broadcast through the network, and miners will mine atop of the block they receive first, causing the network to temporarily lose consensus.
In this situation, the temporary forks form a directed acyclic graph (DAG), which is illustrated in Fig.~\ref{fig:temporaryfork}.
\begin{figure}
    \centering
    \includegraphics[width=0.9\textwidth]{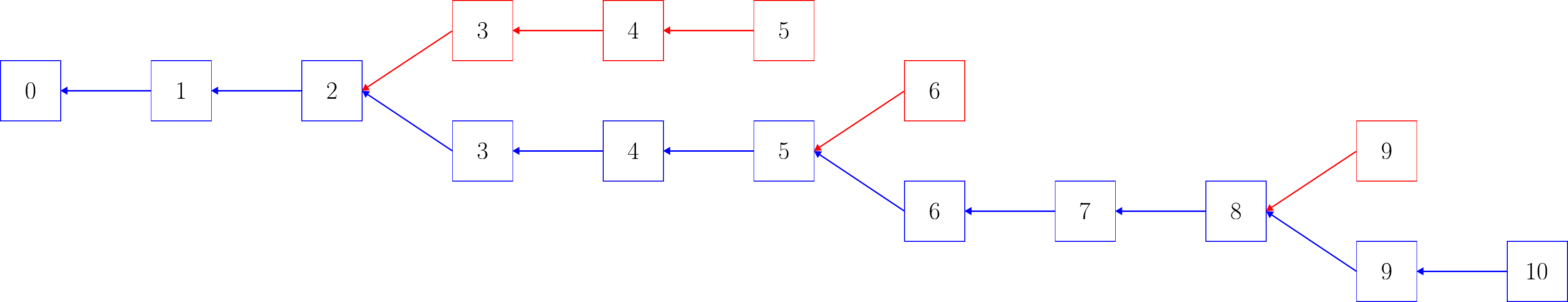}
    \caption{ The blockchain represented as a DAG during a temporary fork.
    A vertex ($\square$) represents a block, each labeled with its block height.
    Directed edges represent the block hash of preceding blocks.
    A blue square (\textcolor{blue}{$\square$}) indicates a block that has become part of the longest blockchain.
    A red square (\textcolor{red}{$\square$}) indicates a block that is not part of the longest blockchain.
    Each vertex (except the genesis block with height 0)
    has one outgoing edge to a preceding block, but can have multiple incoming edges from succeeding blocks during a temporary fork.
    This figure represents a temporary fork because not all nodes have the same symbol.}
    \label{fig:temporaryfork}
\end{figure}
\noindent Eventually, one branch of the temporary fork will achieve more cumulative proof-of-work than the others, becoming the authoritative branch of the blockchain~\cite{Ant14}.  
Blocks in the losing branches become ``stale''.
\begin{definition}[Stale block~\cite{Ant14} ]
A stale block is a block that was successfully mined but has not been included in the longest branch of the blockchain.
\end{definition}
\noindent Miners who have mined blocks that become stale do not receive any mining rewards~\cite{LanCaw20}.
Temporary forks can occur naturally as a result of network effects~\cite{Ant14} as depicted in Fig.~\ref{fig:temporaryfork_network}.
However, temporary forks can also be created deliberately by a consortium of miners (a singlet or a collection of miners exercising control over a cumulative mining power) who could adopt a strategy to increase their expected revenue.
One such strategy is known as the selfish mining strategy, where a consortium temporarily withholds the broadcast of 
newly-mined blocks to gain an advantage~\cite{EyaSir18}.
\begin{figure}
    \centering
    \includegraphics[width=0.75\textwidth]{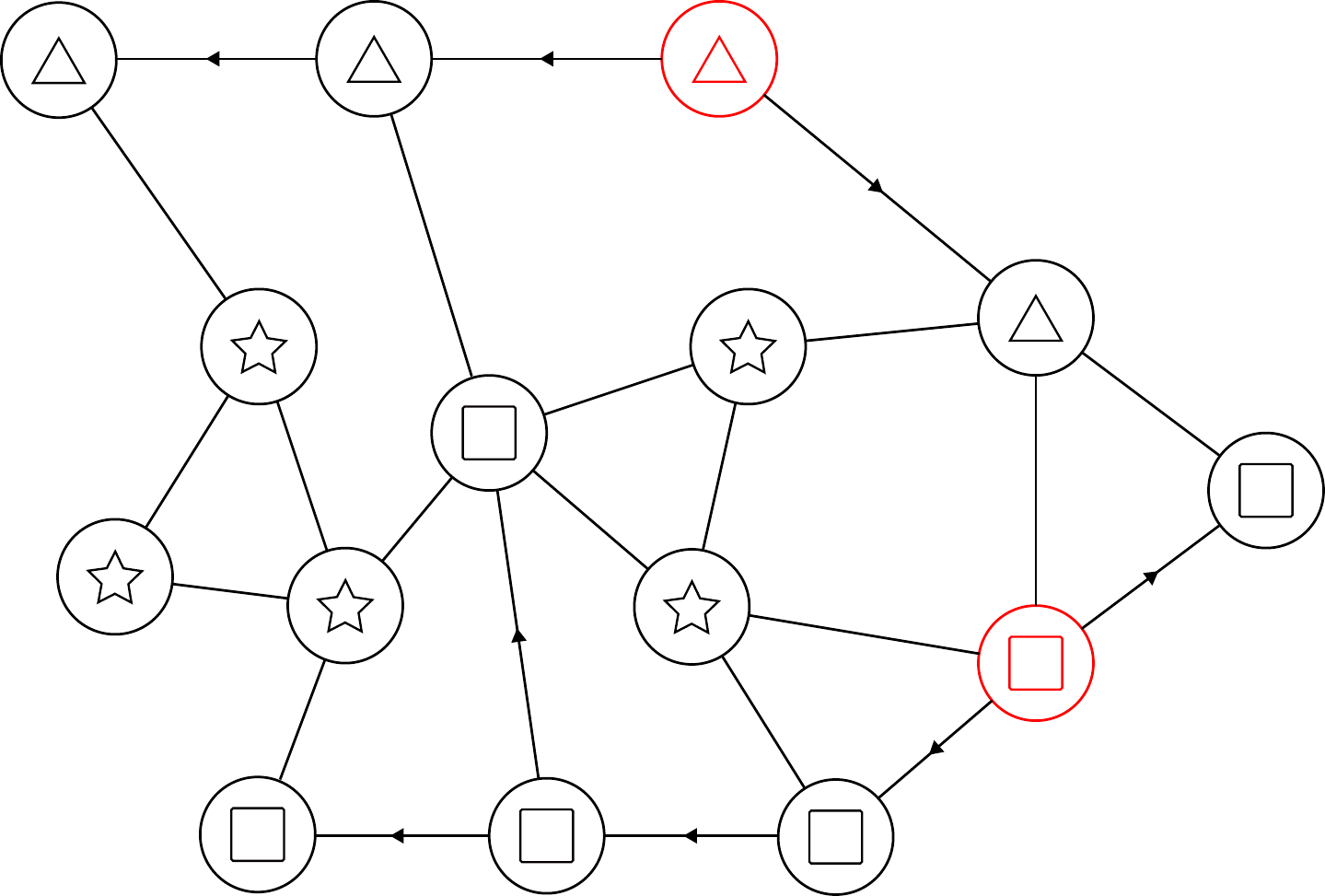}
    \caption{ A temporary fork occurring due to near-simultaneous block discovery. 
    A vertex ($\bigcirc$) represents a node in the Bitcoin network.
    A red vertex (\textcolor{red}{$\bigcirc$}) represents a node that has just mined a new block.
    Each vertex contains either a star ($\bigstar$), representing the state of the network before the fork, or a triangle ($\triangle$) or a square ($\square$) representing one of the new branches created by the temporary fork.
     An arrow within an edge indicates the broadcast of a new block to a peer.
    An edge without an arrow represents a peer-to-peer connection where no new block is being broadcast.}    
    \label{fig:temporaryfork_network}
\end{figure}

Now we discuss the propagation parameter.
To capture the complexities that determine the outcome of temporary forks, we introduce the propagation parameter~$\gamma$.
\begin{definition}[Propagation parameter~\cite{EyaSir18, SSZ17}]
\label{def:propparameter}
The propagation parameter~$\gamma$ is the fraction of honest miners who will mine on top of a given miner's block in the event of a temporary fork.
\end{definition}
This parameter abstracts the plethora of factors that influences the resolution of a temporary fork such as the propagation time, network architecture, and unique strategies employed by miners that influence block announcements~\cite{Sat20, EyaSir18, SSZ17}.
However, our use of~$\gamma$ differs from the prior definition~(Def.~\ref{def:propparameter}), as we adopt Sattath's definition~\cite{Sat20}.
\begin{definition}[Sattath's propagation parameter~\cite{Sat20}]
\label{def:sattathspropparameter}
Sattath's propagation parameter~$\gamma_\text{S}$ is the probability that a designated miner will successfully extend their branch of the blockchain first.
\end{definition}

We describe the 51\% attack~\cite{Ant14, Sat20}
formally and explain the 51\% attack threshold and the relative hash rate, which quantifies the authority of a consortium of miners.
\begin{definition}[Hash rate~\cite{LanCaw20} ]
\label{def:hashrate}
The hash rate~$q$ is the number of hashes that can be computed per second by a consortium comprising all miners in the network,
and the fractional hash rate~$q_\text{c}$ is the proportion of the hash rate that can be performed by a given consortium.
\end{definition}
\begin{remark}[Block arrival times~\cite{BKKT18} ]
\label{remark:arrivaltimes}
The block arrival times in Bitcoin are modelled as a Poisson process, meaning that they follow exponential distribution with a rate parameter of $\lambda=\nicefrac1{600}$ seconds.
However, this model is only valid if the hash rate and network difficulty remain constant over the period being modelled. 
In practice, the hash rate and network difficulty evolve as functions of time, as miners are constantly joining and leaving the Bitcoin network and the technology for mining Bitcoin improves.
Thus, the block arrival times are more accurately modelled as a nonhomogeneous Poisson process with a rate parameter~$\lambda$ that is proportional to the ratio of hash rate to the network difficulty.
\end{remark}
\begin{definition}[The 51\% attack~\cite{Sat20} ]
The 51\% attack means that some consortium controls over 50\% of the hash rate of the network,
enabling the consortium to double-spend coins,
reverse transactions,
or create their own branch of the blockchain with a high probability of success. 
\end{definition}
\begin{remark}[Duration of a 51\% attack~\cite{Ant14}]
\label{remark:51duration}
Executing a 51\% attack is not instantaneous, as the consortium must mine many consecutive blocks faster than the rest of the Bitcoin network.
In other words, the conditions enabling a 51\% attack must hold for an extended period, rather than at a single moment.
\end{remark}
\noindent
Although the terminology is 
51\%\ attack,
any percentage above 50\%\ suffices for this attack~\cite{Ant14}.
In fact, such an attack can be attempted with a smaller percentage of the hash rate, but the probability of success is reduced as other honest miners control the generation of some blocks with their relative hash rate.
The 51\% threshold is simply the level at which the probability of success approaches certainty. 
Consortiums with a larger fractional hash rate can deliberately create longer temporary forks, meaning that they can control and invalidate more blocks than smaller consortiums. 
The only known instance of a consortium reaching this threshold occurred in 2014, when the GHash.io consortium temporarily breached a fractional hash rate of 50\%, causing widespread panic among the Bitcoin community~\cite{guardian2014}.
In response, GHash.io established a reduced fractional hash rate of 
\begin{equation}
    \label{eqn:ghashfrachashrate}
    q_c=40\%,
\end{equation}
and committed to maintaining their fractional hash rate below this level.

We explain what the stale rate of the network is and discuss how it relates to the 51\% attack~\cite{Sat20}.
The 51\% attack threshold is not static and depends on a key property of the network known as the stale rate. 
\begin{definition}[Stale rate~\cite{Sat20}]
\label{def:stalerate}
The stale rate~$p_\text{stale}$ is the ratio of the number of blocks outside the longest blockchain to the total number of blocks in all branches of the blockchain.
\end{definition}
\begin{remark}[Binomial distribution of stale blocks~\cite{Ant14}]
    \label{remark:binomial}
    Since block arrivals follow a Poisson process (Remark~\ref{remark:arrivaltimes}), block events are both memoryless and independent of each other.
    Therefore, each block independently results in a stale block with probability~$p_\text{stale}$. 
    Thus, the number of stale blocks out of any fixed number of blocks is approximately binomially distributed.
\end{remark}
\noindent Importantly, the value of the stale rate of the network directly impacts the effective threshold for a 51\% attack.
A consortium of miners with a fractional hash rate~$q_\text{c}$ could execute a 51\% attack with certain probability if~\cite{Sat20}
\begin{equation}
\label{eqn:stalerate51attack}
q_\text{c} > \frac{1 - p_\text{stale}}{2 - p_\text{stale}}
\end{equation}
with the right hand side being effective threshold to execute a 51\% attack with certain probability.
Classically, the 51\% attack is not an important security concern as it is infeasible. 
A study in 2018 estimated a network stale rate of $p_\text{stale} \approx 0.24\%$~\cite{SSJ+18}, corresponding to approximately zero forks per day on average and meaning that a consortium will execute a 51\% attack if $q_c \gtrapprox 0.499$.  
However, the risk to Bitcoin's security would become far more potent if factors arise that significantly increase the stale rate of the network~\cite{Sat20}.
\subsection{Quantum mining}
\label{subsec:quantummining}

Now we describe how quantum mining works. 
First, we discuss the nature of Grover's algorithm and its generalization to multiple marked elements and how it could be applied to mining Bitcoin.
Then we explore the contrasts between classical and quantum bitcoin mining. 
Finally, we discuss aggressive versus peaceful quantum mining strategies.
\subsubsection{Grover's algorithm}

Now we discuss how Grover's algorithm and its generalization could be employed to mine Bitcoin.
We begin by defining what an algorithm is.
Then, we explain the problem that Grover's algorithm solves, namely, the probabilistic unstructured search problem, and then describe the algorithm. 
Next, we show how the problem of finding a proof-of-work reduces to an instance of this problem, allowing the use of Grover's algorithm for mining Bitcoin. 
Finally, we discuss the feasibility of employing Grover's algorithm to mine Bitcoin.

We first define what an algorithm is both in the classical and quantum settings. 
An algorithm is the means by which a computer solves a computational problem and comprises three parts: an input, an output, and a procedure that transforms the input into the output~\cite{San25}.
A procedure is a finite set of formally defined instructions interpretable by a computer.
A quantum algorithm is an algorithm whose procedure uses quantum logic.

We now turn to a specific computational problem known as probabilistic unstructured search.
This problem formalizes the task of finding a ``marked" item in a large unsorted dataset for which no underlying structure can be exploited. 
Since we ultimately wish to relate this problem to mining Bitcoin, we assume that the dataset has size~$2^{32}$~\cite{Grov96, Ant14}.
The condition that identifies a marked item is defined using a function that returns 1 for a single input and 0 otherwise.
We formally define the 32-bit Boolean oracle inspired by seminal work on the exact quantum lower bound for Grover's problem~\cite{DohHoy09}.
\begin{definition}[32-bit Boolean oracle]
\label{def:booleanoracle}
A 32-bit Boolean oracle is 
\begin{equation}
\label{eqn:booleanoracle}
f : \{0, 1\}^{32} \to \{0, 1\} : x \mapsto 
\begin{cases}
    1 & \text{if } x \in M \\
    0 & \text{else}
\end{cases}
\end{equation}
for some subset 
\begin{equation}
    \label{eqn:solutionsubset}
    \{0, 1\}^{32} \supseteq M :=  \{x_0 : f(x_0)=1\}
\end{equation}
 being the subset of marked inputs,
 i.e., all~$x_0$
that map to~$1$ under the action of $f$~(\ref{eqn:booleanoracle}).
\end{definition}
\begin{remark}
\label{remark:multimarked}
As we undertake a search for multiple marked elements~(\ref{eqn:solutionsubset}),
Grover's algorithm, which technically is designed for finding the only marked element, is not the appropriate name for the generalization of Grover's algorithm to multiple marked elements.
Building on Grover's and Shor's algorithm,
Brassard, H{\o}yer, Mosca and Tapp presented an algorithm~\cite{BHMT02} for
a quantum search for multiple marked elements.
In both cases, the budget,
i.e., the number of iterations in the search loop, is unbounded,
whereas we explicitly treat the finite-budget case as the quantum miners have limited time to execute the search.
We refer to our variant of quantum search as a finite-budget quantum search (FQS).
\end{remark}
\begin{problem}
\label{prob:unstructuredsearchproblem}
[Probabilistic unstructured search problem~\cite{BHMT02}]
Given probability~$p$ and a 32-bit Boolean oracle~$f$~(\ref{eqn:booleanoracle}),
and a promise of an unstructured search
(each input is equally likely to yield~$1$ at the output), determine, 
with probability~$p$, 
an input $x_0 \in M$ that is mapped to~$1$ by~$f$.
\end{problem}
\noindent Problem~\ref{prob:unstructuredsearchproblem} can be solved by the brute-force search algorithm, which is guaranteed to succeed within $2^{32}$ calls to the Boolean oracle~\cite{BHMT02}.

We now explain our FQS algorithm,
which solves Problem~\ref{prob:unstructuredsearchproblem} with no more than
\begin{equation}
\label{eq:nomorethan}
\left\lfloor \frac{\pi}{4} \sqrt{2^{32} / |M|}\right\rfloor
\end{equation}
queries to a quantum-phase oracle~\cite{KLM06}.
We define the 32-qubit quantum-phase oracle.
We consider a Hilbert space 
\begin{equation}
\label{eq:H32}
\mathscr{H} := \mathscr{H}_{2^{32}},
\end{equation}
where the subscript denotes the dimension of the space, equipped with an orthonormal computational basis $\{\ket{x} :  x \in \{0, 1\}^{32}\}$.
We denote uni$\left(\mathscr{H}\right)$
and her$\left(\mathscr{H}\right)$
as the set of unitary and hermitian operators on~$\mathscr H$,
respectively.
\begin{definition}[32-qubit quantum-phase oracle~\cite{KLM06}]
\label{def:quantumoracle}
Given a 32-bit Boolean oracle~$f$, the corresponding 32-qubit quantum-phase oracle is
\begin{equation}
\label{eqn:quantumoracle}
\operatorname{uni}
\left(\mathscr{H}\right) \ni U_{f} :\mathscr{H} \to \mathscr{H}: \sum_x \alpha_x \ket{x} \mapsto \sum_x \alpha_x (-1)^{f(x)}\ket{x}.
\end{equation}
\end{definition}
\noindent The phase flip introduced by the quantum-phase oracle ``marks" the solution by flipping the sign of the probability amplitude associated with the corresponding basis state~\cite{KLM06}.
The FQS algorithm exploits this marking by iteratively amplifying the probability amplitude of the basis states corresponding to solutions.
This amplitude amplification is achieved by iteratively applying a quantum operator known as a Grover iterate. 
We first define the quantum operators that serve as the components of the Grover iterate and then define the Grover iterate itself.
\begin{definition}[$32$-qubit Hadamard gate~\cite{KLM06}]
The $32$-qubit Hadamard gate $\mathcal{H}^{\otimes 32}$ is the tensor product of~$32$ single-qubit Hadamard gates
\begin{equation}
\label{eqn:hadamardgate}
\mathcal{H}=\frac1{\sqrt{2}}
\begin{bmatrix}
1 & 1 \\
1 & -1
\end{bmatrix}
\in\operatorname{uni}\left(\mathscr{H}_2\right)\cap\operatorname{her}\left(\mathscr{H_2}\right).
\end{equation}
\end{definition}
\begin{definition}[32-qubit Householder reflection~\cite{ANBH+22}]
The $32$-qubit Householder reflection is 
\begin{equation}
    \label{eqn:phaseshiftoperator}
    \operatorname{uni}(\mathscr H) \cap \operatorname{her}(\mathscr H) \ni U_0 := 2 \left(\ket{0}\bra{0}\right)^{\otimes 32} - \mathds1,
\end{equation}
which maps
\begin{equation}
\label{eqn:Householdermaps}
\ket{0}^{\otimes 32} \mapsto \ket{0}^{\otimes 32},\,
\ket{x} \mapsto - \ket{x} \forall x \neq 0.
\end{equation}
\end{definition}
\begin{definition}[32-qubit Grover iterate~\cite{KLM06}]
\label{def:groveriterate}
The 32-qubit Grover iterate is
\begin{equation}
\operatorname{uni}(\mathscr H) \ni G := \mathcal{H}^{\otimes 32}U_{0}\mathcal{H}^{\otimes 32}U_{f}. 
\end{equation}
\end{definition}
\noindent We now outline the FQS algorithm (Remark~\ref{remark:multimarked}) for the 32-qubit multi-solution case, utilizing the Grover iterate.
\begin{definition}[Quantum measurement]
\label{def:qmeas}
A quantum measurement is the destructive mapping
\begin{equation}
\label{eqn:measurement}
\text{mea}:\mathscr{H}\to\{0,1\}^{32}:
\underbrace{\sum_{x=0}^{2^{32}-1}\bra{x}\ket{\psi}\ket{x}}_{\ket{\psi}}
\mapsto x
\end{equation}
in the computational basis, where each label~$x \in \{0, 1\}^{32}$ is an outcome with probability~$\left|\bra{x}\ket{\psi}\right|^2$.
\end{definition}
\begin{remark}[Measurement timing]
\label{remark:timing}
We treat measurement as instantaneous.
\end{remark}
We now introduce the 32-qubit version of the FQS algorithm.
Henceforth,
we use the notation
\begin{equation}
\label{eqn:[x]}
[x] := \{1, \dots, x\}, \;\;[x]_0 := \{0\} \cup [x]
\end{equation}
being all natural numbers from~1 (or~0, respectively) up to~$x$ inclusive.
Now we present Alg.~\ref{alg:grover}, which formalizes the mapping that yields
\begin{equation}
\label{eqn:Gmapping}
x\gets\operatorname{mea}\left(G^{k}\mathcal{H}^{\otimes32}\ket0^{\otimes32}\right)
\end{equation}
with~$x$ being a marked input~$x_0$~(\ref{eqn:solutionsubset}) to the Boolean oracle~(\ref{eqn:booleanoracle}) with probability that depends on the number of Grover iterations~$k$ performed before the quantum measurement~(Def.~\ref{def:qmeas}).
\begin{algorithm}[H]
\caption{FQS algorithm (32-qubit)~\cite{Grov96, BHMT02, KLM06}}
\label{alg:grover}
\begin{algorithmic}[1]
\Require $G \in \operatorname{uni}(\mathscr{H})$ \Comment{32-qubit Grover iterate} 
\Statex \hspace{\algorithmicindent} $k \in [51471]_0$ \Comment{The number of Grover iterations}
\Ensure $x \in \{0, 1\}^{32}$ \Comment{Measured bitstring} 
\Procedure{}{}
\State $\mathscr{H} \ni \ket{\psi} \gets G^k \mathcal{H}^{\otimes 32} \ket{0}^{\otimes 32}$ \Comment{Apply Grover operator~$k$ times on superposition of basis states}
\State \Return $x \gets \operatorname{mea}(\ket{\psi})$ \Comment{Measure the resultant state and return the basis state label with probability~$|\bra{x}\ket{\psi}|^2$}
\EndProcedure
\end{algorithmic}
\end{algorithm}
\begin{definition}
\label{def:kopt}
The optimal number~$k_\text{opt}$ of Grover iterations for the 32-qubit FQS algorithm is the least number of Grover iterations that maximizes the success probability;
i.e., the maximum probability for which the returned label is correctly the inverse of~1 for the marked element.
\end{definition}
\noindent The FQS algorithm returns a valid solution with a probability 
\begin{equation}\label{eqn:groversuccess}
p(k, |M|) := \sin^2\left[2(k + 1/2)\arcsin\left(\sqrt{\nicefrac{|M|}{2^{32}}}\right)\right],
\end{equation}
that scales with the number of Grover iterations~$k$ performed before measurement and the number of marked elements~$|M|$~\cite{NieChu10, BHMT02}.
In the context of mining Bitcoin, Eq.~(\ref{eqn:groversuccess}) is rewritten in terms of the network difficulty~$D$ by substituting
\begin{equation}
    \label{eqn:Dquantum}
    D=\nicefrac{2^{32}}{|M|},
\end{equation}
giving
\begin{equation}
    \label{eqn:groversuccessbitcoin}
    p(k)=p(k,D) := \sin^2\left[2(k + 1/2)\arcsin\left(\sqrt{\nicefrac1{D}}\right)\right],
\end{equation}
where we adopt the shorthand notation~$p(k)$, as~$D$ remains fixed for the values considered in this work.
Similarly, the optimal number of Grover iterations to perform, 
\begin{equation}
    \label{eqn:koptbitcoin}
    k_\text{opt}(D) := \left\lfloor \frac{\pi}{4} \sqrt{D}\right\rfloor,
\end{equation}
is also a function of the network difficulty~$D$ in the context of Bitcoin.
Since each Grover iteration queries the quantum-phase oracle (Def.~\ref{def:quantumoracle}) exactly once, this is also the optimal number of queries to perform.
However, for simplicity, we refer to the evaluated value~$k_\text{opt}(D)$ simply as~$k_\text{opt}$ and specify the value of~$D$ when necessary.
\begin{remark}[Zero Grover iterations]
\label{remark:zeroiterations}
The optimal number of Grover iterations is zero if
\begin{equation}
\label{eq:k0opt}
|M| > \frac{\pi^2}{16} \times 2^{32} \approx 2.65 \times 10^{9} \impliedby \frac{\pi}{4} \sqrt{\frac{2^{32}}{|M|}} < 1.
\end{equation}
The case in which zero Grover iterations are performed is equivalent to selecting a 32-bit string uniformly at random.
\end{remark}
\begin{remark}[Maximum number of Grover iterations]
\label{remark:maxiterations}
As the quantum-phase oracle encodes only the 32-bit nonce~$N$, for~$D > 2^{32}$, the success probability~$p(k)$~(\ref{eqn:groversuccessbitcoin}) is maximized after performing
\begin{equation}
    \label{eqn:kopt32}
    k_\text{opt}^{32} := \left\lfloor \frac{\pi}{4} \sqrt{2^{32}}\right\rfloor=51471 
\end{equation}
Grover iterations.
In contrast, $k_\text{opt}$~(\ref{eqn:koptbitcoin}) only applies for~$D \leq 2^{32}$ or in the setting where the quantum-phase oracle encodes the entire block header.
\end{remark}

Now we present essential considerations for implementing the FQS algorithm in a realistic setting.
The FQS algorithm offers a quadratic speedup in solving Problem~\ref{prob:unstructuredsearchproblem}, requiring only
\begin{equation}
\label{eqn:qoraclecomplexity}
\mathcal{O}\left(\sqrt{\nicefrac{2^{32}}{|M|}}\right)
\end{equation}
queries to the quantum-phase oracle~$U_{f}$ compared to the
\begin{equation}
\label{eqn:classicaloraclecomplexity}
\mathcal{O}\left(\nicefrac{2^{32}}{|M|}\right)
\end{equation}
oracle queries required by classical methods~\cite{KLM06}.
In this respect, the standard analyses of quantum search algorithms are often focused on minimizing the number of quantum oracle queries, treating it as a black box without consideration of the complexity of the quantum oracle itself.
If the complexities of all operations are considered, then the overall complexity of the FQS algorithm depends on the number of oracle queries, the cost of implementing an oracle query, and the computational overhead of all non-query operations such as state preparation, measurement, and error correction.
Furthermore, if the construction of a quantum oracle is considered in the overall speedup offered by the FQS algorithm, then an efficient and reversible quantum circuit implementing the quantum oracle must be constructed, which can be hard depending on the problem being solved.
Therefore, while the FQS algorithm offers a quadratic speedup in the number of oracle calls, the overall advantage could be diminished if the overhead incurred by querying the oracle or non-query operations is too large.
\subsubsection{Applying the FQS algorithm to Bitcoin mining}

Now we discuss how the FQS algorithm can be applied to mining Bitcoin.
We begin by demonstrating the theoretical possibility of utilizing the FQS algorithm to mine Bitcoin by showing that the problem of finding a proof of work reduces to the probabilistic unstructured search problem.
Then, we discuss the key differences between quantum and classical Bitcoin mining.
Finally, we discuss a particular competition that arises between miners in the quantum setting.

We prove that the FQS algorithm could theoretically be applied to the Bitcoin mining problem.
We begin by establishing that Problem~\ref{prob:boundedproofofwork} reduces to  Problem~\ref{prob:unstructuredsearchproblem}.
\begin{lemma}[Reduction to probabilistic unstructured search ]
 \label{lemma:miningproblemisunstructures}
The bounded PoW problem can be reduced to the probabilistic unstructured search problem.
\end{lemma}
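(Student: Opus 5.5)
The reduction works by fixing every header field except the nonce and taking the validity test of the nonce as the oracle. Given an instance $(V,P,R,\mathfrak{t},\tau)$ of Problem~\ref{prob:boundedproofofwork}, I would first compute the uncompressed target $\tilde{\tau}$ from $\tau$ via Eq.~(\ref{eqn:targetcompression}). This is a constant-time operation on the 4-byte encoding.

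Next, define the 32-bit Boolean oracle of Def.~\ref{def:booleanoracle} by
\[
f(N) := \begin{cases} 1 & \text{if } \operatorname{SHA-256}^2\left(V \mathbin\Vert P \mathbin\Vert R \mathbin\Vert \mathfrak{t} \mathbin\Vert \tau \mathbin\Vert N\right) < \tilde{\tau},\\ 0 & \text{else,}\end{cases}
\]
so that $M$ is exactly the set of valid nonces. Evaluating $f$ takes one double hash and one comparison, so a single call to $f$ costs the same as a single classical mining attempt. It also admits a reversible implementation, which is what the quantum-phase oracle of Def.~\ref{def:quantumoracle} needs.

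The key steps, in order, are the following.
\begin{enumerate}
\item Show that the map $N \mapsto H=(V,P,R,\mathfrak{t},\tau,N)$ is a bijection between $\{0,1\}^{32}$ and the candidate headers of the bounded instance.
\item Show that $f(N)=1$ if and only if $H$ satisfies the PoW condition of Def.~\ref{def:powcondition}.
\item Conclude that any algorithm solving Problem~\ref{prob:unstructuredsearchproblem} for this $f$ with probability $p$ returns $x_0\in M$, and hence a nonce solving the bounded PoW problem with the same probability $p$. The backward map from $x_0$ to a header is trivial, namely appending $x_0$ as the nonce.
\end{enumerate}
Steps 1 and 2 are immediate from the definitions.

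The main obstacle is the promise in Problem~\ref{prob:unstructuredsearchproblem} that the search is unstructured, i.e.\ every input is equally likely to be marked. For this I would invoke the uniformity assumption on the hash function in Def.~\ref{def:hashfunction}, which is also used in the rationale for $D$. Under this assumption, distinct nonces produce distinct inputs to $\operatorname{SHA-256}^2$, whose outputs are modelled as independent and uniform. Each $N$ is therefore marked with the same probability $\tilde{\tau}/2^{256}=1/D$, and no structure in $N$ can be exploited. This is a random-oracle-style modelling assumption rather than a provable property of $\operatorname{SHA-256}$, so I would state explicitly that the reduction holds under it.

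I would close with a remark on $M=\emptyset$, which happens with probability about $(1-1/D)^{2^{32}}$. In that case the search problem has no solution, exactly as the bounded PoW instance has none. So the reduction preserves failure as well as success, and this is consistent with miners reseeding via \texttt{extraNonce}.
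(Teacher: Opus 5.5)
Your proposal is correct and follows the paper's proof essentially step for step: the same oracle $f$ built from the fixed fields $V,P,R,\mathfrak{t},\tau$, the efficient computability of $f$ (one double hash and one comparison), and the uniformity assumption of Def.~\ref{def:hashfunction}. The one difference is how you handle $p$. The paper sets the search problem's probability to $p=1-\left(1-\tilde{\tau}/2^{256}\right)^{2^{32}}$, the chance that $M\neq\emptyset$, whereas you pass the solver's success probability through unchanged, use uniformity to justify the unstructured-search promise, and treat $M=\emptyset$ separately; if anything, yours is the cleaner reading of Problem~\ref{prob:unstructuredsearchproblem}.
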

\noindent To prove this reduction, we first construct a Boolean
oracle that encodes the PoW condition for a specific instance of the bounded PoW problem. 
Our Boolean oracle is used as input into Problem~\ref{prob:unstructuredsearchproblem}. 
Thus, solving this instance of Problem~\ref{prob:unstructuredsearchproblem} yields a solution to an instance of Problem~\ref{prob:boundedproofofwork}.
\begin{figure}
    \centering
    \includegraphics[width=0.5\textwidth]{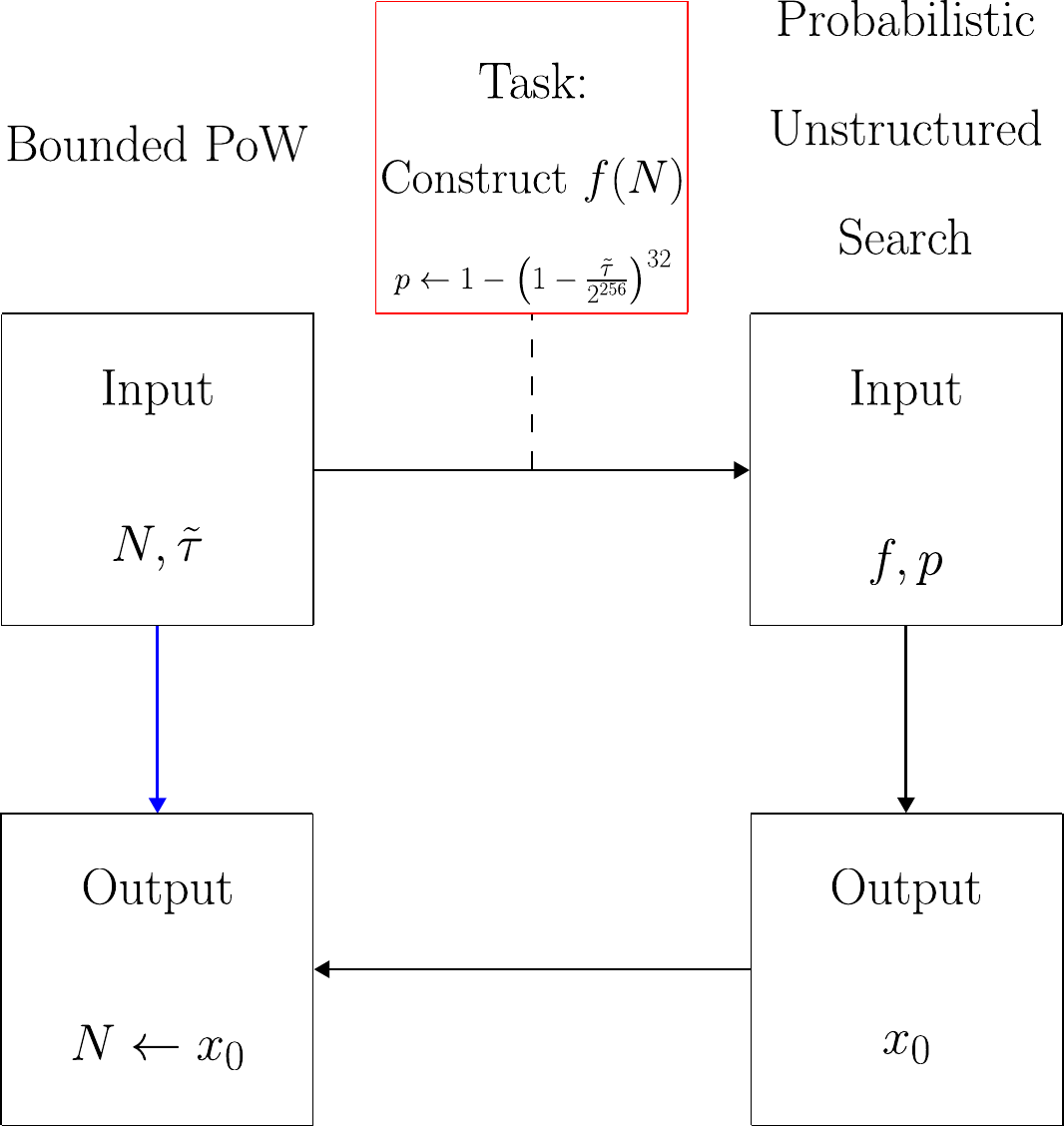}
    \caption{A visualization of the reduction from the bounded PoW problem to probabilistic unstructured search.
    A square ($\square$) represents an input or output of the respective problem, while a red square (\textcolor{red}{$\square$}) represents a task that must be completed during the reduction step.
    A black arrow indicates the mapping between components in the reduction.
    The blue arrow represents the mapping between the input to the transformed output for the bounded PoW problem.
    The dashed line indicates an intermediate step.
    For the bounded PoW problem, $N$ represents the 32-bit nonce and~$\tilde{\tau}$ is the 256-bit difficulty target.
    The Boolean oracle~$f$ takes~$N$ as input and is constructed for a particular instance of the bounded PoW problem with success probability~$p$.
    Finally, $x_0$ represents a nonce such that the resultant block header is a proof-of-work.
    }
    \label{fig:reduction}
\end{figure}
\begin{proof} 
Let~$f$ be the 32-bit Boolean oracle 
\begin{equation}
    \label{eqn:reductionoracle}
    f: \{0, 1\}^{32} \to \{0, 1\}: N \mapsto
    \begin{cases}
        1 & \operatorname{SHA-256}^2(V \mathbin\Vert P \mathbin\Vert R \mathbin\Vert \mathfrak{t} \mathbin\Vert \tau \mathbin\Vert N) < \tilde{\tau} \\
        0 & \text{else}.
    \end{cases}
\end{equation}
The constant bitstrings $V, P, R, \mathfrak{t}$, and $\tau$ defining an instance of the bounded PoW problem determine the construction of~$f$.
Let~$M$ (\ref{eqn:solutionsubset}) be the set of nonces that satisfy the PoW condition.
As the outputs of $\operatorname{SHA-256}$ are uniformly distributed (Definition~\ref{def:hashfunction}), the probability that at least one nonce in the 32-bit space satisfies the PoW condition (i.e., the probability that $M \neq \emptyset$) is
\begin{equation}
\label{eqn:probreduction}
p=1 - \left(1 - \nicefrac{\tilde{\tau}}{2^{256}}\right)^{2^{32}}.
\end{equation}
By Remark~\ref{remark:approvedhash}, $f$ is computable in polynomial time.
Thus, solving the bounded PoW problem reduces to solving an instance of the probabilistic unstructured search problem with a 32-bit Boolean oracle~$f$ and probability~$p$.
Any algorithm that solves the probabilistic unstructured search problem can therefore be used to solve the bounded PoW problem using the constructed Boolean oracle~$f$.
\end{proof}
\begin{remark}[Assuming at least one solution]
For typical values of the network difficulty function~$D(\tilde{\tau})$, the probability that at least one valid nonce exists within a 32-bit space is small. 
If no valid nonce exists~$(|M| = 0)$, the FQS algorithm cannot yield a valid proof-of-work with any number of Grover iterations~(\ref{eqn:groversuccessbitcoin}).
In this case, a quantum miner would opt out of the quantum search entirely.
However, we employ the procedure for the case that a valid nonce exists to study the advantage offered by the FQS algorithm when a valid nonce is present in the 32-bit search space.
\end{remark}
\noindent If an efficient and reversible quantum circuit capable of computing the $\operatorname{SHA-256}$ hash function existed, then by building on Lemma~\ref{lemma:miningproblemisunstructures} and Remark~\ref{remark:approvedhash}, an efficient quantum circuit implementing the quantum oracle for mining Bitcoin could be constructed~\cite{NieChu10,ABL+17}.

Now we discuss the key differences between mining Bitcoin in the classical and quantum settings.
Applying the FQS algorithm to Bitcoin mining introduces new techniques that are not present in classical mining.
Classically, miners check block headers sequentially and immediately determine whether a given header satisfies the PoW condition (Def.~\ref{def:powcondition}).
In contrast, quantum miners must commit a specific amount of quantum information processing time before making a measurement~\cite{LRS19, Sat20}.
As a result, quantum miners cannot continually verify block headers; instead, they only learn of the outcome of the search after measurement. 
This fundamental difference leads to a unique competition between the miners that is only present in the quantum setting.

Now we discuss a unique behaviour among miners that only arises when employing quantum mining techniques. 
Quantum miners must commit time to quantum information processing, aiming to perform as many Grover iterations as they can, or as close to the optimal number of Grover iterations as possible, to maximize their success probability. 
However, the quantum miners want to measure before each other to be the first to claim the mining reward~\cite{LRS19}.
If only one quantum miner is in the network, they would focus on performing~$k_\text{opt}$ Grover iterations before measuring~\cite{NerGau23}.
However, if multiple quantum miners in the network are present, determining when to measure is much more complicated.
For example, a quantum adversary could perform a quantum measurement (Def.~\ref{def:qmeas}) after performing fewer than~$k_\text{opt}$ Grover iterations, thereby reducing their success probability but gaining a temporal advantage and a higher chance of being the first to broadcast a valid block to the network~\cite{LRS19, Sat20}.
Therefore, quantum miners must strike a balance between maximizing their success probability and measuring before their adversaries. 
\subsubsection{Aggressive and peaceful quantum mining}

Now we explore the difference between aggressive and peaceful quantum mining following Sattath's explanation~\cite{Sat20}.
We begin by discussing the choice that quantum miners are faced with when they learn of a new block. 
Then we explain what the peaceful quantum mining strategy is. 
Finally, we explain the aggressive mining strategy and its consequences on the Bitcoin network.

We now turn to the decision that quantum miners must make when they learn of a newly-mined block while performing a Grover search.
Upon receiving a broadcast of a new block, quantum miners face a crucial decision: either discard their search and begin a new one for the next block, or halt the search and immediately measure the resulting quantum state in hopes of producing a valid block.
The first option is known as the peaceful quantum mining strategy and mirrors the classical approach of abandoning a block once a new block is found.
The second is called the aggressive quantum mining strategy, which is an opportunistic approach unique to quantum miners that minimizes the waste of computational resources already invested in solving Problem~\ref{prob:proofofwork}.

We discuss the peaceful quantum mining strategy (PQMS). 
In the PQMS, the quantum miners discard their current Grover search entirely and start anew upon receiving a broadcast of a newly-mined block.
This behaviour mirrors classical Bitcoin miners, who immediately begin mining on top of newly-mined blocks as soon as they receive a broadcast of them.
By discarding their Grover search, quantum miners employing the PQMS avoid intentionally forking the blockchain.
However, unlike in the classical setting, discarding a Grover search wastes significant computational resources already invested in executing the FQS algorithm.

Next, we turn to the aggressive quantum mining strategy (AQMS), as proposed by Sattath~\cite{Sat20}. 
In his strategy, quantum miners halt their search and measure upon receiving a broadcast of a newly-mined block, attempting to maximize their chances of temporarily forking the blockchain.
By employing Sattath's AQMS, quantum miners avoid wasting the computational effort already invested in executing the FQS algorithm.
If the quantum measurement (Def.~\ref{def:qmeas}) is successful, the quantum miner will attempt to temporarily fork the blockchain. 
If all quantum miners in the network employ Sattath's AQMS, the quantum miners will halt and measure their quantum states nearly simultaneously when they receive a broadcast of a newly-mined block.
As a result, the times at which quantum miners perform measurements and the times at which blocks are found become highly correlated. 
This correlation, which only occurs in the quantum setting due to Sattath's AQMS, leads to an increased stale rate~$p_\text{stale}$ in Def.~\ref{def:stalerate} compared to the classical setting.
\begin{remark}
\label{remark:reducedhashrate}
Sattath's AQMS increases the Bitcoin network's stale rate~$p_\text{stale}$. 
As a consequence, the effective threshold for a 51\% attack~(\ref{eqn:stalerate51attack}) decreases, meaning that a consortium with fractional hash rate~$q_c$ less than one half would be able to execute a 51\% attack.
\end{remark}
\begin{remark}[Regulation of peaceful mining]
\label{remark:peacefulregulation}
As a network regulation that all quantum miners be peaceful is not enforceable, aggressive mining should be the default assumption.
\end{remark}
\begin{figure}[htbp]
    \centering
    \includegraphics[width=0.75\textwidth]{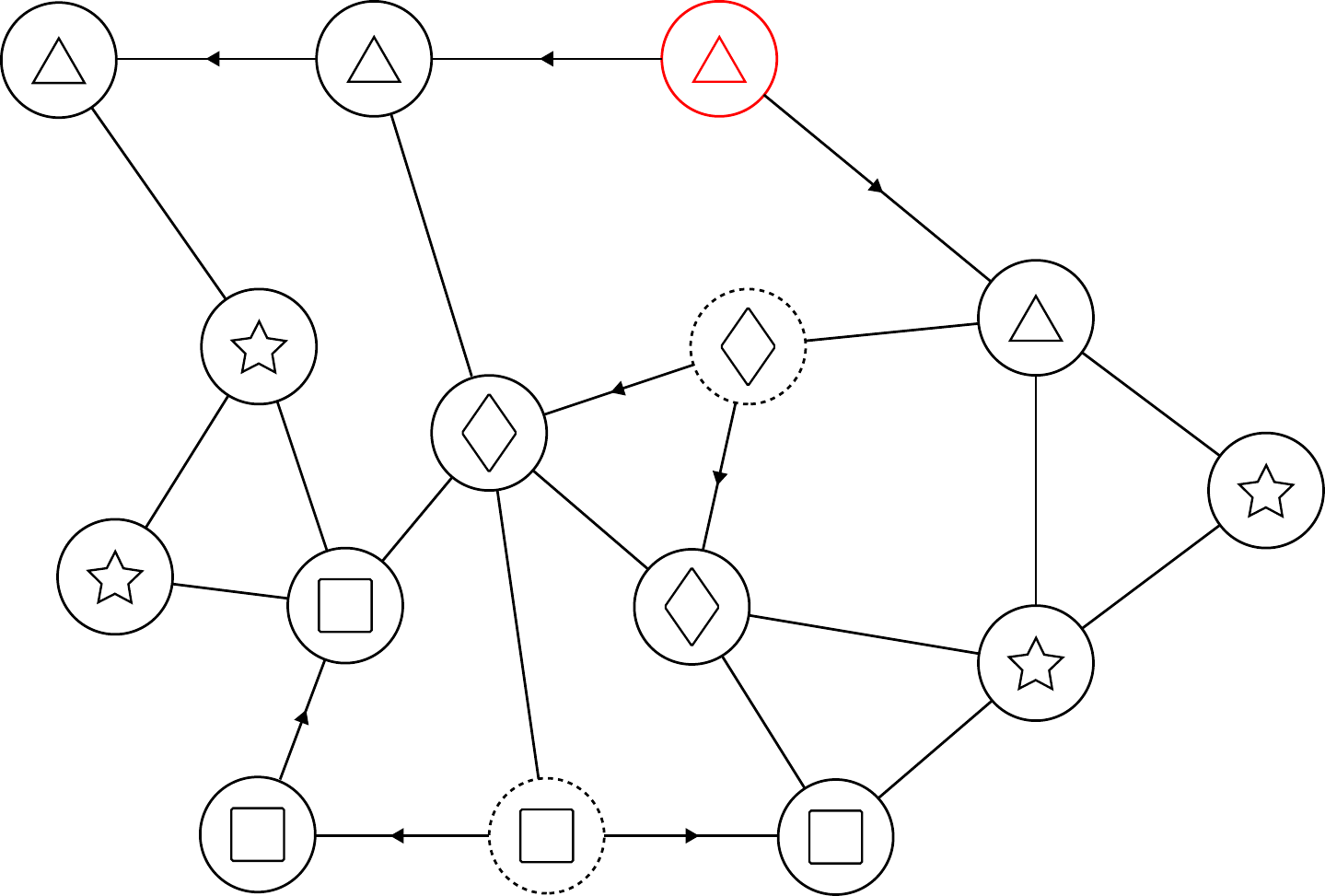}
    \caption{Blockchain forking due to Sattath's AQMS. 
    A circle ($\bigcirc$) represents a node in the Bitcoin network, with a dashed circle representing a quantum miner employing Sattath's AQMS.
    The red circle (\textcolor{red}{$\bigcirc$}) represents a classical mining node that has just mined a new block.
    Each vertex contains either a star ($\bigstar$), which represents the blockchain state before the new block was discovered, or a triangle ($\triangle$), square ($\square$), or diamond ($\lozenge$), representing the new branches of the blockchain during a temporary fork.
    An arrow within an edge indicates the broadcast of a new block to a peer.
    An edge without an arrow represents a peer-to-peer connection where no new block is being broadcast.
    This figure represents a temporary fork because all of the nodes do not have the same symbol.
    }
    \label{fig:quantum_fork_network}
\end{figure}
\subsection{Quantum races}
\label{subsec:quantumraces}

We explain the quantum races model and its underlying game-theoretic foundation.
We begin with some preliminary game-theoretic concepts that are essential to understanding the quantum races model.
Next, we describe some methods used for computing a Nash equilibrium and discuss the computational complexity of this problem.
Finally, we introduce the quantum races model and focus on the case of two players.
\subsubsection{Preliminaries on game theory}

We discuss essential game theory concepts for understanding the quantum races model. 
We begin with a top-down explanation of what a strategic game is, noting the essential assumptions about the participants. 
Then, we explain what strategies are in strategic games. 
Next, we discuss what payoff functions are and how they quantify the players' preferences over the outcomes of the game. 
Finally, we explain what a Nash equilibrium is and discuss how its existence is guaranteed.

We begin with a top-down explanation of what a strategic game is, introducing each of its essential components.
A strategic game is a mathematical model of interactive decision making, where each participant chooses a plan of action simultaneously and once and for all~\cite{OsRu94}.
Formally, a strategic game is defined as follows.
\begin{definition}[Strategic game~\cite{Tad13}]
\label{def:strategicgame}
A strategic game is a 3-tuple comprising
\begin{itemize}
    \item[] a set of players~$\wp$ that is finite:~$|\wp| =: n_\wp$,
    \item[] a set of strategies for each player~$S_i$, and 
    \item[] a set of payoff functions for each player that give a payoff value for each combination of the players' chosen strategies. 
\end{itemize}
\end{definition}
\noindent Although several different types of games exist~\cite{Tad13, MSZ13}, we focus exclusively on strategic games in this work. 
In the standard framework of strategic games, players are assumed to be both rational and intelligent.
\begin{definition}[Rational player~\cite{Tad13}]
\label{def:rationalplayer}
A rational player chooses their action to maximize their payoff consistent with their beliefs of what is going on in the game.
\end{definition}
\begin{definition}[Intelligent player~\cite{Tad13}]
\label{def:intelligentplayer}
An intelligent player possesses complete knowledge about a game, including the actions available to themselves and their opponents, the possible outcomes of the game, and the preferences of all players.
\end{definition}
\noindent These assumptions, along with the structure of the game itself, are assumed to be common knowledge~\cite{Tad13}.

Now we explain the notion of strategies in the context of strategic games. 
Each player $i \in \wp$ is associated with a set of actions~$S_i$ permitted by the rules of the game.
If this set of actions is finite, then the game is called finite~\cite{MSZ13}.
A strategy is an action chosen from $S_i$ deterministically (pure), or a probability distribution over~$S_i$ (mixed).
\begin{definition}[Pure strategy~\cite{OsRu94}]
\label{def:purestrategy}
A pure strategy is a single action $s_i \in S_i$ that a player~$i \in \wp$ selects deterministically. 
\end{definition}
\begin{definition}[Mixed strategy~\cite{MSZ13}]
A mixed strategy~$\sigma_i$ for player~$i \in \wp$ is a probability distribution over $S_i$.
\end{definition}
\begin{definition}[Support of mixed strategy~\cite{MSZ13}]
The support of a mixed strategy 
\begin{equation}
    \label{eqn:support}
    \operatorname{supp}(\sigma_i) := \{s_i \in S_i : \sigma_i(s_i) > 0\}
\end{equation}
is the set of pure strategies that are played with positive probability by a mixed strategy~$\sigma_i$.
\end{definition}
\noindent The set of mixed strategies for player $i$ is~\cite{MSZ13}
\begin{equation}
\label{eqn:setofmixedstrat}
\Sigma_i := \left\{\sigma_i : S_i \to [0, 1]: \sum_{s_i \in S_i} \sigma_i(s_i)=1\right\}
.\end{equation}
\noindent A pure strategy can also be considered a mixed strategy with a degenerate distribution that selects a single element of~$S_i$ with probability 1~\cite{Tad13}.
\begin{definition}[Pure strategy profile~\cite{Tad13}]
A pure strategy profile is an $n_\wp$-tuple
\begin{equation}
\label{eqn:s*}
s^*=(s_1, s_2, \dots, s_{n_\wp}),
\end{equation}
where each $s_i \in S_i$ is the pure strategy chosen by player $i \in \wp$.
The set of all pure strategy profiles is denoted~$S^*$.
\end{definition}
\noindent A mixed strategy profile~$\sigma^*$ is defined analogously, where each $\sigma_i$ is the mixed strategy chosen by player~$i \in \wp$~\cite{Tad13}.
\begin{remark}[Concise strategy profile notation~\cite{Tad13}]
A pure strategy profile is
\begin{equation}
\label{eqn:s*concise}
s^* := (s_i, s_{-i}^*),
\end{equation}
where $s_{-i}^*$ is the $(n_\wp-1)$-tuple denoting the strategies chosen by all players except player $i$.
Similarly, a mixed strategy profile is denoted as $\sigma^*=(\sigma_i, \sigma_{-i}^*)$.
\end{remark} 
\noindent We note that Sattath's AQMS~\cite{Sat20} does not correspond to a strategy in the formal game-theoretic sense, and only specifies how a quantum miner will react to receiving a broadcast of a newly-mined block.

Now we discuss how a player's preferences are quantified in payoff. 
In a strategic game (Def.~\ref{def:strategicgame}), each player $i \in \wp$ is associated with a payoff function that assigns a real number to every possible strategy chosen by all players~\cite{Tad13}.
The output of the payoff function is called the payoff, and represents the player's preference over the game's outcome that has been determined by the strategies played by all players.
\begin{definition}[Payoff function~\cite{MSZ13}]
\label{def:payofffunction}
A payoff function $u_i : S^* \to \mathbb{R}$ represents player~$i$'s preference on the outcome of a game led to by the strategies chosen in the pure strategy profile $s^* \in S^*$.
\end{definition}
\noindent We adopt a definition of a payoff matrix based on standard textbooks on game theory~\cite{NRTV07, MSZ13}.
\begin{definition}[Payoff matrix]
\label{def:payoffmatrix}
For two-player finite strategic games with~$S_{1,2}$ the sets of actions available to the players, the payoff functions
\begin{equation}
\label{eqn:payoffmatrixmapping}
u_i : S_1 \times S_2 \to \mathbb{R} : (s_1, s_2) \mapsto u_i(s_1, s_2)
\end{equation}
determine a pair of payoff matrices,
\begin{equation}
\label{eqn:payoffmatrixdim}
A \in \mathbb{R}^{|S_1| \times |S_2|}, \;\; B \in \mathbb{R}^{|S_2| \times |S_1|},
\end{equation}
with payoff-matrix entries
\begin{equation}
\label{eqn:payoffmatrixentriesmapping}
(s_1, s_2) \mapsto u_1(s_1, s_2), \;\; (s_1, s_2) \mapsto u_2(s_1, s_2),
\end{equation}
respectively.
\end{definition}
\noindent
\begin{remark}
    \label{remark:uvnotation}
    For two player games, we adopt the notation 
    \begin{equation}
    \label{eqn:uvpayoffnotation}
    u := u_1 \;\; v := u_2
    \end{equation}
    as the payoff functions for both players. 
\end{remark}
\noindent 
Thus,
the payoff matrix entries are
\begin{equation}
\label{eqn:payoffmatrixentries}
A_{s_1s_2} =  u(s_1,s_2)
\end{equation}
and similarly for~$B$.
\begin{remark}[Expected payoff for mixed strategies~\cite{MSZ13}]
\label{remark:expectedpayoff}
For a mixed strategy profile~$\sigma^*$, the expected payoff is given by 
\begin{equation}
\label{eqn:expectedpayoff}
U(\sigma^*) := \mathbb{E}_{\sigma^*}[u(\sigma^*)]=\sum_{s^* \in S^*} u(s^*) \prod_{i \in \wp} \sigma_i (s_i)
.\end{equation} 
\end{remark}
\noindent Note that payoff is an ordinal construct.
The value of payoff itself has no meaning and is used to order the desirability of alternatives~\cite{Tad13}.
Importantly, the payoff function is used to define a best-response.
\begin{definition}[Best response~\cite{Tad13}]
\label{def:bestresponse}
The strategy $\sigma_i \in \Sigma_i$ is player~$i$'s best response to their opponents' strategies $\sigma_{-i} \in \Sigma_{-i}$ if 
\begin{equation}
\label{eqn:bestresponse}
u_i(\sigma_i, \sigma_{-i}^*) \geq u_i(\sigma_i^\prime, \sigma_{-i}^*) \; \forall \sigma_i^\prime \in \Sigma_i
.\end{equation}
\end{definition}
\begin{definition}[Strictly dominated strategy~\cite{MSZ13}]
\label{def:dominatedstrategy}
A strategy~$s_i \in S_i$ of player~$i \in \wp$ is strictly dominated if 
\begin{equation}
    \label{eqn:strictlydominated}
    \exists s_i^\prime \in S_i : u_i(s_i, s_{-i}^*) < u_i(s_i^\prime, s_{-i}^*) \forall s_{-i} \in S_{-i}.
\end{equation}
A rational player will not play a strictly dominated strategy.
\end{definition}
\noindent In two-player finite strategic games~(Def.~\ref{def:strategicgame}), the payoff functions are represented as payoff matrices~\cite{Tad13}.
Each player's payoff matrix lists their payoff for each combination of pure strategies played by both players: rows correspond to their own actions, and columns correspond to the actions of their opponent.
\begin{definition}[Symmetric game~\cite{MSZ13}]
\label{def:symmetricgame}
A game is symmetric if the players share the same pure strategy set
\begin{equation}
    \label{eqn:purestrategysetsame}
    S_i=S_j \;\forall i, j \in \wp,
\end{equation}
and if,
for any mixed-strategy profile
\begin{equation}
\label{eq:mixedstrategyprofile}
\sigma^*=(\sigma_1, \sigma_2, \dots , \sigma_{n_\wp}),
\end{equation}
then,
for any permutation 
\begin{equation}
    \label{eqn:permutation}
    \pi : \wp \to \wp 
\end{equation}
of the players~$i \in \wp$, 
\begin{equation}
    \label{eqn:payoffpermutation}
    u_i(\sigma^*)=u_{\pi(i)}(\pi(\sigma^*)),
\end{equation}
where
\begin{equation}
    \label{eqn:permutatestrategyprofile}
    \pi(\sigma^*) := \left(\sigma_{\pi(1)}, \sigma_{\pi(2)}, \dots, \sigma_{\pi\left(n_\wp\right)}\right).
\end{equation}
\end{definition}
\begin{remark}[Payoff for symmetric two-player strategic games~\cite{MSZ13}]
\label{remark:payofftwoplayer}
In a two-player symmetric strategic game~(Def.~\ref{def:strategicgame}), each player's payoff matrix is the transpose of the other's.
\end{remark}

Here we describe what a Nash equilibrium is.
A Nash equilibrium is a strategy profile where each player plays a best response to the strategies of all other players, and no player has an incentive to unilaterally deviate~\cite{Tad13}.
\begin{definition}[Nash equilibrium~\cite{Tad13}]
\label{def:nasheq}
A strategy profile~$\sigma^*=(\sigma_1, \sigma_2, \dots, \sigma_n)$ is a Nash equilibrium if 
\begin{equation}
\label{eqn:nasheq}
u_i(\sigma_i, \sigma_{-i}^*) \geq u_i (\sigma_i^\prime, \sigma_{-i}^*) \; \forall \sigma_i^\prime \in \Sigma_i, \; \forall i \in \wp.
\end{equation}
\end{definition}
\noindent As pure strategies are a degenerate case of mixed strategies, this definition of the Nash equilibrium applies to both pure and mixed strategies.
In non-cooperative strategic games where all players are rational and intelligent, the Nash equilibrium is self-enforcing because all players know that deviating from their equilibrium strategy will result in a lower expected payoff~\cite{Tad13}.

Importantly, the existence of a mixed-strategy Nash equilibrium is guaranteed by Nash's theorem.
\begin{theorem}[Nash's existence theorem~\cite{Nash51, Tad13}]
\label{theorem:nash}
Any $n_\wp$-player finite strategic game has at least one mixed-strategy Nash equilibrium.
\end{theorem}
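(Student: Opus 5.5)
The plan is the classical fixed-point proof of Nash's theorem via Brouwer. Let the players be $\wp=[n_\wp]$ with finite action sets $S_i$. By~(\ref{eqn:setofmixedstrat}), each $\Sigma_i$ is a probability simplex of dimension $|S_i|-1$. The product $\Sigma:=\bigtimes_{i\in\wp}\Sigma_i$ is therefore a nonempty, compact, convex subset of a finite-dimensional Euclidean space. The expected payoff~(\ref{eqn:expectedpayoff}) is multilinear in $(\sigma_1,\dots,\sigma_{n_\wp})$, hence continuous on $\Sigma$. The goal is a continuous self-map of $\Sigma$ whose fixed points are exactly the Nash equilibria of Def.~\ref{def:nasheq}; Brouwer's fixed-point theorem then finishes the proof.

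For each player $i$, action $s_i\in S_i$ and profile $\sigma^*\in\Sigma$, I will define the gain from deviating to $s_i$,
\begin{equation*}
g_{i,s_i}(\sigma^*) := \max\left\{0,\; u_i(s_i,\sigma_{-i}^*) - u_i(\sigma^*)\right\},
\end{equation*}
and Nash's map
\begin{equation*}
\Phi(\sigma^*)_i(s_i) := \frac{\sigma_i(s_i)+g_{i,s_i}(\sigma^*)}{1+\sum_{s_i'\in S_i} g_{i,s_i'}(\sigma^*)}.
\end{equation*}
The entries of $\Phi(\sigma^*)_i$ are nonnegative and sum to one, so $\Phi:\Sigma\to\Sigma$. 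Continuity follows because the gains are maxima of continuous functions and the denominator is at least $1$. Brouwer's theorem then yields a fixed point $\sigma^*=\Phi(\sigma^*)$.

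The main step is showing that a fixed point is a Nash equilibrium. Linearity of $u_i$ in $\sigma_i$ gives $u_i(\sigma^*)=\sum_{s_i}\sigma_i(s_i)\,u_i(s_i,\sigma_{-i}^*)$. So the average over $\operatorname{supp}(\sigma_i)$ of $u_i(s_i,\sigma_{-i}^*)$ equals $u_i(\sigma^*)$, and there is some $\hat s_i\in\operatorname{supp}(\sigma_i)$ with $u_i(\hat s_i,\sigma_{-i}^*)\le u_i(\sigma^*)$, hence $g_{i,\hat s_i}=0$. Evaluating the fixed-point equation at $\hat s_i$ gives $\sigma_i(\hat s_i)=\sigma_i(\hat s_i)/(1+\sum_{s_i'} g_{i,s_i'})$. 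Since $\sigma_i(\hat s_i)>0$, this forces $\sum_{s_i'} g_{i,s_i'}(\sigma^*)=0$, so every gain vanishes: $u_i(s_i,\sigma_{-i}^*)\le u_i(\sigma^*)$ for all $s_i\in S_i$.

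Finally, any $\sigma_i'\in\Sigma_i$ satisfies $u_i(\sigma_i',\sigma_{-i}^*)=\sum_{s_i}\sigma_i'(s_i)\,u_i(s_i,\sigma_{-i}^*)\le u_i(\sigma^*)$. This is condition~(\ref{eqn:nasheq}) for every $i\in\wp$, so $\sigma^*$ is a mixed-strategy Nash equilibrium. I expect the fixed-point-to-equilibrium argument of the previous paragraph, specifically the averaging trick that produces $\hat s_i$, to be the only delicate point. The remaining checks (that $\Phi$ maps into $\Sigma$ and is continuous) are routine. An alternative route would apply Kakutani's theorem to the best-response correspondence, but that requires verifying upper hemicontinuity, so Brouwer is the more direct choice.
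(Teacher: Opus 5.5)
Your proof is correct: it is Nash's Brouwer fixed-point argument, with the map $\Phi$, continuity from the nonnegative gains and a denominator bounded below by $1$, and the averaging step that produces some $\hat s_i\in\operatorname{supp}(\sigma_i)$ with zero gain and hence forces all gains to vanish. The paper gives no proof of its own and only cites the result; your argument is the one in Nash's cited 1951 paper, so it takes essentially the same approach.
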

\begin{theorem}[Nash's existence theorem for symmetric games~\cite{Nash51}]
\label{theorem:nashsymm}
Any finite symmetric game has at least one symmetric Nash equilibrium.
\end{theorem}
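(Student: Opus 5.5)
The plan is the standard fixed-point argument, specialised so that the fixed point is forced onto the symmetric part of the mixed-strategy space.

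Let $S:=S_i$ be the common pure strategy set, which is finite. Let $\Delta:=\Sigma_i$ be the simplex of mixed strategies over $S$, a nonempty, compact, convex subset of $\mathbb{R}^{|S|}$. Instead of working on the full product $\Sigma_1\times\dots\times\Sigma_{n_\wp}$, I restrict attention to the diagonal $\{(\sigma,\dots,\sigma):\sigma\in\Delta\}$, which I identify with $\Delta$ itself.

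Symmetry (Def.~\ref{def:symmetricgame}, Eq.~(\ref{eqn:payoffpermutation})) has one consequence I need: when all opponents play the same $\sigma$, the payoff to player~$i$ from deviating to a pure $s\in S$ does not depend on $i$. Write this common payoff as $g_s(\sigma):=u_i(s,\sigma_{-i}=\sigma,\dots,\sigma)$. It is a polynomial in the entries of $\sigma$ by Eq.~(\ref{eqn:expectedpayoff}), hence continuous.

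I then define Nash's map $T:\Delta\to\Delta$ by
\begin{equation}
T(\sigma)(s):=\frac{\sigma(s)+\max\{0,\,g_s(\sigma)-U(\sigma)\}}{1+\sum_{s'\in S}\max\{0,\,g_{s'}(\sigma)-U(\sigma)\}},
\end{equation}
where $U(\sigma)=\sum_s\sigma(s)g_s(\sigma)$ is the payoff when everyone plays $\sigma$. The numerator and denominator are continuous and the denominator is at least $1$, so $T$ is continuous. Brouwer's fixed-point theorem then gives some $\sigma^\ast$ with $T(\sigma^\ast)=\sigma^\ast$.

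The main step is showing that a fixed point is an equilibrium. I argue this in three parts.
\begin{itemize}
\item[] Because $U(\sigma^\ast)$ is a $\sigma^\ast$-weighted average of the $g_s(\sigma^\ast)$, some $s$ in $\operatorname{supp}(\sigma^\ast)$ has $g_s(\sigma^\ast)\le U(\sigma^\ast)$.
\item[] For that $s$ the numerator of $T(\sigma^\ast)(s)$ is just $\sigma^\ast(s)>0$. The fixed-point equation then forces the denominator to equal $1$, so $g_{s'}(\sigma^\ast)\le U(\sigma^\ast)$ for every $s'\in S$.
\item[] Any mixed deviation $\sigma'$ by any player $i$ earns $\sum_{s'}\sigma'(s')g_{s'}(\sigma^\ast)$, which is linear in $\sigma'$ and therefore at most $U(\sigma^\ast)$. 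This is exactly condition~(\ref{eqn:nasheq}) at the profile $(\sigma^\ast,\dots,\sigma^\ast)$.
\end{itemize}
The profile is symmetric by construction, so it is a symmetric Nash equilibrium.

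The step I expect to be delicate is the first one: checking carefully that Eq.~(\ref{eqn:payoffpermutation}) really makes $g_s$ independent of the player index. Without that, a single map on $\Delta$ would not capture every player's incentive to deviate. To verify it, I would apply the transposition $\pi$ swapping $i$ and $j$ to the profile in which player~$i$ plays $s$ and all others play $\sigma$. This gives $u_i(s,\sigma,\dots)=u_j(s,\sigma,\dots)$. For two players this is precisely Remark~\ref{remark:payofftwoplayer}, where $B=A^{\mathsf T}$.

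An alternative route is to invoke Kakutani's theorem on the best-response correspondence restricted to the diagonal. I would use Brouwer, as above, since it is more elementary.
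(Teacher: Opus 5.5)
Your proof is correct: the transposition argument makes $g_s$ independent of the player, Nash's map is a continuous self-map of $\Delta$, and your fixed-point analysis shows that $(\sigma^\ast,\dots,\sigma^\ast)$ satisfies condition~(\ref{eqn:nasheq}) for every player. The paper gives no proof of its own and only cites Nash's 1951 paper, whose argument is essentially yours: Brouwer's theorem applied to the same map on the set of symmetric profiles, which that map leaves invariant.
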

\noindent Whereas a game can have multiple Nash equilibria, the existence of additional equilibria is not guaranteed. 
If multiple Nash equilibria exist, some can result in a better expected payoff for the players than others.
\subsubsection{Computing a Nash equilibrium}

We discuss the problem of computing a Nash equilibrium in finite strategic games. 
We begin by giving an overview of different methods used to compute a Nash equilibrium.
Then, we focus on the Lemke-Howson algorithm, a standard method for computing an exact Nash equilibrium in two-player games.
Finally, we discuss the computational complexity of computing a Nash equilibrium.

We begin by giving an overview of methods used for computing a Nash equilibrium.
The method used to find a Nash equilibrium depends on several properties of the game, including, but not limited to, the number of players, whether the game is symmetric, whether players make decisions sequentially or in turns, and whether the goal is to find an exact or $\epsilon$-approximate Nash equilibrium~\cite{NRTV07}.
As a result, a wide range of methods have been developed for computing a Nash equilibrium, including analytical approaches, graphical techniques~\cite{KLS01}, and algorithmic approaches based on linear programming~\cite{LH64}. 
No single technique exists that is universally applicable to all types and properties of games.

Now we discuss the Lemke-Howson algorithm.
The Lemke-Howson algorithm is a well-established and standard method for finding an exact Nash equilibrium in two-player, finite, strategic games~\cite{LH64, NRTV07}.
As input, the Lemke-Howson algorithm requires the payoff matrices of both players, along with an integer specifying the initial direction of the search~\cite{LH64}.
The Lemke-Howson algorithm outputs a mixed-strategy profile corresponding to an exact Nash equilibrium~(Def.~\ref{def:nasheq}).
This algorithm works by traversing through the edges of a pair of polytopes representing the strategy sets of both players, using a method called complementary pivoting, until a pair of vertices corresponding to a Nash equilibrium is found~\cite{LH64, NRTV07}.
Note that the Lemke-Howson algorithm finds only a single Nash equilibrium, though additional Nash equilibria can be found by enumerating through different initial directions to search.
All Nash equilibria found by Lemke-Howson must lie within the polytopes representing the strategy sets of both players.

We discuss the complexity of computing a Nash equilibrium.
The complexity of computing a Nash equilibrium is essential for understanding the limitations of the Lemke-Howson algorithm.
The problem of computing a Nash equilibrium for mixed-strategy games, denoted \texttt{Nash}, is heavily studied in computational complexity theory~\cite{DGP09}.
Formally, \texttt{Nash} is a function problem, which means the task is to evaluate a function that returns a Nash equilibrium for each game~\cite{Papa94}.
Since a solution to \texttt{Nash} is always guaranteed to exist by Nash's theorem~(Theorem \ref{theorem:nash}), \texttt{Nash} belongs to the complexity class called Total Function Nondeterministic Polynomial (TFNP)~\cite{NRTV07}.
More specifically, in 2006, it was shown that \texttt{Nash} is complete for the Polynomial Parity Arguments on Directed Graphs (PPAD) complexity class, which is a subclass of TFNP~\cite{DGP09}.
Although being PPAD-complete does not prove that \texttt{Nash} is intractable, it does provide strong evidence of intractability~\cite{Pap94}.
This is because any efficient algorithm that solves a PPAD-complete problem like \texttt{Nash} would yield efficient algorithms for all PPAD-complete problems, including computing Brouwer fixed points, which has been resilient to efficient solutions for decades~\cite{HPV89}.
Thus, although the Lemke-Howson algorithm computes an exact Nash equilibrium, it has worst-case exponential runtime, reflecting that \texttt{Nash} is PPAD-complete~\cite{DGP09}.
Additionally, many variants of \texttt{Nash} are known to be NP-complete, such as computing a second Nash equilibrium given a first, or computing a Nash equilibrium that uses particular strategies.
\subsubsection{Two-player quantum races}

We follow the 2019 Lee-Ray-Santha approach (LRS19)~\cite{LRS19} and describe the quantum races model in the case where there are two players, Alice and Bob.
We begin with a description of the quantum races model, noting its purpose and its components.
Then, we discuss the pure strategies for the players and the associated success probabilities.
Next, we specialize to the case where there are only two players, Alice and Bob, and discuss their payoff and the structure of their payoff matrices.
Finally, we explain how the quantum races model can be applied to Bitcoin.

We explain the quantum races model with a model comprising descriptions of the players and their resources, the protocols they follow, the actions they are allowed, and the payoffs associated with those actions.
We formally define a quantum race based on the description by LRS19~\cite{LRS19}.
\begin{definition}[Quantum race]
\label{def:quantumrace}
A quantum race is a competition between two or more players, each equipped with a quantum computer, who are racing to be the first to solve the same computational problem.
\end{definition}
\noindent The quantum races model is introduced as a means to study such competitions using a game-theoretic approach. 
The purpose of the quantum races model is to determine the optimal strategies for the players. 
Before beginning a quantum race, each player~$i \in \wp$ chooses the times at which they will measure their respective quantum states from the set 
\begin{equation}
\label{eqn:quantumracepurestrat}
S_i := \{1, 2, \dots, K\},
\end{equation}
which is the set of pure strategies available to player~$i$, and $K$ is the maximum allowed measurement time. 
The quantum races model assumes that each player performs exactly one quantum measurement for each quantum race.
\begin{remark}[Weak players]
\label{remark:weakplayers}
The assumption that each player performs exactly one quantum measurement per quantum race makes the players ``weak", as if both players' quantum measurements do not yield a solution, then there is no winner.
In contrast, a ``strong" player would maximize their resources by performing multiple quantum measurements until either quantum miner succeeds.
\end{remark}

The set of pure strategies for each player is associated with a success probability that increases with the measurement time. 
For each time $t^\prime \in S_i$, each player has a quantum algorithm that they can run for time~$t^\prime$ with an associated success probability $p(t^\prime)$.
It is assumed that these probabilities form an increasing sequence
\begin{equation}
\label{eqn:probseq}
0 < p(1) < p(2)< \dots < p(K) \leq 1.
\end{equation}
Whereas LRS19 refer to these actions as ``times", we clarify that they can also represent the number of computational steps. 
The quantum races model applies to any quantum algorithm whose success probability can be characterized by an increasing sequence of probabilities at discrete points in its execution.
Each quantum race is fully specified by an increasing sequence of probabilities (\ref{eqn:probseq}) for each player, as well as their respective payoff functions, which we describe next.
\begin{remark}[Two players]
\label{remark:twoplayers}
LRS19 first considered a setting with two players, although they did not justify this choice. 
Presumably, the two-player case was analyzed first because the quantum race is fully represented by a pair of payoff matrices. 
A quantum race involving three or more players requires a higher-dimensional tensor, which makes the analysis much more difficult. 
\end{remark}

We now follow LRS19 and specialize to the symmetric two-player quantum races model~(Remark~\ref{remark:twoplayers}), which we refer to as the S2QR model for short, with Alice and Bob as the players.
\begin{definition}[S2QR]
\label{def:symmetric2QR}
A symmetric two-player quantum race is a quantum race (Def.~\ref{def:quantumrace}) consisting of two players that is symmetric (Def.~\ref{def:symmetricgame}).
\end{definition}
\noindent
In an S2QR, both players possess identical quantum and classical resources~\cite{LRS19}.
Thus, an S2QR is fully specified by an increasing sequence of probabilities associated with both players~(\ref{eqn:probseq}), and a pair of $K \times K$ payoff matrices, $A$ and $B$.
The player who succeeds first in solving the computational problem receives a payoff of 1, whereas the other player receives a payoff of 0. 
In the case of a tie, the payoff is split evenly between them. 
If the payoff is not split in the case of a tie, then the S2QR is called stingy.

We describe the structure of Alice's and Bob's payoff matrices in a symmetric two-player quantum race. 
The $(a, b)$ element of Alice's payoff matrix~$A$ is her payoff in the case where Alice measures after~$a$ units of time or computational steps and Bob measures after~$b$ units of time or computational steps.
The payoff is determined by the conditional probabilities of each player's success and failure, depending on when each player measures relative to the other.
If Alice measures before Bob ($a<b$),
her payoff corresponds to the probability~$p(a)$ that she succeeds whereas Bob has not yet measured. 
In contrast, if Bob measures first $(a > b)$, Alice's payoff is
\begin{equation}
\label{eqn:QRAlicePayoff}
p(a)(1 - p(b)),
\end{equation}
the joint probability that Alice's measurement is successful and Bob's fails. 
If Alice and Bob measure at the same time ($a=b$), the term 
\begin{equation}
\label{eqn:payoffsplitterm}
\frac1{2} p(a)^2
\end{equation}
is added to Alice's payoff, representing the scenario where both Alice's and Bob's quantum measurements yield a valid proof-of-work and the reward is split.
Alice's payoff matrix is thus conveniently expressed as 
\begin{equation}
\label{eqn:alicepayoffLRS19}
A:\left[k_\text{opt}\right] \times \left[k_\text{opt}\right] \to[0,1]:
(a, b)\mapsto
\begin{cases}
    p(a), & a < b \\
    p(a)(1 - p(b)) + \frac1{2} p(a)^2, & a=b \\
    p(a)(1 - p(b)), & a > b,
\end{cases}
\end{equation}
with~$(a, b)$ being~$(s_1, s_2)$ and the right hand side being~$u_1(s_1, s_2)$ (Def.~\ref{def:payoffmatrix}).
As the game is symmetric, Bob's payoff matrix is $B=A^\intercal$. 
An extension to the S2QR model, studying the case where each player performs two quantum measurements between successive blocks, is presented in Maharshi Ray's PhD thesis~\cite{Ray2020}.

Now we discuss how the S2QR model is applied to the context of mining Bitcoin.
In this setting, Alice and Bob each represents a quantum miner, who are each equipped with identical quantum computers, and compete to solve Problem~\ref{prob:boundedproofofwork} before each other.
The set of pure strategies available to each quantum miner 
is the set of Grover iterations that can be performed before measurement
\begin{equation}
\label{eqn:Sikopt}
S_i=\{1, 2, \dots,k_\text{opt}\},
\end{equation}
where $k_\text{opt}$~(\ref{eqn:koptbitcoin}) is the optimal number of Grover iterations to perform for a given network difficulty~$D$, and~$i$ is the index representing each player.
In this quantum race, Alice and Bob each independently execute Alg.~\ref{alg:grover} with their own choice of~$k \in S_i$ Grover iterations, obtaining a nonce~$x$ as a result of the quantum measurement~(\ref{eqn:measurement}).
Alice or Bob win the quantum race if their own~$x$ satisfies the PoW condition~(Def.~\ref{def:powcondition}). 
As both Alice and Bob have identical resources in this setting, 
the success probability~(\ref{eqn:groversuccessbitcoin}) for each pure strategy $s_i \in S_i$ is the probability that the FQS algorithm returns a nonce constituting a valid proof-of-work given a number of Grover iterations and network difficulty~$D$.
These probabilities form an increasing sequence 
\begin{equation}
\label{eqn:seqkopt}
p(1) < p(2) < \dots  < p(k_{\text{opt}}) \leq 1,
\end{equation}
reflecting the quadratically increasing likelihood of success as the number of Grover iterations increases.
The quantum miner who first finds a valid block header receives a payoff of 1, representing the mining reward, whereas all other quantum miners receive a payoff of 0.
The S2QR model for Bitcoin is based on the assumption that the quantum miners employ the PQMS.
LRS19 model Bitcoin mining as a non-stingy quantum race, where the payoff is split in the case of a tie~(\ref{eqn:alicepayoffLRS19}).
However, the bulk of their analysis considers the stingy variant, which is more tractable.
In contrast, our work permits the use of Sattath's AQMS, so that in the case of a tie, Alice's block wins with probability~$\gamma_\text{S}$ (Def.~\ref{def:sattathspropparameter}).
\section{Approach}
\label{sec:approach}

Now we discuss our approach for determining the impact that quantum miners would have on Bitcoin's security.
We begin by discussing our model, where we consider two quantum miners in the Bitcoin network who compete against each other while mining.
Next, we explain the mathematics for our model, paying special attention to the strategies available to the quantum miners and the construction of their payoff matrices.
Finally, we describe the methods used to compute optimal quantum mining strategies and simulate their impact on Bitcoin's security by estimating the resulting stale rate of the Bitcoin network.
\subsection{Model}
\label{subsec:model}

We present our model as a concept with the mathematics relegated to the next subsection.
We begin by laying the foundations of our model, specifying our extensions to the S2QR model for Bitcoin, as well as the relevant parameters of the Bitcoin network.
Then, we discuss the agents in our model, Alice and Bob.
Finally, we build up the adversarial model we use to study how the agents affect Bitcoin's security. 
\subsubsection{Model foundations}

We first lay the necessary foundations of our model.
We begin by giving an overview of our extensions to the S2QR model for Bitcoin
and explain the connection between our model and Problem~\ref{prob:quantumthreat}.
Then, we list the key Bitcoin network parameters that parameterize our model.
Finally, we explain the regimes for the network difficulty our model considers.

Following the LRS19 approach to restrict to two quantum miners~(Remark~\ref{remark:twoplayers}), we extend the S2QR model for Bitcoin in two ways. 
First, we consider a setting where the quantum miners perform multiple quantum measurements (Def.~\ref{def:qmeas}) between successive blocks, rather than perform a single quantum measurement. 
This choice is motivated by the idea that in a practical setting, quantum miners would likely perform multiple quantum measurements until one of the quantum miners wins (Remark~\ref{remark:weakplayers}).
Second, we consider quantum miners who employ Sattath's AQMS~\cite{Sat20}, which is motivated by the fact that a regulation that all quantum miners in the network must be peaceful is not enforceable~(Remark~\ref{remark:peacefulregulation}).
These extensions to the S2QR model for Bitcoin capture quantum mining behaviour that closer reflects how quantum mining would be performed in reality.
Our goal is to determine the optimal strategies for the quantum miners in this new setting, and study how these optimal strategies affect Bitcoin's security against a 51\% attack.

We discuss the key parameters of the Bitcoin network that we employ in our model. 
Our model accepts the network difficulty~$D$, which influences the success probability for the FQS algorithm~(\ref{eqn:groversuccessbitcoin}) and the optimal number of Grover iterations to perform~$k_\text{opt}$~(\ref{eqn:koptbitcoin}).
Furthermore, we observe how the stale rate~$p_\text{stale}$ changes as a result of the optimal strategies for the quantum miners.
Finally, we assume that the hash rate~$q$ of the Bitcoin network is constant for a given network difficulty~$D$, so that the block arrival times are modelled as a homogeneous Poisson process.

We consider three regimes for the network difficulty in our model.
Recognising that
\begin{equation}
\label{eqn:lowD}
D_0=2^{32},
\end{equation}
for which Problem~\ref{prob:boundedproofofwork} has at least one expected solution,
we now define the difficulty regimes for the network.
\begin{definition}[Network difficulty regimes]
\label{def:networkdifficultyregimes}
The three difficulty regimes for the Bitcoin network are low if $D \leq D_0$, high if $D > D_0$, and ideal if $D > D_0$ with a perfect quantum-phase oracle (Def.~\ref{def:quantumoracle}) encoding the entire block header. 
\end{definition}
\noindent In the ideal regime, the quantum-phase oracle is replaced with a 640-qubit analogue that encodes all the necessary logic required for block validation including Merkle root computation, transaction reordering, \texttt{extraNonce}, timestamp verification, and difficulty target updates. 
By studying this ideal regime, we can model an upper bound on the capabilities of the quantum miners and the theoretical limits of quantum Bitcoin mining.
\subsubsection{Alice and Bob}

Now we discuss Alice and Bob, the quantum miners in our model.
We begin by discussing the roles Alice and Bob play in our model, highlighting their goals.
Then, we go over the resources available to Alice and Bob.
Finally, we lay out and justify some fundamental assumptions about Alice's and Bob's behaviour in our model.

We conceptually explain the roles of Alice and Bob in our model.
Alice and Bob are rational and intelligent quantum miners who are competing in a quantum race~\cite{LRS19} with the goal of successfully mining a block before each other.
For each quantum race, Alice and Bob execute the FQS algorithm over the space of possible nonces for a block using a quantum-phase oracle (Def.~\ref{def:quantumoracle}) encoding the PoW condition~(\ref{eqn:miningcondition}).
After performing the quantum measurement (Def.~\ref{def:qmeas}), if the returned nonce satisfies the PoW condition, then that quantum miner wins the quantum race.
Alice and Bob operate within an otherwise classical Bitcoin network that is unaware of their quantum capabilities.
To maximize their individual rewards, Alice and Bob employ Sattath's AQMS~\cite{Sat20} while competing in this quantum race.
Alice and Bob neither cooperate nor communicate with each other.
To avoid a situation where the quantum miners repeatedly undermine each other's chosen strategy~$s_i \in S_i$~(\ref{eqn:Sikopt}), Alice and Bob compute and employ the same Nash equilibrium~$\sigma^*$ satisfying condition~(\ref{eqn:nasheq}) for the two-player game representing their quantum race using publicly available information. 
We refer to the Nash equilibrium~$\sigma^*$ for competing quantum miners in a two-player game as an optimal quantum mining strategy.
A full mathematical description of Alice's and Bob's quantum race is deferred to~\S\ref{subsec:mathematics}.

We discuss the resources available to Alice and Bob.
Alice and Bob are each equipped with an identical quantum computer that is purpose-built for executing the FQS algorithm for mining Bitcoin.
For each of Alice's and Bob's candidate blocks, they are provided with the corresponding Grover iterate (Def.~\ref{def:groveriterate}), so that the cost of operating the quantum-phase oracle is reduced to queries.
For each candidate block, Alice and Bob execute the FQS algorithm exactly 
\begin{equation}
    \label{eqn:c}
    c \in \mathbb{N}
\end{equation}
times each in search of a valid block header.
The cumulative number of Grover iterations that each quantum miner could perform over all measurements is bounded above by the Grover iteration budget
\begin{equation}
\label{eqn:groveriterbudget}
\kappa := \min\left(k_\text{opt}, k_\text{opt}^{32}\right),
\end{equation}
where~$k_\text{opt}$~(\ref{eqn:koptbitcoin}) is the optimal number of Grover iterations to perform for network difficulty~$D$, and~$k_\text{opt}^{32}$~(\ref{eqn:kopt32}) is the optimal number of Grover iterations to perform for a 32-qubit quantum-phase oracle~(Def.~\ref{def:quantumoracle}).
This choice of~$\kappa$ is motivated by the fact that the 32-qubit quantum-phase oracle constrains the number of Grover iterations performed for~$D > 2^{32}$~(Remark~\ref{remark:maxiterations}).
Furthermore, $\kappa$ is selected such that the problem of how Alice and Bob should optimally allocate their resources over~$c$ quantum measurements is nontrivial.
A resource budget that is too low would concentrate all resources into a single quantum measurement, whereas a resource budget that is too high would allocate~$k_\text{opt}^{32}$ into each of their~$c$ quantum measurements independently.
The propagation parameter
\begin{equation}
\label{eq:propparameter}
\gamma \in [0, 1]
\end{equation}
represents the probability that Alice's block is included in the longest blockchain in the event of a temporary fork, and $1 - \gamma$ represents this same probability for Bob.
Finally, we assume that the cost of all non-query operations, such as state preparation, measurement, and constructing the relevant quantum-phase oracle is negligible.

We make the following assumption about Alice's and Bob's behaviour in our model.
Although employing Sattath's AQMS decreases the effective threshold for a 51\% attack (Remark~\ref{remark:reducedhashrate}), we assume that this does not influence Alice's and Bob's actions.
Instead, we model Alice and Bob as being only interested in beating the other in the quantum race. 
A more complete problem would be to model the interaction between the quantum miners and the Bitcoin network using a game-theoretic framework.
However, such an analysis would require modelling the full state of the Bitcoin network, which includes the behaviour of all nodes, adjustments in the network difficulty, and the many factors that affect block propagation times. 
Thus, we model the direct competition between the quantum miners to study their strategic interactions, which serves as a step towards understanding the full problem.
Furthermore, as the number of pure strategies for Alice and Bob grows combinatorially with~$c$, the evaluation of Alice's and Bob's payoff matrices is computationally expensive. 
Therefore, we introduce a heuristic to reduce the cardinality of their pure strategy spaces, which we formalize in~\S\ref{subsec:mathematics} and outline in~\S\ref{subsec:methods}.
\subsubsection{Adversarial model}

We formalize the adversarial model used in our study.
We first define what a Bitcoin adversary is, as we use this definition to form the foundation of our adversary model. 
Then, we specify the condition under which we consider the Bitcoin network to have failed.
Finally, we identify the adversaries in our model and explain how they affect Bitcoin's security.

To formalize our adversarial model, we first define what a Bitcoin adversary is in the context of our model. 
Whereas standard textbooks define an adversary in general as an agent that generates worst-case inputs to maximize the cost of executing an algorithm~\cite{MotRaj95}, this definition does not capture how an agent might cause the Bitcoin network to fail on a task it was designed to perform. 
Thus, we define a Bitcoin adversary as follows.
\begin{definition}[Bitcoin Adversary ]
\label{def:adversary}
A Bitcoin adversary is an agent who forces the Bitcoin network to fail on some task.
\end{definition}
\noindent We use this definition of a Bitcoin adversary as the foundation for our adversarial model, clarifying the specific task we consider, what it means for the Bitcoin network to fail on performing that task, and how an agent can force such failure.

We identify the task performed on the Bitcoin network relevant to our adversarial model and the conditions for deeming this task to have failed. 
The specific task that we consider is the prevention of a 51\% attack.
We consider the Bitcoin network to be vulnerable to a 51\% attack if the effective threshold for a 51\% attack decreases to 40\%~(\ref{eqn:ghashfrachashrate}) or below.
This threshold corresponds to a stale rate satisfying~(\ref{eqn:stalerate51attack})
\begin{equation}
    \label{eqn:proposition}
    p_\text{stale} \geq \nicefrac1{3},
\end{equation}
which addresses the GHash.io incident~\cite{guardian2014}.

We identify the adversaries in our model and assess how they affect Bitcoin's security.
The adversaries in our model are Alice and Bob, the quantum miners. 
Although Alice and Bob are rational and intelligent quantum miners who are competing in a quantum race (Def.~\ref{def:quantumrace}), the optimal quantum mining strategies they employ increase the Bitcoin network's stale rate via Sattath's AQMS. 
This is not a deliberate attack by Alice and Bob, but is an emergent effect of their optimal quantum mining strategies.
We formalize Alice's and Bob's impact on the Bitcoin network with the following proposition.
\begin{proposition}
\label{prop:quantumminingthreshold}
Given the Grover iteration budget~$\kappa$ (\ref{eqn:groveriterbudget}), two rational and intelligent quantum miners in the Bitcoin network employing optimal quantum mining strategies cannot increase~$p_\text{stale}$ to satisfy~(\ref{eqn:proposition})
\begin{equation}
\label{eqn:stalethird}
p_\text{stale} \geq \nicefrac13
\end{equation}
over the duration of a 51\% attack.
\end{proposition}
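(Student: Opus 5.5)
The plan is to bound the stale rate from above by the probability that a single block event produces a fork, and then use the binomial model of Remark~\ref{remark:binomial} to show that exceeding $\nicefrac13$ over the duration of a 51\% attack is negligible.

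First, I will identify the only mechanism by which Alice and Bob can create stale blocks beyond the classical background rate $p_\text{stale}\approx 0.24\%$. This is Sattath's AQMS: when a block is broadcast, each quantum miner halts and measures its current Grover state. A fork can arise only if at least one quantum miner's measurement succeeds at that moment. At most two extra competing blocks can appear per block event, so each block height contributes at most two stale blocks.

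Second, I will bound the per-measurement success probability using the budget $\kappa$ in~(\ref{eqn:groveriterbudget}). Every measurement of either miner follows at most $\kappa$ cumulative Grover iterations, and $p(k)$ in~(\ref{eqn:groversuccessbitcoin}) is increasing on $[\kappa]_0$. Hence any measurement succeeds with probability at most $p(\kappa)$. In the high regime, I will additionally multiply by the probability~(\ref{eqn:probreduction}) that a valid nonce exists at all, which is about $2^{32}/D$ and tiny. In the low and ideal regimes, $p(\kappa)$ can be close to 1. In those regimes the argument must instead use timing. A Poisson block event (Remark~\ref{remark:arrivaltimes}) triggers an aggressive measurement only while a miner is mid-search, and a miner whose measurement already succeeded has itself produced the block. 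What must be shown, under the Nash-equilibrium mixed strategies $\sigma^*$, is that the probability the \emph{other} miner holds a successful unmeasured state when a block arrives is small. This probability is governed by the distribution of $\mathcal{A}_i,\mathcal{B}_i$ relative to the arrival time. I expect the equilibrium to push both miners to measure early and frequently, as the LRS19 race logic suggests, which makes the window for collisions short relative to $1/\lambda$.

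Third, combining these gives $\mathbb{E}[p_\text{stale}]\le p_0<\nicefrac13$. I will then apply a Chernoff bound to the binomial count of stale blocks over the $\sim 144$ blocks per day (Remark~\ref{remark:averageblocks}) and over the many days required by Remark~\ref{remark:51duration}. This shows that $\Pr[p_\text{stale}\ge\nicefrac13]$ decays exponentially in the number of blocks, so the condition~(\ref{eqn:stalethird}) cannot be sustained over the duration of the attack.

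The main obstacle is the second step in the low and ideal regimes, where a closed form for the equilibrium $\sigma^*$ is unavailable. I would first try to prove a strategy-independent bound: for any allocation of $\kappa$ over $c$ measurements, the fork probability per block is at most $\lambda\,\Delta$, where $\Delta$ is the total quantum runtime of $\kappa$ iterations. If that fails, I would fall back on the paper's approach: compute $\sigma^*$ via Lemke--Howson on the reduced strategy set and estimate $p_\text{stale}$ by simulation, reporting $P_{>\nicefrac13}$ as an empirical certificate.
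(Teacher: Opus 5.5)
\textbf{There is a genuine gap.} It sits in your second step for the low and ideal regimes, which is where the proposition has content.

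\emph{The fork probability per block is not small there.} For the computed equilibria, the paper's simulation gives about $87$--$99$ forks out of $377$--$393$ blocks per day. That is a mean daily stale rate of roughly $0.23$--$0.26$, with $P_{99}$ up to $0.30$. The margin below $\nicefrac13$ is therefore under $0.1$ on average and about $0.03$ at the 99th percentile.

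\emph{Your strategy-independent bound is vacuous.} In this model each Grover iteration takes $\nicefrac{600}{\kappa}$ seconds, so $\Delta=600\,\mathrm{s}=1/\lambda$ and $\lambda\Delta=1$.

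\emph{No strategy-independent bound below $\nicefrac13$ can exist.} Suppose both miners play the budget-feasible pure profile $\bm k=(\kappa)$ with $c=1$ in the low regime, where $p(k)\approx\sin^2(\tfrac{\pi}{2}k/\kappa)$. A rough estimate of the expected number of stale blocks per round is $2\int_0^1 e^{-x}\sin^2(\pi x/2)\,dx+e^{-1}\approx 0.87$. This gives $p_\text{stale}\approx 0.87/1.87\approx 0.47$, well above $\nicefrac13$.

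So in these regimes the statement is a property of the particular equilibrium. Your heuristic about the equilibrium's shape is also contradicted by the computed equilibria:
\begin{itemize}
\item[] for $c=1$ the miners typically spend about half the budget before measuring;
\item[] for $c\geq3$ they concentrate the budget into few measurements, sometimes with zero-iteration measurements;
\item[] in the high regime everything goes into the final measurement.
\end{itemize}
Without a closed form for $\sigma^*$, you cannot produce the $p_0<\nicefrac13$ that your third step needs.

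\textbf{What survives.} The high regime works. Every measurement, scheduled or aggressive, follows at most $k_\text{opt}^{32}$ iterations, and $p$ is increasing there. So each measurement succeeds with probability at most $p(k_\text{opt}^{32})\approx 8.9\times10^{-5}$ for any budget-feasible strategy, and forks are rare regardless of which equilibrium is played. This is more robust than the paper's simulation of a single Lemke--Howson equilibrium on a reduced strategy set. Do not also multiply by (\ref{eqn:probreduction}): here $p(k,D)$ already encodes $D$ through $|M|=\nicefrac{2^{32}}{D}$, so that would double count.

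Your Chernoff step is a cleaner formalization of the paper's informal appeal to Remark~\ref{remark:51duration}. It applies because rounds are i.i.d.\ once strategies are resampled at each block.

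For the low and ideal regimes, however, your fallback is not optional; it is precisely the paper's argument:
\begin{itemize}
\item[] compute an equilibrium by Lemke--Howson;
\item[] simulate $10^6$ days and find $P_{>\nicefrac13}\le 8\times10^{-6}$, nonzero only for $c=1$;
\item[] invoke the requirement that the threshold be sustained over the attack's duration;
\item[] check that $P_{95}<\nicefrac13$ across $D\in 2^{[30,45]}$.
\end{itemize}
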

\noindent We prove this proposition in~\S\ref{sec:results}.
\subsection{Mathematics}
\label{subsec:mathematics}

We discuss the mathematics required to describe our model within a game-theoretic framework.
We begin by discussing the pure strategies available to Alice and Bob and how these pure strategies label Alice's and Bob's payoff matrices.
Then, we provide the mathematics required to derive Alice's and Bob's payoff matrices in the case where they each perform multiple quantum measurements between successive blocks.
Finally, we explain the mathematics required to include the terms related to Sattath's AQMS into Alice's and Bob's payoff matrices.
\subsubsection{Pure strategies}

We discuss the pure strategies available to the quantum miners in our model.
We begin by outlining the set of actions available to the quantum miners for a single quantum measurement. 
Then, we explain how Alice's and Bob's pure strategies are derived from the single quantum measurement case. 
Next, we show how Alice's and Bob's pure strategies label their payoff matrices.

We discuss Alice's and Bob's pure strategies for a single quantum measurement.
Alice and Bob perform~$c$ quantum measurements (Def.~\ref{def:qmeas}) for each instance of the quantum race they are participating in.
For each of these measurements, the set of actions available to Alice and Bob,
\begin{equation}
\label{eqn:strategysetindiv}
S_0 := \left[k_\text{opt}^{32}\right]
,\end{equation}
comprises all possible numbers of Grover iterations that can be performed before the quantum measurement.
Note that this set differs from Eq.~(\ref{eqn:Sikopt}) by including 0, which corresponds to the action of doing nothing, i.e., performing zero Grover iterations.
Each action label~$k \in S_0$ has an associated probability~$p(k)$~(\ref{eqn:groversuccessbitcoin}), which is the probability that measuring the quantum state after performing~$k \in S_0$ Grover iterations is successful.
These probabilities form an increasing sequence
\begin{equation}
\label{eqn:purestrategyincreasingseq}
p(0) < p(1)< \dots < p\left(k_\text{opt}^{32}\right),
\end{equation}
which is a requirement of the quantum algorithms studied by the quantum races model~\cite{LRS19}.

Now we derive Alice's and Bob's pure strategies based on the set of available actions for a single quantum measurement~(\ref{eqn:strategysetindiv}).
Both Alice and Bob perform~$c$ quantum measurements for each instance of the quantum race they are competing in. 
For each quantum race and given the number of quantum measurements~$c$~(\ref{eqn:c}), the set of pure strategies is
\begin{equation} 
\label{eqn:strategysetrestarts}
S_0^c
:=\left\{\bm{k}\in\left[k_\text{opt}^{32}\right]^c:
\mathds1\cdot \bm k
\leq\kappa\right\},\,
\mathds 1 \cdot \bm k := 
\sum_{i=1}^c k_i,
\end{equation}
which represents all~$c$-tuples of numbers of Grover iterations that could be performed by Alice and Bob across all~$c$ quantum measurements, where~$\kappa$ is the cumulative Grover-iteration limitation~(\ref{eqn:groveriterbudget}).
As both Alice and Bob have identical resources, they share the same set of pure strategies~$S_0^c$.
Now we express a special case of the standard stars-and-bars result~\cite{Feller65}
as a lemma for clarity,
noting that this lemma is quite straightforward to prove.
\begin{lemma}[Cardinality of~$S_0^c$]
\label{lemma:cardinality}
The cardinality of~$S_0^c$ is
\begin{equation}
    \label{eqn:Sccardinality}
    |S_0^c|=\binom{\kappa + c}{c}.
\end{equation}
\end{lemma}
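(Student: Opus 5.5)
The plan is to reduce the count to a standard stars-and-bars count by introducing a slack variable. First I fix the interpretation of the per-measurement action set. The text following~(\ref{eqn:strategysetindiv}) says that $S_0$ includes the action $0$ (zero Grover iterations), so I read each component as $k_i\in\left[k_\text{opt}^{32}\right]_0$. With that reading, $S_0^c$ is the set of $c$-tuples of nonnegative integers $\bm k=(k_1,\dots,k_c)$ satisfying $\mathds1\cdot\bm k\leq\kappa$ and $k_i\leq k_\text{opt}^{32}$ for every $i$.

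The first real step is to show that the upper cap $k_i\leq k_\text{opt}^{32}$ is never binding, so that it can be dropped. By~(\ref{eqn:groveriterbudget}), $\kappa=\min\left(k_\text{opt},k_\text{opt}^{32}\right)\leq k_\text{opt}^{32}$. Any nonnegative tuple with $\sum_i k_i\leq\kappa$ therefore automatically has $k_i\leq\kappa\leq k_\text{opt}^{32}$ for each $i$. Hence
\begin{equation*}
S_0^c=\left\{\bm k\in\mathbb{Z}_{\geq0}^c:\ \textstyle\sum_{i=1}^c k_i\leq\kappa\right\}.
\end{equation*}

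Next I introduce the slack variable $k_{c+1}:=\kappa-\sum_{i=1}^c k_i\geq 0$. The map $\bm k\mapsto(k_1,\dots,k_c,k_{c+1})$ is a bijection from $S_0^c$ onto the set of $(c+1)$-tuples of nonnegative integers summing to exactly $\kappa$; its inverse simply drops the last coordinate.

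Finally I invoke stars and bars~\cite{Feller65}. Placing $\kappa$ stars and $c$ bars in a row of $\kappa+c$ positions, and choosing which $c$ positions hold bars, gives
\begin{equation*}
\binom{\kappa+(c+1)-1}{(c+1)-1}=\binom{\kappa+c}{c}
\end{equation*}
such $(c+1)$-tuples, which is~(\ref{eqn:Sccardinality}). As a check, one could instead sum $\binom{m+c-1}{c-1}$ over $m=0,\dots,\kappa$ and apply the hockey-stick identity, which gives the same result.

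There is no genuine obstacle. The only delicate point is the bookkeeping in the first two steps: the inclusion of $0$ in each component (the notation $\left[k_\text{opt}^{32}\right]$ versus $\left[k_\text{opt}^{32}\right]_0$), and the observation that $\kappa\leq k_\text{opt}^{32}$ makes the per-component cap redundant. If zero iterations were excluded, the count would instead be $\binom{\kappa}{c}$. So the formula as stated hinges on reading $S_0$ as containing $0$, consistent with the remark after~(\ref{eqn:strategysetindiv}).
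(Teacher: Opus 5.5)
Your proof is correct and follows the paper's argument: a slack variable $k_{c+1}=\kappa-\sum_{i=1}^c k_i\geq 0$ converts the inequality into an equality for $(c+1)$-tuples, and stars and bars then gives $\binom{\kappa+c}{c}$. Your extra bookkeeping makes explicit two points the paper's proof uses only implicitly: you take each $k_i$ in $\left[k_\text{opt}^{32}\right]_0$, and you note that $\kappa\leq k_\text{opt}^{32}$ makes the per-component cap redundant (under the paper's literal notation $\left[k_\text{opt}^{32}\right]$, which excludes $0$, the count would indeed be $\binom{\kappa}{c}$).
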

\begin{proof}
Our proof proceeds by the following combinatorial argument. 
For
\begin{equation}
    k_{c+1} := \kappa - \sum_{i=1}^c k_i, \; k_{c+1} \geq 0,
\end{equation}
a slack variable,
\begin{equation}
    \label{eqn:inequalitySc}
    \sum_{i=1}^c k_i \leq \kappa
\end{equation}
is replaced by the equality
\begin{equation}
    \label{eqn:equalitySc}
    \sum_{i=1}^{c+1} k_i=\kappa.
\end{equation}
Each $c$-tuple~$\bm k$ satisfying Eq.~(\ref{eqn:inequalitySc}) corresponds to a unique $(c + 1)$-tuple satisfying Eq.~(\ref{eqn:equalitySc}).
The number of such $(c + 1)$-tuples is
\begin{equation}
    \label{eqn:S0ccardinality}
    |S_0^c|=\binom{\kappa + c + 1 - 1}{c + 1  - 1}=\binom{\kappa + c}{c},
\end{equation}
which follows from the standard stars-and-bars result~\cite{Feller65}.
\end{proof}

We now explain how we manage the computational cost imposed from the combinatorial growth of~$|S_0^c|$~(\ref{eqn:S0ccardinality}).
As~$|S_0^c|$~(\ref{eqn:S0ccardinality}) determines the dimension of Alice's and Bob's payoff matrices, this combinatorial growth renders the evaluation of Alice's and Bob's payoff matrices computationally expensive.
To address this issue, we introduce a heuristic in~\S\ref{subsec:methods} that restricts the dimension of Alice's and Bob's payoff matrices. 
The motivation for this heuristic comes from the success probability~$p(k)$~(\ref{eqn:groversuccessbitcoin}) of the FQS algorithm.
Since~$p(k)$ grows quadratically with~$k$ and is maximized with~$\kappa$ Grover iterations, performing a quantum measurement after executing the FQS algorithm with~$k \ll \kappa$ Grover iterations produces~$p(k) \ll 1$.
This low success probability suggests that if a quantum miner measures early~$(k < \kappa)$ to gain a temporal advantage, then that quantum miner will still perform close to~$k \lesssim \kappa$ Grover iterations. 
Motivated by this, we restrict our attention to a subset of pure strategies for Alice and Bob such that the dimension is bounded by 
\begin{equation}
    \label{eqn:dimensionheuristic}
    \left|\left(S_0^{c}\right)^{\prime}\right| \leq \binom{n_q + 1+ c}{c} < \left|S_0^c\right|,
\end{equation}
which remains combinatorial in~$c$ but is substantially smaller for~$n_q \ll \kappa$.
This heuristic significantly lowers the computational cost of evaluating Alice's and Bob's payoff matrices when computing optimal quantum mining strategies.

Now we explain how Alice's and Bob's pure strategies label their payoff matrices.
Each pure strategy
\begin{equation}
    \label{eqn:purestrategyk}
    \bm{k}=(k_1, k_2, \dots, k_c) \in S_0^c
\end{equation}
labels the rows and columns of Alice's and Bob's payoff matrices.
This labelling implies that Alice's and Bob's payoff matrices take the form of a higher-dimensional tensor.
As the Lemke-Howson algorithm~\cite{LH64} requires payoff matrices as input, we label the rows and columns of Alice's and Bob's payoff matrices in the following way to preserve the matrix structure. 
Let
\begin{equation}
    \label{eqn:S0clexicographic}
    S_0^c=\left\{\bm{k}^{(1)}, \bm{k}^{(2)}, \dots, \bm{k}^{\left|S_0^c\right|} \right\}
\end{equation}
be the elements of~$S_0^c$~(\ref{eqn:strategysetrestarts}) in lexicographic ordering,
with
\begin{equation}
    \label{eqn:kj}
    \bm{k}^{(j)}
\end{equation}
being the~$j$th pure strategy~(\ref{eqn:purestrategyk}) of~$S_0^c$ in this ordering.
We index the rows and columns of Alice's and Bob's payoff matrices by
\begin{equation}
    \label{eqn:alphabeta}
    \alpha, \beta \in \left[\left|S_0^c\right|\right], \; \alpha \mapsto \bm{k}^{(\alpha)}, \; \beta \mapsto \bm{k}^{(\beta)}
\end{equation}
so that 
\begin{equation}
    \label{eqn:Aalphabetarowcol}
    A(\alpha, \beta)=A\left(\bm{k}^{(\alpha)}, \bm{k}^{(\beta)}\right)
\end{equation}
is the payoff Alice receives when she plays strategy~$\bm{k}^{(\alpha)}$ and Bob plays~$\bm{k}^{(\beta)}$.
Alice is the row player in her own payoff matrix, and Bob is the column player. 
Conversely, in Bob's payoff matrix, he is the row player, and Alice is the column player.
As the elements of~$S_0^c$ label the rows and columns of Alice's and Bob's payoff matrices, both Alice's and Bob's payoff matrices have dimension
$\left|S_0^c\right| \times \left|S_0^c\right|$~(\ref{eqn:Sccardinality}).
\subsubsection{Payoff matrices}

We describe the structure of Alice's and Bob's payoff matrices for our model.
We start by discussing how we express Alice's and Bob's total payoff matrices as the sum of payoff matrices corresponding to each of their~$c$ quantum measurements. 
Next, we derive the structure of Alice's and Bob's payoff matrices for a single quantum measurement. 
Then, we extend this logic to derive the structure of Alice's and Bob's payoff matrices for~$c$ quantum measurements.

Now we describe the structure of Alice's and Bob's payoff matrices based on their~$c$ quantum measurements.
We first describe this structure at a high level and then later specify the payoff functions that determine the payoff matrix entries. 
First consider that Alice and Bob play the strategies
\begin{equation}
\label{eqn:abtuples}
\underbrace{\left(a_1, a_2, \dots, a_c\right)}_a, \;\; \underbrace{\left(b_1, b_2, \dots, b_c\right)}_b,
\end{equation}
respectively, 
meaning that Alice executes Alg.~\ref{alg:grover} with~$a_1$ Grover iterations and gets~$x_1$~(\ref{eqn:measurement}), then runs Alg.~\ref{alg:grover} again for~$a_2$ Grover iterations to get~$x_2$ and so on, until the~$c$th quantum measurement.
For that final measurement, she gets~$x_c$.
Thus, she obtains the string of measurement outcomes
\begin{equation}
\label{eqn:alicemeasurementoutcomes}
\left(x_i^A(a)\right)_{i=1}^c
:=\left(x_1^\text{A}(a), x_2^\text{A}(a), \dots ,x_c^\text{A}(a)\right)
\end{equation}
with~$x_i(a)$
the $i^\text{th}$ measurement outcome depending on action~$a_i$ of strategy~$a$~(\ref{eqn:abtuples}).
Similarly, Bob executes Alg.~\ref{alg:grover} under his strategy~$b$~(\ref{eqn:abtuples}) to obtain the string of measurement outcomes 
\begin{equation}
\label{eqn:bobmeasurementoutcomes}
\left(x_i^B(b)\right)_{i=1}^c
:=
\left(x_1^\text{B}(b), x_2^\text{B}(b), \dots, x_c^\text{B}(b)\right),
\end{equation}
which is a tuple of the same length as Eq.~(\ref{eqn:alicemeasurementoutcomes}).
\par 
Here, we refer to the $i^\text{th}$ quantum measurement as Alice and Bob executing~$a_i \in a$ and $b_i \in b$ Grover iterations, respectively, and obtaining~$x_i^\text{A}$ and~$x_i^\text{B}$ for $i \in [c]$~(\ref{eqn:[x]}).  
For each~$i \in [c]$,  
\begin{equation}
\label{eqn:randomvariables}  X_i^\text{A}(a) := 
\begin{cases}
    1 & x_i^A(a) \in M \\
    0 & \text{otherwise}
\end{cases}, \;\;
X_i^\text{B}(b) := 
\begin{cases}
    1 & x_i^B(b) \in M \\
    0 & \text{otherwise}
\end{cases},
\end{equation}
are random variables indicating whether Alice's or Bob's $i^\text{th}$ quantum measurement yields a valid proof-of-work. 
In our model, Alice's quantum measurements are independent of each other, and the same is true for Bob. 
That is, 
\begin{equation}
X_i^\text{A}(a) \indep X_j^\text{A}(a), \; X_i^\text{B}(b)\indep X_j^\text{B}(b) \; \text{for }i \neq j,
\end{equation}
where~$\indep$ denotes probabilistic independence.
In general,
Alice's and Bob's measurements are not independent of each other;
i.e., 
\begin{equation}
    X_i^\text{A}(a) \not\!\indep X_j^\text{B}(b), \;\; X_i^\text{B}(b) \not\!\indep X_j^\text{A}(a),
\end{equation}
as the outcomes of Alice's and Bob's quantum measurements can be correlated through Sattath's AQMS. \par
We now describe how Alice's and Bob's payoff matrices are expressed as sums of payoff matrices corresponding to each of their~$c$ quantum measurements.
\begin{example}[Additive structure of payoff matrices]
Suppose that for~$c=2$ Alice and Bob play the strategies
\begin{equation}
    \label{eqn:alicebobstratexample}
    a=(10, 20) \;\; b=(15, 25)
\end{equation}
respectively. 
Excluding Sattath's AQMS, Alice's payoff is determined from two disjoint events. 
Alice either succeeds after executing 10 Grover iterations, or she fails after 10 but succeeds after 20 whereas Bob fails after 15. 
These two events represent Alice's first and second quantum  measurements, respectively, and the payoff Alice receives from each contributes to her total payoff for the given strategy profile~$(a, b)$.
Thus, Alice's total payoff can be expressed as a sum of the payoffs she receives from each quantum measurement. 
\end{example}
Given that Alice's quantum measurements are independent of each other, and the same is true for Bob, we associate to each quantum measurement~$i \in [c]$~(\ref{eqn:[x]}) a pair of payoff matrices (Def.~\ref{def:payoffmatrix}), $A_i$ and $B_i$, that give Alice's and Bob's respective payoffs for their~$i^\text{th}$ quantum measurement. 
In other words, we express Alice's and Bob's total payoff matrices as 
\begin{equation}
    \label{eqn:payoffmatrixsplit}
    A := \sum_{i=1}^c A_i, \;\; B := \sum_{i=1}^c B_i.
\end{equation}
Furthermore, we express each~$A_i$ as 
\begin{equation}
    \label{eqn:payoffmatrixdecomposition}
    A_i := A_i^{\text{meas}} + A_i^\text{AQMS}, \;\; B_i := B_i^{\text{meas}} + B_i^\text{AQMS},
\end{equation}
where
\begin{equation}
\label{eqn:ABmeas}
A_i^\text{meas},\,
B_i^\text{meas}
\end{equation}
are the payoff matrices encoding Alice's and Bob's payoffs from their~$i^\text{th}$ quantum measurements alone, and 
\begin{equation}
\label{eqn:ABAQMS}
A_i^\text{AQMS},\,
B_i^\text{AQMS}
\end{equation}
are the payoff matrices encoding Alice's and Bob's payoffs from their deployment of Sattath's AQMS~\cite{Sat20} on their~$i^\text{th}$ quantum measurement.
Consequently, the payoff functions~$u(a, b), v(a, b)$ (Remark~\ref{remark:uvnotation}) that determine the entries of~$A$ and~$B$ (Def.~\ref{def:payoffmatrix}) have the same additive structure. 
That is, 
\begin{equation}
    \label{eqn:payofffunctionsplit}
    u(a, b; D, \gamma_\text{S}) := \sum_{i=1}^c u_i(a, b; D, \gamma_\text{S}), \;\;
    v(a, b; D, \gamma_\text{S}) := \sum_{i=1}^c v_i(a, b; D, \gamma_\text{S}),
\end{equation}
with 
\begin{equation}
    \label{eqn:uivi} 
    u_i(a, b; D, \gamma_\text{S}) := u_i^\text{meas}(a, b;D) + u_i^\text{AQMS}(a, b; D, \gamma_\text{S}), \;\; v_i(a, b; D, \gamma_\text{S}) := v_i^\text{meas}(a, b;D) + v_i^\text{AQMS}(a, b; D, \gamma_\text{S}),
\end{equation}
giving Alice's and Bob's payoffs from their~$i^\text{th}$ quantum measurement and from the deployment of Sattath's AQMS, respectively, for the strategy profile~$(a, b)$~(\ref{eqn:abtuples}).

We discuss the mathematics of incorporating Alice's and Bob's~$c$ quantum measurements into their payoff matrices.
To formulate the payoff matrices for Alice and Bob in this setting, we extend the framework for a single quantum measurement established for S2QR (Def.~\ref{def:symmetric2QR})~\cite{LRS19}.
For a single quantum measurement, Alice's and Bob's payoffs depend on the number of Grover iterations performed by each quantum miner.
For example,
if Alice performs fewer Grover iterations than Bob, her payoff is simply the probability that her quantum measurement is successful
\begin{equation}
    \label{eqn:alicemeasurementsuccess}
    P(X_\text{A}=1),
\end{equation}
with
\begin{equation}
    \label{eqn:P}
    P(X)
\end{equation}
denoting the probability that event~$X$ occurs. 
In the context of the FQS algorithm, this probability is~$p(k)$~(\ref{eqn:groversuccessbitcoin}), which we remind is implicitly dependent on~$D$.
Conversely,
if Bob performs the same number of Grover iterations as Alice or fewer, then Alice's payoff is the joint probability that her measurement is successful whereas Bob's measurement failed,
\begin{equation}
\label{eqn:successfail}
P(X_i^\text{A}(a)=1, X_i^\text{B}(b)=0)=P(X_i^\text{A}(a)=1)P(X_i^\text{B}(b)=0),
\end{equation}
where we factorise the probabilities because Sattath's AQMS does not contribute to these terms.
We note that as these probabilities are  dependent on the value of the network difficulty~$D$~ (\ref{eqn:groversuccessbitcoin}), $u_i^\text{meas}(a, b)$, $u_i^\text{AQMS}(a, b)$, $v_i^\text{meas}(a, b)$, and~$v_i^\text{AQMS}(a, b)$ are  functions of~$D$ too.

For multiple quantum measurements, we extend the logic of a single quantum measurement to account for all possible outcomes. 
For Alice playing strategy~$a$ and Bob playing strategy~$b$~(\ref{eqn:abtuples}), we consider the entries of~$A_i^\text{meas}$ and $B_i^\text{meas}$ to determine~$A^\text{meas}$ and $B^\text{meas}$~(\ref{eqn:ABmeas}).
The cumulative number of Grover iterations performed by Alice and Bob up to their~$i^\text{th}$ quantum measurement is 
\begin{equation}
\label{eqn:ABcumulative}
\mathcal{A}_i := \sum_{j=1}^i a_j, \;\;
\mathcal{B}_i := \sum_{j=1}^i b_j, \;\; \mathcal A_0 := 0, \;\; \mathcal B_0 := 0,
\end{equation}
respectively.
Furthermore, 
\begin{equation}
\label{eqn:boblastmeasurement}
    [c]_0 \ni \ell_\text{A}(a, b) := \max\{j \in [c]_0: \mathcal{A}_j < \mathcal{B}_i\}, \;\; [c]_0 \ni \ell_\text{B}(a, b) := \max\{j \in [c]_0: \mathcal{B}_j < \mathcal{A}_i\}
\end{equation}
are the indices in Alice's and Bob's strategies~(\ref{eqn:abtuples}) that indicate the last quantum measurement performed by Alice or Bob before the other's~$i^\text{th}$ quantum measurement.
For Alice's~$i^\text{th}$ quantum measurement, the entry of~$A_i^\text{meas}$ corresponding to the strategy profile
$(a, b)$~(\ref{eqn:abtuples}) is
\begin{equation}
    \label{eqn:alicepayoffrestartpart}
    u_i^\text{meas}(a,b; D)
    :=P\left(X_i^\text{A}(a)=1\right) 
    \prod_{r=1}^{i-1} P\left(X_r^\text{A}(a)=0\right)\prod_{r=1}^{\ell_\text{B}(a, b)} P\left(X_r^\text{B}(b)=0\right)
\end{equation}
with~$\ell_\text{B}$
dependent on $b$~(\ref{eqn:boblastmeasurement}) and~$P$ implicitly dependent on~$D$. \par
The entry of $A_i^\text{meas}$~(\ref{eqn:alicepayoffrestartpart}) is the joint probability that Alice's $i^\text{th}$ quantum measurement is successful, but failed her first $i-1$ quantum measurements, and Bob failed his first~$\ell_\text{B}$ quantum measurements.
The analogous entry of~$B_i^\text{meas}$,
\begin{equation}
\label{eqn:bobpayoffrestartpart}  
    v_i^\text{meas}(a, b; D) := 
    P\left(X_i^\text{B}(b)=1\right) \prod_{r=1}^{i-1} P\left(X_r^\text{B}(b)=0\right) \prod_{r=1}^{\ell_A(a, b)} P\left(X_r^\text{A}(a)=0\right),
\end{equation}
reflects the same logic applied to Bob's~$i^\text{th}$ quantum measurement. 
Thus, our formulation captures the cumulative nature of our model and extends the S2QR framework~\cite{LRS19} to multiple quantum measurements.
\subsubsection{AQMS}

Now we discuss the mathematics required to describe the entries~$A_i^{\text{AQMS}}$ and~$B_i^\text{AQMS}$.
We first explain how Sattath's AQMS affects Alice's and Bob's payoffs based on the timings of Alice's and Bob's most recent quantum measurements relative to each other. 
Next, we consider the special case where Alice and Bob perform a quantum measurement at the same time. 
Then, we discuss additional AQMS terms and provide our justification for their omission from Alice's and Bob's total payoff.

Now we explain the mathematics required to describe the entries of~$A_i^\text{AQMS}$ and~$B_i^\text{AQMS}$~(\ref{eqn:payoffmatrixdecomposition}).
When a quantum miner employs Sattath's AQMS, they perform a quantum measurement with fewer Grover iterations~$k$ than in their chosen strategy.
\begin{fact}
\label{fact:lB<c}
As Alice and Bob perform only~$c$ quantum measurements per our model, Bob cannot respond to Alice with Sattath's AQMS if he has exhausted all~$c$ quantum measurements, so~$\ell_\text{B}<c$.
\end{fact}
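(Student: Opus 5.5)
The statement is Fact~\ref{fact:lB<c}: whenever Bob responds to Alice's $i$th measurement with Sattath's AQMS, $\ell_\text{B}(a,b)<c$. The plan is to unpack the definition of $\ell_\text{B}$ in Eq.~(\ref{eqn:boblastmeasurement}) and show that the AQMS response is only available when Bob still has an unperformed quantum computation in progress at the moment Alice measures.

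\emph{First step: model the AQMS response.} An AQMS response by Bob to Alice's $i$th measurement means the following. Bob is midway through some measurement index $j$, with $\mathcal{B}_{j-1}\le$ (elapsed iterations) $<\mathcal{B}_j$ at Alice's cumulative time $\mathcal{A}_i$. On hearing Alice's block, he truncates that run and measures early. This requires such a $j\in[c]$ to exist, namely a measurement Bob has begun but not yet completed.

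\emph{Second step: relate that index to $\ell_\text{B}$.} By the definition of $\ell_\text{B}$ as the largest $j\in[c]_0$ with $\mathcal{B}_j<\mathcal{A}_i$, the measurement in progress at time $\mathcal{A}_i$ has index $\ell_\text{B}+1$. If $\ell_\text{B}=c$, then $\mathcal{B}_c<\mathcal{A}_i$, so Bob has already performed all $c$ of his computations (his budget under Eq.~(\ref{eqn:c})) before Alice measures. No quantum state is left to measure aggressively, so there is no AQMS term $v_i^\text{AQMS}$ from Bob's side. Contrapositively, a nonzero AQMS contribution forces $\ell_\text{B}+1\le c$, that is, $\ell_\text{B}<c$. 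Since $\ell_\text{B}\in[c]_0$ by construction, the complementary bound $\ell_\text{B}\ge 0$ is automatic.

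\emph{Main obstacle.} The only delicate point is the boundary case $\mathcal{B}_j=\mathcal{A}_i$, a simultaneous measurement. The strict inequality in Eq.~(\ref{eqn:boblastmeasurement}) excludes this index from $\ell_\text{B}$. I would handle it as a tie rather than an AQMS response, consistent with the separate treatment of simultaneous measurements announced at the start of the AQMS subsection. With ties set aside, the strict inequality gives exactly the conclusion $\ell_\text{B}<c$ on the AQMS terms. The symmetric statement $\ell_\text{A}<c$ for Alice responding to Bob follows by swapping roles.
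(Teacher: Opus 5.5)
Your argument is correct and is essentially the paper's own one-line justification made explicit: $\ell_\text{B}=c$ holds exactly when $\mathcal{B}_c<\mathcal{A}_i$, so Bob has no unmeasured state left and the AQMS contribution vanishes, as in the last case $\mathcal{A}_i>\mathcal{B}_c$ of Eq.~(\ref{eqn:alicepayoffgeneralAS}). Two minor slips. First, the term that carries Bob's AQMS response and $\ell_\text{B}$ is Alice's $u_i^\text{AQMS}$ in Eq.~(\ref{eqn:alicepayoffAQMSterm}), not $v_i^\text{AQMS}$. Second, ties need not be set aside, since $\mathcal{A}_i=\mathcal{B}_j$ for some $j\in[c]$ already gives $\mathcal{B}_c\ge\mathcal{A}_i$ and hence $\ell_\text{B}<c$.
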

\noindent
For Alice's~$i^\text{th}$ quantum measurement, the entry of~$A_i^\text{AQMS}$ corresponding to the strategy profile~$(a, b)$
\begin{equation}
    \label{eqn:alicepayoffAQMSterm}
    u_i^\text{AQMS}(a, b; \gamma_\text{S}) := 
    \gamma_\text{S} P\left(X_i^\text{A}(a)=1\right) \left(\prod_{r=1}^{i-1} P\left(X_r^\text{A}(a)
   =0\right)\right)\left(\prod_{r=1}^{\ell_\text{B}(a, b)} P\left(X_r^\text{B}(b)=0\right)\right) P\left(X^\text{B}_{\text{AQMS}}(b)=1\right),
\end{equation}
for~$\gamma_\text{S}$
Sattath's propagation parameter (Def.~\ref{def:sattathspropparameter}),
is the joint probability that Alice succeeds on her~$i^\text{th}$ quantum measurement after failing all previous~$i-1$ quantum measurements, 
whereas Bob fails his first~$\ell_\text{B}$ quantum measurements but then adopts Sattath's AQMS and succeeds after performing~$\mathcal{A}_i - \mathcal{B}_{\ell_\text{B}}$ Grover iterations.
Here,
\begin{equation}
    \label{eqn:XBAQMS}
    X^\text{B}_\text{AQMS}(b)
\end{equation}
is a random variable denoting the outcome of Bob's measurement after employing Sattath's AQMS.

We consider now the special case for which Alice and Bob perform a measurement at the same time as a result of their chosen strategies.
In other words, $\mathcal{A}_i=\mathcal{B}_j$, so
\begin{equation}
\label{eqn:alicepayoffAQMStermsametime}
u_i^\text{AQMS}(a, b, \gamma_\text{S})=
\gamma_\text{S} P\left(X_i^\text{A}(a)=1\right) P\left(X_j^\text{B}(b)=1\right)\left(\prod_{r=1}^i P(X^\text{A}_r(a)=0)\right)\left(\prod_{r=1}^j P(X^\text{B}_r(b)=0)\right) 
\end{equation}
is the entry of~$A_i^\text{AQMS}$ corresponding to the strategy pair~$(a, b)$ in this case.  
The entries of~$B_i^\text{AQMS}$ are derived analogously by swapping the roles of Alice and Bob~(\ref{eqn:alicepayoffAQMSterm}, \ref{eqn:alicepayoffAQMStermsametime}) and replacing~$\gamma_\text{S}$ with $1 - \gamma_\text{S}$.

We explain why certain payoffs related to Sattath's AQMS are omitted in our model.
We note that, in theory, the entries of~$A_i^\text{AQMS}$~(\ref{eqn:alicepayoffAQMSterm}) should also include the payoff Alice receives after employing Sattath's AQMS~\cite{Sat20} when Bob yields a proof-of-work. 
This additional payoff, 
\begin{equation}
\label{eqn:additionalAQMSterm}
\gamma_\text{S} P\left(X_i^\text{B}(b)=1\right) \left(\prod_{r=1}^{i-1} P(X_r^\text{B}(b)=0)\right) \left(\prod_{r=1}^{\ell_\text{A}(a, b)}P(X_r^\text{A}(a)=0)\right) P(X_\text{AQMS}^\text{A}(a)=1),
\end{equation}
is structurally identical to Eq.~(\ref{eqn:alicepayoffAQMSterm}) but with the labels for Alice and Bob reversed.
However, incorporating this payoff into the entries of~$A_i^\text{AQMS}$ (and similarly for~$B_i^\text{AQMS}$) would require knowing the total ordering of Alice's and Bob's quantum measurements.
This knowledge is required because the number of Grover iterations Alice performs when deploying Sattath's AQMS depends on the timing of Bob's quantum measurements relative to all of Alice's. 
The structure of Alice's and Bob's payoff matrices~(\ref{eqn:payoffmatrixsplit}) either encodes the relative timings of Alice's quantum measurements with respect to Bob's in Alice's payoff matrix, or vice versa for Bob, but not both. 
As the total number of~$c$-tuple orderings grows as $\binom{2c}{c}$ for each of Alice and Bob, a closed-form expression for Alice's and Bob's payoff matrices is complicated and unnecessary, so we instead focus on a numerical solution.

We now discuss the impact of omitting these additional payoffs from Alice's and Bob's payoff matrices on our results.
Omitting these payoffs alters the specific optimal quantum mining strategies we compute. 
The entries of~$A_i^\text{AQMS}$ and~$B_i^\text{AQMS}$ account for Bob deploying Sattath's AQMS in Alice's payoff matrix, and Alice deploying Sattath's AQMS in Bob's payoff matrix, but not the reverse in either case. 
We expect these additional payoffs to be higher-order corrections, and,
although we have not formally studied their impact, their exclusion allows us to write a compact closed-form expression for Alice's and Bob's payoff matrices. 
We leave an analysis including all terms in Alice's and Bob's payoff matrices for future work.
\subsection{Methods}
\label{subsec:methods}

We outline the methods used to compute optimal mining strategies for Alice and Bob and the subsequent 51\% threshold.
We begin by discussing how the entries of Alice's and Bob's payoff matrices are computed and how we manage the computationally prohibitive dimension of these matrices.
Next, we explain how the optimal quantum mining strategies are computed. 
Then, we describe the simulations used to estimate how the employment of these optimal quantum mining strategies affect Bitcoin's security.
\subsubsection{Computing payoff matrices}

We describe how we compute Alice's and Bob's payoff matrices. 
We begin by explaining how the individual entries of each matrix are computed.
Next, we describe the heuristic used to sample Alice's and Bob's pure strategies. 
Finally, we explain how Alice's and Bob's payoff matrices are computed based on sampling.

We first describe how we compute the individual entries of Alice's and Bob's payoff matrices.
Each entry of Alice's payoff matrix corresponds to a strategy profile~$(a, b) \in S_0^c \times S_0^c$~(\ref{eqn:strategysetrestarts}) and is computed by evaluating a function
\begin{equation}
    \label{eqn:f_A}
    f_A : S_0^c \times S_0^c \times \mathbb{R}^+ \times [0, 1] \to \mathbb{R} : \left(a, b, D, \gamma_\text{S}\right) \mapsto \sum_{i=1}^c \left(u_i^\text{meas} \left(a, b; D\right) + u_i^\text{AQMS}\left(a, b; D, \gamma_\text{S}\right)\right)
\end{equation}
that returns a 64-bit float representing Alice's payoff for the corresponding strategy profile~$(a, b)$ given network difficulty~$D$ and Sattath's propagation parameter~$\gamma_\text{S}$. 
The corresponding entry for Bob's payoff matrix is computed by evaluating~$f_A(b, a, D, 1 - \gamma_\text{S})$.

Now we discuss how we sample Alice's and Bob's pure strategies when evaluating their payoff matrices.
As the dimension of Alice's and Bob's payoff matrices grows combinatorially with~$c$~(Lemma~\ref{lemma:cardinality}), evaluating Alice's and Bob's payoff matrices in their entirety is computationally expensive. 
To reduce their dimensionality, we apply a heuristic to the set of pure strategies for Alice and Bob~(\ref{eqn:strategysetrestarts}).
Specifically, we construct a subset of Grover iteration counts from~$S_0$~(\ref{eqn:strategysetindiv}) using a quadratic spacing, 
\begin{equation}
    \label{eqn:groveriterquadrsubset}
    S_0^\prime := \left\{\left\lfloor 1 - \left(1 - \frac{i}{n_q}\right)^2 k_\text{opt}^{32}\right\rfloor : i \in [n_q]_0\right\} \subseteq S_0,
\end{equation}
with~$n_q$ being the number of Grover iteration counts sampled from~$S_0$.
With this, Alice's and Bob's reduced pure strategies,
\begin{equation}
    \label{eqn:purestrategyquadrsubset}    
    (S_0^c)^\prime := \left\{\bm k \in (S_0^\prime)^c : \mathds{1} \cdot \bm k  \leq \kappa\right\} \subseteq S_0^c,
\end{equation}
are the subset of~$c$-tuples of Grover iterations drawn from~$S_0^\prime$~(\ref{eqn:groveriterquadrsubset}) that are permitted under the cumulative Grover iteration budget~$\kappa$.
To further reduce the dimensionality of Alice's and Bob's payoff matrices, we also exclude pure strategies that are strictly dominated (Def.~\ref{def:dominatedstrategy}), as a rational player will not play them.

Now we describe how the sampled pure strategy spaces are used to compute Alice's and Bob's payoff matrices.
Using the sampled pure strategy spaces, we evaluate $f_A(a, b; D, \gamma_\text{S})$ for each strategy profile
\begin{equation}
\label{eq:strategyprofile}
(a, b) \in (S_0^c)^\prime \times (S_0^c)^\prime,
\end{equation}
and similarly for Bob.
We note that we decrease the number of sampled Grover iteration counts~$n_q$ as~$c$ increases to ensure that evaluating Alice's and Bob's payoff matrices is computationally manageable. 
The resulting payoff matrices have dimensions that are small enough such that the payoff matrices can be stored entirely in memory (requiring only a few gigabytes), further allowing the Lemke-Howson algorithm~\cite{LH64, NRTV07} to finish executing within a practical amount of time (on the order of hours). 
\subsubsection{Optimal quantum mining strategy computation}

We explain how the optimal quantum mining strategies are computed using Alice's and Bob's payoff matrices.
We start by describing how the Lemke-Howson algorithm is implemented using the Gambit library.
Next, we discuss the specific parameters used for computing optimal quantum mining strategies via this algorithm.
Then, we explain how we improve the precision of the computed Nash equilibria via Gambit's \texttt{Rational} class.

We discuss how the Lemke-Howson algorithm is implemented in our code.
To compute the optimal quantum mining strategies for Alice and Bob, we use the Lemke Howson algorithm implemented as the \texttt{gambit-lcp} method in the \texttt{pygambit} library~\cite{SavTur25}.
We use Alice's and Bob's payoff matrices as input into this method.
Due to the exponential worst-case runtime of the Lemke-Howson algorithm~\cite{LH64}, we limit the number of Nash equilibria computed with the \texttt{stop\_after} parameter and compute only one exact Nash equilibrium for each value of~$c$. \par

We now describe the parameters used in computing the optimal quantum mining strategies.
We restrict ourselves to~$c \leq 4$, since for larger values, the runtime of the Lemke-Howson algorithm becomes prohibitively long, often taking days to compute a single Nash equilibrium.
For computing the optimal quantum mining strategies, the number of sampled Grover iteration counts~$n_q$~(\ref{eqn:groveriterquadrsubset}) was chosen so that Alice's and Bob's payoff matrices had similar dimensions for~$c \in [2, 3, 4]$. 
A smaller dimension was used for~$c=1$, which was sufficient for comparing optimal quantum mining strategies to this base case.
Computing Nash equilibria for~$c > 4$ requires decreasing the value of~$n_q$ to a degree such that~$S_0^\prime$~(\ref{eqn:groveriterquadrsubset}) is no longer representative of the full pure strategy space.
We compute Nash equilibria for the simplest case for~$c=1$ to serve as a baseline for comparison with the multiple-measurement cases. 
For the low-difficulty regime, we assign
\begin{equation}
\label{eqn:assignDlow}
D\gets3\,651\,011.631,
\end{equation}
which is the value of the network difficulty from 26 February 2013~\cite{difficulty}.
For the high-difficulty and ideal regimes, we assign
\begin{equation}
\label{eqn:assignDhigh}
D\gets119\,116\,256\,505\,723.5,
\end{equation}
which is the value from 9 May 2025~\cite{difficulty}.
We note that it is expected that the network difficulty will continue to increase beyond this value due to technological advancements in mining hardware~\cite{Ant14}.
For all computed Nash equilibria, we assign
\begin{equation}
\gamma_\text{S} \gets 0.5,
\end{equation}
meaning that Alice and Bob have equal probability of their blocks being appended to the longest branch of the blockchain during a temporary fork.
This choice is motivated by the fact that Alice and Bob have identical resources.

We explain how Gambit's \texttt{Rational} class is used to reduce floating-point error in Nash equilibria computation.
To improve the precision of the computed Nash equilibria,
we set the parameter \texttt{rational=True} in the \texttt{gambit-lcp} solver, enabling the solver to use Gambit's built-in \texttt{Rational} class in the computation~\cite{SavTur25}.
Gambit's \texttt{Rational} class implements an exact representation of rational floating-point numbers by storing them as numerator-denominator pairs, eliminating the rounding errors inherent to their binary representations.
Although using the \texttt{Rational} class increases the execution time of the Lemke-Howson algorithm, it significantly improves the precision of the computed Nash equilibria.
\subsubsection{Quantum mining simulation}

We discuss the quantum mining simulation used to determine how the optimal quantum mining strategies affect Bitcoin's security against a 51\% attack.
We begin by giving an overview of the simulation and its parameters.
Then, we discuss how the simulation handles the timings of quantum measurement and classical block-arrival events.
Finally, we discuss our method for studying how the network difficulty affects~$p_\text{stale}$ under the deployment of optimal quantum mining strategies.

We explain how we simulate the effect of Alice's and Bob's optimal quantum mining strategies on~$p_\text{stale}$. 
We perform a Monte Carlo simulation parameterized by a Nash equilibrium computed by the Lemke-Howson algorithm, as well as~$\gamma_\text{S}$ and~$D$.
The simulation begins at time zero and draws the next classical block arrival time from an exponential distribution with rate~$\lambda=\nicefrac1{600}$ seconds (Remark~\ref{remark:arrivaltimes}).
While searching for a valid block, the number of Grover iterations performed by Alice and Bob for each of their~$c$ quantum measurements is sampled from their respective mixed strategies in the Nash equilibrium.
We record the timing of quantum measurements, which occur either according to Alice's and Bob's chosen strategies or as a result of Sattath's AQMS, along with the classical block arrival times.
For every new block that is appended to the blockchain, we repeat the process of simulating the mining of the next block by resampling Alice's and Bob's strategies from the Nash equilibrium and generating the next classical block arrival time. 
This cycle of simulating the next block in the blockchain is repeated until 86400 simulation seconds (1 day) have elapsed. \par
We simulate~$n_s=1\,000\,000$ days.
For each simulated day~$i \in [n_s]$~(\ref{eqn:[x]}), let~$b_i$ denote the number of blocks mined on day~$i$, $f_i$ denote the number of forks observed on day~$i$, and 
\begin{equation}
    \label{eqn:pstalei}
    p_\text{stale}^i := \nicefrac{f_i}{b_i} 
\end{equation}
the stale rate on day~$i$.
From the empirically-observed distribution of~$p_\text{stale}^i$, we compute the 95th and 99th percentiles,
\begin{equation}
    \label{eqn:P95P99}
    P_{95}, \;\; P_{99},
\end{equation}
and the empirical probability
\begin{equation}
    \label{eqn:empprob}
    P_{>\nicefrac13}
    :=
    \frac{1}{n_s} \sum_{i=1}^{n_s} \mathbf{1}_{p_\text{stale}^i > \nicefrac{1}{3}}  
\end{equation}
that any~$p_\text{stale}^i$ exceeds the~$\nicefrac{1}{3}$ benchmark, with~$\mathbf{1}_C$ being the indicator function for condition~$C$.
Furthermore, we also compute the empirical probability
\begin{equation}
    \label{eqn:pdet}
    P_\text{det} := \frac{1}{n_s} \sum_{i=1}^{n_s} \mathbf{1}_{q_i < 0.05},
\end{equation}
of detecting the presence of quantum mining, where
\begin{equation}
    \label{eqn:qi}
    q_i := P\left(X \geq f_i\right), \;\; X \sim \operatorname{Binomial}(b_i, 0.0024)
\end{equation}
is the p-value for day~$i$ under the null hypothesis that no quantum miners are present, with 0.0024 being the classical stale rate~\cite{SSJ+18} and~$X \sim \operatorname{Binomial}(b_i, 0.0024)$ following from block arrival times forming a Poisson process (Remark~\ref{remark:binomial}).

We describe how we handle the timing of quantum measurements and classical block arrival events in the Monte Carlo simulation. 
Alice's and Bob's chosen strategies are converted into quantum measurement times by assuming that each Grover iteration takes~$\nicefrac{600}{\kappa}$ seconds to execute.
We assume that the time required for state preparation, measurement, and constructing the relevant quantum-phase oracle is negligible and that Alice and Bob begin executing the FQS algorithm again immediately after a failed measurement. 
The quantum measurement times and the classical block arrival times are sorted to form a sequence of events.
We record which events belong to Alice, Bob, or the classical network, and handle each event accordingly.
If a quantum miner measures and is successful, Sattath's AQMS is triggered for the other quantum miner who is forced to measure with the Grover iterations they have accumulated up to that point.
If Sattath's AQMS results in a successful measurement, then a fork is recorded.
If the event is a classical block arrival, a block is appended to the blockchain unconditionally, and both quantum miners independently employ Sattath's AQMS.

To further study the impact that optimal quantum mining strategies would have on~$p_\text{stale}$, we also plot how different values of~$D$~(\ref{eqn:networkdifficulty}) affect~$p_\text{stale}$ for the values of~$c$ considered.
For each value of
\begin{equation}
\label{eqn:valuesD}
D \in \{2^{30}, 2^{31}, \dots, 2^{45}\}=:2^{[30,45]},
\end{equation}
and for each value of~$c \in [4]$, we compute an optimal quantum mining strategy via the Lemke-Howson algorithm and then simulate the deployment of this strategy within the Bitcoin network.
Here, we simulate the deployment of the optimal quantum mining strategies over 50\,000 days to reduce the overall computation time, and report both~$P_{95}$~(\ref{eqn:P95P99}) and~$P_\text{det}$~(\ref{eqn:pdet}).
We plot~$P_{95}$ as a function of~$D$ to study the~$p_\text{stale}$ that is exceeded on 5\% of the worst days of quantum mining, and~$P_\text{det}$ as a function of~$D$ to study how the network difficulty affects the probability of detecting quantum mining.
\section{Results}
\label{sec:results}

Now we present our results for how Alice and Bob should play the game including examples and show how this game affects Bitcoin's security against a 51\% attack.
First we construct closed-form expressions for Alice's and Bob's payoff matrices. 
Next, we present examples of optimal quantum mining strategies for Alice and Bob, which correspond to Nash equilibria.
Finally, we report the implications for Bitcoin's security if Alice and Bob employ an optimal strategy for each quantum race. 
\subsection{Payoff matrices for quantum miners}

Now we obtain the closed-form expressions for the payoff matrices used by Alice and Bob.
First, for didactic reasons, we derive the expressions for Alice's and Bob's payoff matrices using only the terms that account for their~$c$ quantum measurements and excluding the terms from Sattath's AQMS.
Then, we extend these expressions to incorporate the terms corresponding to Sattath's AQMS and derive expressions for Alice's and Bob's payoff matrices in our model.
\subsubsection{Payoff matrices for multiple measurements}

Having developed the mathematical machinery in~\S\ref{subsec:mathematics}, we now derive the expressions giving the entries of~$A_i^\text{meas}$ and~$B_i^\text{meas}$~(\ref{eqn:ABmeas}) and present an example.
We first discuss preliminaries explaining how we use the mathematics introduced in~\S\ref{subsec:mathematics} to obtain these two matrices.
Then, we derive a closed form of these payoff matrices. 
Finally, for didactic reasons, we illustrate an example of these entries for~$c=2$.

First, as a preliminary, we briefly clarify how the mathematics introduced in~\S\ref{subsec:mathematics}
relates to the general form of the entries of~$A_i^\text{meas}$ and~$B_i^\text{meas}$~(\ref{eqn:ABmeas}).
The rows and columns of Alice's and Bob's payoff matrices are indexed by~$\alpha$ and~$\beta$~(\ref{eqn:alphabeta}).
Thus, the~$(\alpha, \beta)$ entry~$A_i^\text{meas}$ and~$B_i^\text{meas}$~(\ref{eqn:ABmeas}) is given by evaluating the payoff functions~$u_i^\text{meas}\left(\bm{k}^{(\alpha)}, \bm{k}^{(\beta)}; D\right)$~(\ref{eqn:alicepayoffrestartpart}) and~$v_i^\text{meas}\left(\bm{k}^{(\alpha)}, \bm{k}^{(\beta)}; D\right)$~(\ref{eqn:bobpayoffrestartpart}) for the strategy profile~$\left(\bm{k}^{(\alpha)},\bm{k}^{(\beta)}\right)$ corresponding to~$(\alpha, \beta)$~(\ref{eqn:alphabeta}).  
Furthermore, as we enumerate all possible quantum measurement orderings from the strategy profile~$(a, b)$, the values of~$\ell_\text{A}(a, b)$ and~$\ell_\text{B}(a, b)$~(\ref{eqn:boblastmeasurement}) are determined by the quantum measurement ordering being considered.

We now derive a closed-form expression for the entries of Alice's and Bob's payoff matrices that correspond to their~$i^\text{th}$ quantum measurements,
$A_i^{\text{meas}}$ and $B_i^{\text{meas}}$~(\ref{eqn:payoffmatrixdecomposition}).
The~$(\alpha, \beta)$~(\ref{eqn:Aalphabetarowcol}) entry is
\begin{equation}
\label{eqn:alicepayoffrestartsmapping}
    A_i^\text{meas} : \left[\left|S_0^c\right|\right] \times \left[\left|S_0^c\right|\right]  \to [0, 1] :  \left(\alpha, \beta\right) \mapsto A_i^\text{meas}\left(\alpha, \beta\right),
\end{equation}
with the integers
\begin{equation}
    \label{eqn:kalphakbeta}
    \alpha \mapsto \bm{k}^{(\alpha)}=(a_1, a_2, \dots, a_c), \; \beta \mapsto \bm{k}^{(\beta)}=(b_1, b_2, \dots, b_c),
\end{equation}
mapping to a pair of pure strategies~(\ref{eqn:alphabeta}), and
\begin{equation}
    \label{eqn:payoffrestartgeneral}
    A_i^\text{meas}(\alpha, \beta) =
    \left(p(a_i) \prod_{r=1}^{i-1} (1 - p(a_r))\right)
    \begin{cases}
        1 & \mathcal{A}_i < b_1 \\
        (1 - p(b_1)) & b_1 \leq \mathcal{A}_i < b_1 + b_2 \\
        (1 - p(b_1))(1 - p(b_2)) & b_1 + b_2 \leq \mathcal{A}_i < b_1 + b_2 + b_3 \\
        \vdots \\
        \prod_{r=1}^{c-1} (1 - p(b_r)) & \mathcal{B}_{c-1} \leq \mathcal{A}_{i} < \mathcal{B}_c \\
        \prod_{r=1}^{c} (1 - p(b_r)) & \mathcal{A}_i \geq \mathcal{B}_c. \\
    \end{cases}
\end{equation}
Here, we remind that~$\mathcal{A}_i$ and $\mathcal{B}_i$~(\ref{eqn:ABcumulative}) are the cumulative number of Grover iterations performed by Alice and Bob up to their~$i^\text{th}$ quantum measurement, respectively.
As Alice and Bob have identical quantum and classical resources, the quantum race is symmetric~(Def.~\ref{def:symmetric2QR}).
Consequently, 
\begin{equation}
    \label{eqn:Bobpayoffrestart}
    B_i^\text{meas}=(A_i^\text{meas})^\intercal
\end{equation}
is Bob's payoff matrix from his~$i^\text{th}$ quantum measurement alone.

We illustrate the terms contributing to Alice's and Bob's payoff from their~$c$ quantum measurements~($A^\text{meas}, B^\text{meas}$) by examining the case where each quantum miner performs two quantum measurements. 
In this setting, Alice's and Bob's pure strategies are elements of~$S_0^2$~(\ref{eqn:strategysetrestarts}), which are 2-tuples~$(a_1, a_2)$ and~$(b_1, b_2)$. 
Here,
\begin{equation}
\label{eqn:ABmeasc2}
A^\text{meas}=A_1^\text{meas} + A_2^\text{meas},\;\; B^\text{meas}=B_1^\text{meas} + B_2^\text{meas}
\end{equation}
are the components contributing to Alice's and Bob's payoff matrix from their first and second quantum measurements, respectively. 
Focusing on Alice's payoff matrix, 
\begin{equation}
    \label{eqn:alicepayoffrestartA1}
    A_1^\text{meas}(\alpha, \beta) =
    \begin{cases}
    p(a_1) & a_1 < b_1 \\
    p(a_1)(1 - p(b_1)) & b_1 \leq a_1 < b_1 + b_2 \\
    p(a_1)(1-p(b_1))(1-p(b_2)) & a_1 \geq b_1 + b_2,
    \end{cases}
\end{equation}
is the~$(\alpha, \beta)$ element of~$A_1^\text{meas}$, whereas 
\begin{equation}
\label{eqn:alicepayoffrestartA2}
    A_2^\text{meas}(\alpha, \beta) =
    \begin{cases}
    p(a_2)(1 - p(a_1)) & a_1 + a_2 < b_1 \\
    p(a_2)(1 - p(a_1))(1 - p(b_1)) & b_1 \leq a_1 + a_2 < b_1 + b_2 \\
    p(a_2)(1 - p(a_1))(1-p(b_1))(1-p(b_2)) & a_1 + a_2 \geq b_1 + b_2.
    \end{cases}
\end{equation}
is the~$(\alpha, \beta)$ element of~$A_2^\text{meas}$.
As the quantum race is symmetric, the terms contributing to Bob's payoff are computed by taking the transpose of Alice's terms~(\ref{eqn:Bobpayoffrestart}).
\subsubsection{Payoff entries with Sattath's AQMS}

We build on the previous result~(\ref{eqn:payoffrestartgeneral}) by including Sattath's AQMS, which gives Alice and Bob another chance to win the quantum race in the event that their opponent wins first.
We first derive the expressions for Alice's and Bob's payoff matrices in this case.
Then, we show how the example is modified by choosing a particular value of~$c$.

As a preliminary, we describe how we build on the previous result to derive the general form of the entries of Alice's and Bob's payoff matrices.
The~$(\alpha, \beta)$ entry of~$A_i$~(\ref{eqn:payoffmatrixdecomposition}) is obtained by evaluating the payoff functions~$u_i^\text{meas}(\alpha, \beta; D)$~(\ref{eqn:alicepayoffrestartpart}) and~$u_i^\text{AQMS}(\alpha, \beta; D,\gamma_\text{S})$~(\ref{eqn:alicepayoffAQMSterm},\ref{eqn:alicepayoffAQMStermsametime}) with $\gamma_\text{S}$ being Sattath's propagation parameter (Def.~\ref{def:sattathspropparameter}) for the strategy profile~$(\bm k^{(a)}, \bm k^{(b)})$ corresponding to~$(\alpha, \beta)$~(\ref{eqn:kalphakbeta}) for each~$i \in [c]$~(\ref{eqn:[x]}) and then computing the full entry of~$A$~(\ref{eqn:payoffmatrixsplit}).
The entries of~$B$ are computed analogously.

We now present the general form of Alice's and Bob's payoff matrices.
We incorporate Sattath's AQMS~(\ref{eqn:alicepayoffAQMSterm}) into the entries of~$A_i^\text{meas}$~(\ref{eqn:payoffrestartgeneral}) and~$B_i^\text{meas}$~(\ref{eqn:Bobpayoffrestart}) to obtain the full payoff matrices~$A$ and~$B$.
Focusing on Alice, the~$(\alpha, \beta)$ entry~(\ref{eqn:alphabeta}) of~$A_i$ takes the form
\begin{equation}
\label{eqn:alicepayofffullmapping}
    A_i : \left[\left|S_0^c\right|\right] \times \left[\left|S_0^c\right|\right]  \to [0, 1 + \gamma_\text{S}] :  \left(\alpha, \beta\right) \mapsto A_i\left(\alpha, \beta; \gamma_\text{S}\right),
\end{equation}
where 
\begin{equation}
    \label{eqn:alicepayoffgeneralAS}
    A_i(\alpha, \beta; \gamma_\text{S})=
    \begin{cases}
        p(a_i) F_A^{i-1} + \gamma_\text{S} p(a_i) F_A^{i-1}  p(\mathcal{A}_i) & \mathcal{A}_i < b_1 \\
        p(a_i) F_A^{i-1} F_B^1 + \gamma_\text{S} p(a_i) F_A^{i-1}  p(b_1) & \mathcal{A}_i=b_1 \\
        p(a_i) F_A^{i-1} F_B^1 + \gamma_\text{S} p(a_i) F_A^{i-1}  F_B^1 p(\mathcal{A}_i - b_1) & b_1 < \mathcal{A}_i < b_1 + b_2  \\
        p(a_i) F_A^{i-1} F_B^2 + \gamma_\text{S} p(a_i) F_A^{i-1} F_B^1 p(b_2) &  \mathcal{A}_i=b_1 + b_2  \\
        \vdots \\
        p(a_i) F_A^{i-1}F_B^{c-1}  + \gamma_\text{S} p(a_i) F_A^{i-1}F_B^{c-1} p(\mathcal{A}_i - \mathcal{B}_{c-1}) &  \mathcal{B}_{c-1} < \mathcal{A}_i < \mathcal{B}_c  \\
        p(a_i) F_A^{i-1}F_B^{c} + \gamma_\text{S} p(a_i) F_A^{i-1}F_B^{c-1} p(b_c) &  \mathcal{A}_i=\mathcal{B}_c  \\
        p(a_i) F_A^{i-1}F_B^{c} & \mathcal{A}_i > \mathcal{B}_c \\
    \end{cases}
\end{equation}
and
\begin{equation}
    \label{eqn:failureterm}
    F_A^i := \prod_{k=1}^i (1 - p(a_k)), \; F_B^i:= \prod_{k=1}^i (1 - p(b_k))
\end{equation}
denotes the probabilities that Alice and Bob fail their first~$i$ quantum measurements, respectively. 
As the game is symmetric (Remark~\ref{remark:payofftwoplayer}), Bob's $i^\text{th}$ payoff is
\begin{equation}
    \label{eqn:bobpayoffgeneral}
    B_i=A_i^\intercal, 
\end{equation}
with $\gamma_\text{S}$ replaced by~$1 - \gamma_\text{S}$.
We note that, for~$c = 1$, Alice's and Bob's payoff matrices reduce to those of the S2QR model for Bitcoin~(\ref{eqn:alicepayoffLRS19})~\cite{LRS19}.
Assigning~$\gamma_\text{S} \gets 0$ recovers the stingy variant~\cite{LRS19}, whereas assigning~$\gamma_\text{S} \gets \nicefrac{1}{2}$ recovers the non-stingy variant~(\ref{eqn:alicepayoffLRS19}).

Continuing the example with~$c=2$, we illustrate the general structure of Alice's and Bob's payoff matrices.
We incorporate the terms from Sattath's AQMS~(\ref{eqn:alicepayoffAQMSterm}, \ref{eqn:alicepayoffAQMStermsametime}) into~$A_i^\text{meas}$~(\ref{eqn:alicepayoffrestartA1}, \ref{eqn:alicepayoffrestartA2}) and~$B_i^\text{meas}$ in this example.
Focusing on Alice's payoff matrix again, 
\begin{equation}\label{eqn:alicepayofftwochanceA1AS}
    A_1(\alpha, \beta; \gamma_\text{S})=\begin{cases}
        p(a_1) + \gamma_\text{S} p(a_1)p(a_1), & a_1 < b_1 \\
        p(a_1) (1 - p(b_1)) + \gamma_\text{S} p(a_1)p(b_1) & a_1=b_1 \\
        p(a_1)(1 - p(b_1)) + \gamma_\text{S} p(a_1) (1 - p(b_1))p(a_1 - b_1) & b_1 < a_1 < b_1 + b_2 \\ 
        p(a_1)(1 - p(b_1))(1 - p(b_2)) + \gamma_\text{S} p(a_1) (1 - p(b_1)) p(b_2) & a_1=b_1 + b_2 \\
        p(a_1)(1 - p(b_1))(1 - p(b_2)) & a_1 > b_1 + b_2,
    \end{cases}
\end{equation}
is the~$(\alpha, \beta)$ element of~$A_1$, while 
\begin{equation}\label{eqn:alicepayofftwochanceA2AS}
    A_2(\alpha, \beta; \gamma_\text{S})=\begin{cases}
        p(a_2)(1 - p(a_1)) + \gamma_\text{S} p(a_2)(1 - p(a_1))p(a_1 + a_2) & a_1 + a_2 < b_1 \\
        p(a_2)(1 - p(a_1)) (1 - p(b_1)) + \gamma_\text{S} p(a_2)(1-p(a_1))p(b_1) & a_1 + a_2=b_1 \\ 
        p(a_2)(1 - p(a_1))(1 - p(b_1)) + \gamma_\text{S} p(a_2)  (1-p(a_1)) (1 - p(b_1))p(a_1 + a_2 - b_1) & b_1 < a_1 + a_2 < b_1 + b_2 \\
        p(a_2)(1 - p(a_1))(1 - p(b_1))(1 - p(b_2)) + \gamma_\text{S} p(a_2)(1-p(a_1))(1-p(b_1)) p(b_2) & a_1 + a_2=b_1 + b_2 \\
        p(a_2)(1 - p(a_1))(1 - p(b_1))(1 - p(b_2)) & a_1 + a_2 > b_1 + b_2,
    \end{cases}
\end{equation}
is the~$(\alpha, \beta)$ element of~$A_2$, with 
\begin{equation}
    A=A_1 + A_2
\end{equation}
being Alice's full payoff matrix for our model in this setting.
\subsection{Optimal quantum mining strategies}

We compute examples of optimal quantum mining strategies.
Solving these examples is achieved by assigning numerical values to~$c$ and~$D$, evaluating Alice's and Bob's payoff matrices, and computing an optimal quantum mining strategy via the Lemke-Howson algorithm~\cite{LH64}.
We choose enough examples to illustrate our results, and the values of~$c$ and~$D$ are chosen such that the computations remain feasible for our computational resources.
We first present optimal quantum mining strategies for the~$c=2$ case.
Then, we consider the cases for
\begin{equation}
\label{eqn:c=3c=4}
c=3,\,c=4,
\end{equation}
to understand how Alice and Bob allocate their resources for higher-order measurements. 
Next, we consider the~$c=1$ case to establish a baseline used to compare with the multiple measurement cases. 
Finally, we report the expected payoffs of all computed optimal quantum mining strategies.
\subsubsection{Two measurements }

Now we turn to optimal quantum mining strategies for~$c=2$.
We first list the parameters used in the computation.
Then, we report the optimal quantum mining strategies computed in this case.

We compute optimal quantum mining strategies for~$c=2$ using the parameters outlined in~\S\ref{subsec:methods}.
Specifically, we use~
\begin{equation}
    \label{eqn:gamma0.5}
    \gamma_\text{S} \gets 0.5
\end{equation}
with the values of the network difficulty given in Eqs.~(\ref{eqn:assignDlow}) and~(\ref{eqn:assignDhigh}) for the low- and high-difficulty regimes, respectively.
For the ideal regime, we use the same value of~$D$ as the high-difficulty regime.
Finally, we assign
\begin{equation}
    \label{eqn:assignnqc2}
    n_q \gets 45
\end{equation}
for this computation.

We present examples of optimal quantum mining strategies for all three network difficulty regimes in the~$c=2$ case. 
Figure~\ref{fig:nasheqheatmaps_2c} shows a Nash equilibrium for the low-difficulty regime, whereas Fig.~\ref{fig:nasheqheatmaps_2c_ideal} displays a Nash equilibrium for the ideal regime. 
For the high-difficulty regime, we find that the only optimal quantum mining strategy for Alice and Bob is to allocate all of their resources into their final quantum measurements. 
\begin{figure}[ht]
    \centering
    \includegraphics[width=\textwidth]{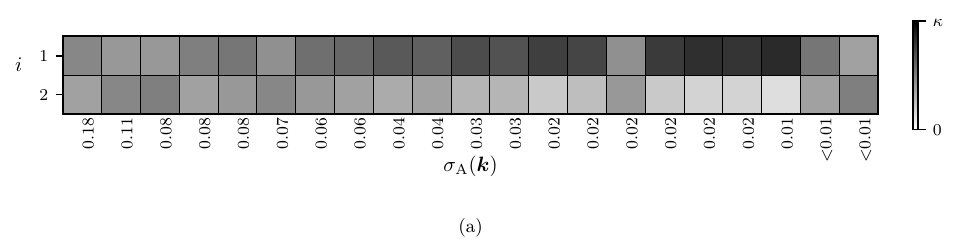}
    \includegraphics[width=\textwidth]{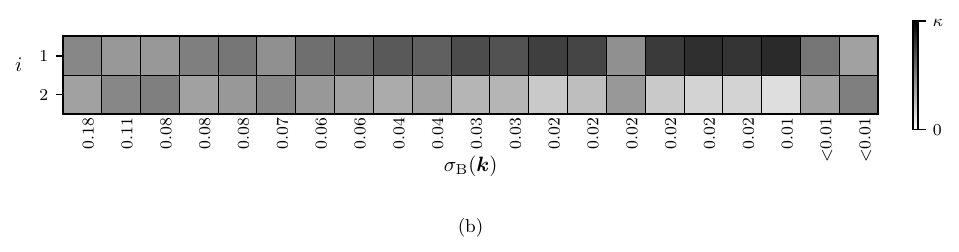}
    \caption{ The support of the mixed strategies for (a) Alice and (b) Bob in the Nash equilibrium for~$c=2$ in the low-difficulty regime.
    Each column corresponds to a pure strategy~$\bm{k}$ in the support of the corresponding mixed strategy for Alice or Bob, $\sigma_{\text{A,B}}(\bm{k})$, with the probability of playing each strategy shown on the abscissa with the probabilities adding to one.
    Each row corresponds to the measurement index~$i$ of the pure strategies, shown on the ordinate. 
    The shade of each cell indicates the number of Grover iterations to perform for the specific measurement index and pure strategy combination, where white represents~$0$ Grover iterations and black represents~$\kappa$ Grover iterations.
    }
    \label{fig:nasheqheatmaps_2c}
\end{figure}
\begin{figure}[ht]
    \centering
    \includegraphics[width=\textwidth]{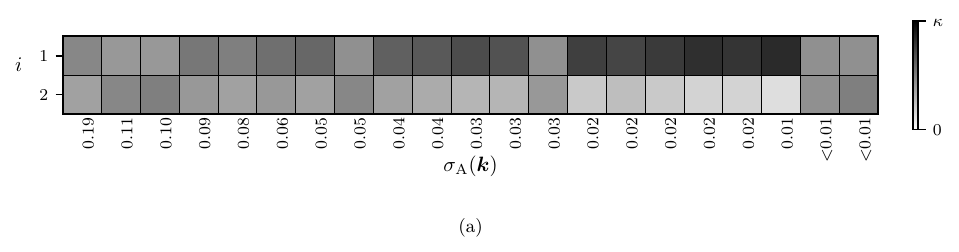}
    \includegraphics[width=\textwidth]{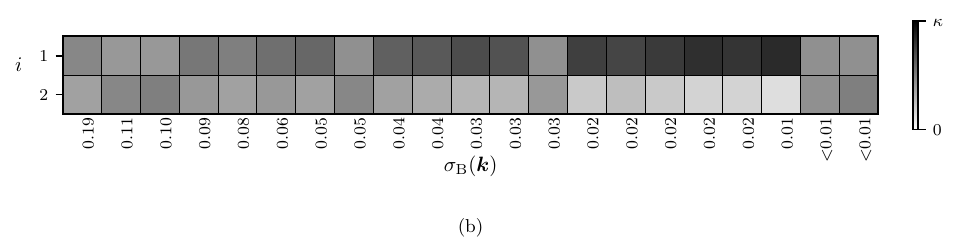}
    \caption{ The support of the mixed strategies for (a) Alice and (b) Bob in the Nash equilibrium for~$c=2$ in the ideal regime, shown in the same format as Fig.~\ref{fig:nasheqheatmaps_2c}.
    }
    \label{fig:nasheqheatmaps_2c_ideal}
\end{figure}
\subsubsection{Three and four measurements}

We present optimal quantum mining strategies for the case where both quantum miners perform $c \in \{3, 4\}$~(\ref{eqn:c=3c=4}) quantum measurements between successive blocks.
We first report the parameters used in this computation.
Then, we present the optimal quantum mining strategies.
Finally, we illustrate that these optimal quantum mining strategies are not unique by providing an example of a second optimal strategy computed in the low-difficulty regime for~$c=3$.

For these computations, we use the same parameters as the computation for~$c=2$. 
However, we assign
\begin{equation}
\label{eqn:assignnqc2low}
n_q \gets 21
\end{equation}
for the~$c=3$ low-difficulty regime, 
\begin{equation}
\label{eqn:assignnqc3low}
n_q \gets 35
\end{equation}
for~$c=3$ in the ideal and high-difficulty regimes, and 
\begin{equation}
\label{eqn:assignnqc4}
n_q \gets 25
\end{equation}
for~$c=4$.

Now we present the computed optimal quantum mining strategies for~$c=3$ and~$c=4$~(\ref{eqn:c=3c=4}).
Figures~\ref{fig:nasheqheatmaps_3c_low} 
\begin{figure}[ht]
    \centering
    \includegraphics[width=0.49\textwidth]{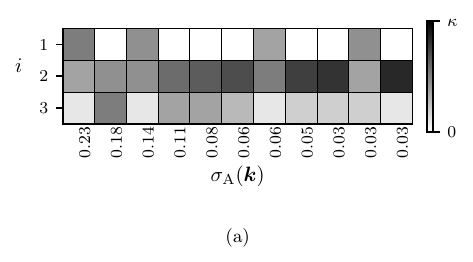}
    \includegraphics[width=0.49\textwidth]{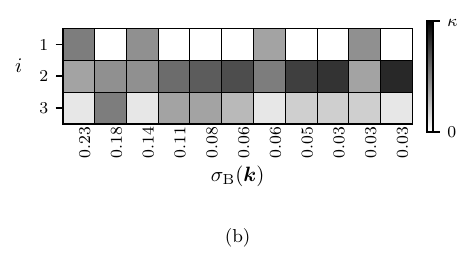}
    \caption{The support of the mixed strategies for (a) Alice and (b) Bob in the Nash equilibrium for~$c=3$ in the low-difficulty regime, shown in the same format as Fig.~\ref{fig:nasheqheatmaps_2c}.
    }
    \label{fig:nasheqheatmaps_3c_low}
\end{figure}
and~\ref{fig:nasheqheatmaps_4c_low}
\begin{figure}[ht]
    \centering
    \includegraphics[width=0.49\textwidth]{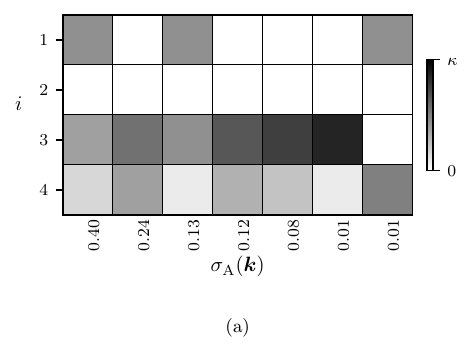}
    \includegraphics[width=0.49\textwidth]{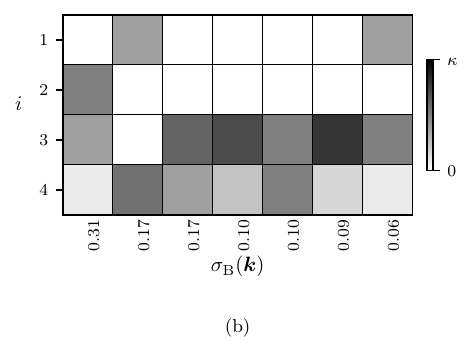}
    \caption{ The support of the mixed strategies for (a) Alice and (b) Bob in the Nash equilibrium for~$c=4$ in the low-difficulty regime, shown in the same format as Fig.~\ref{fig:nasheqheatmaps_2c}.
    }
    \label{fig:nasheqheatmaps_4c_low}
\end{figure}
display an optimal quantum mining strategy for the low-difficulty regime for~$c=3$ and $c=4$, respectively, while Figs.~\ref{fig:nasheqheatmaps_3c_ideal}
\begin{figure}[ht]
    \centering
    \includegraphics[width=\textwidth]{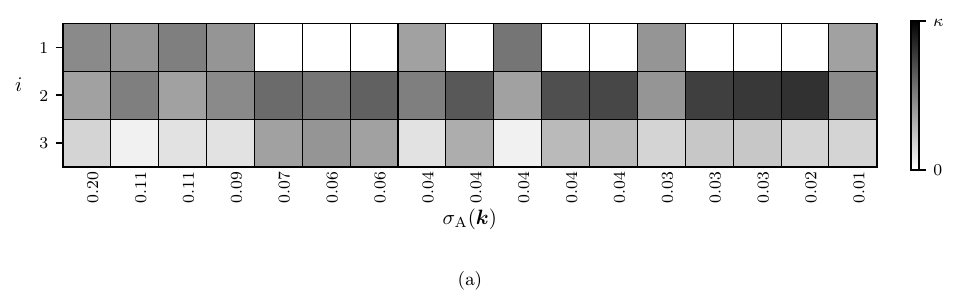}
    \includegraphics[width=\textwidth]{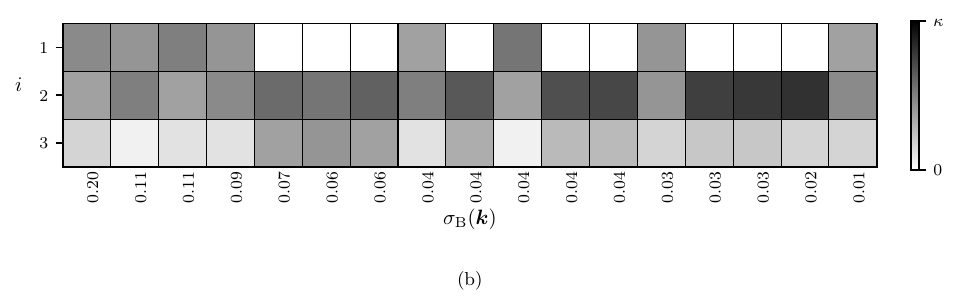}
    \caption{ The support of the mixed strategies for (a) Alice and (b) Bob in the Nash equilibrium for~$c=3$ in the ideal regime, shown in the same format as Fig.~\ref{fig:nasheqheatmaps_2c}.
    }
    \label{fig:nasheqheatmaps_3c_ideal}
\end{figure}
and~\ref{fig:nasheqheatmaps_4c_ideal}
\begin{figure}[ht]
    \centering
    \includegraphics[width=0.49\textwidth]{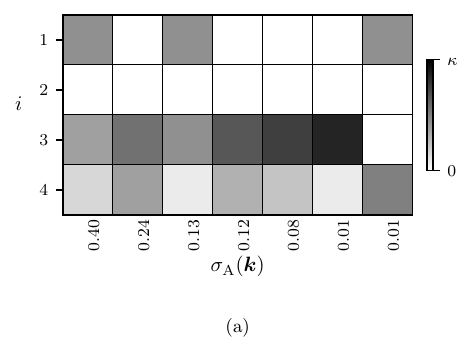}
    \includegraphics[width=0.49\textwidth]{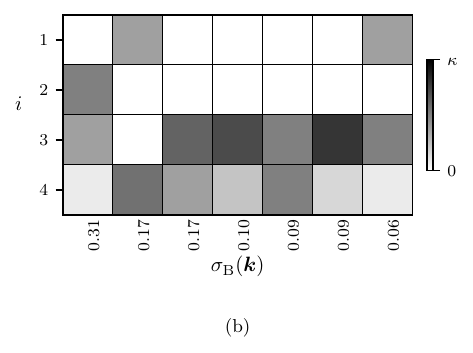}
    \caption{ The support of the mixed strategies for (a) Alice and (b) Bob in the Nash equilibrium for~$c=4$ in the ideal regime, shown in the same format as Fig.~\ref{fig:nasheqheatmaps_2c}.
    }
    \label{fig:nasheqheatmaps_4c_ideal}
\end{figure}
display an optimal quantum mining strategy for the ideal regime for~$c=3$ and $c=4$, respectively. 
Similar to the~$c=2$ case, we find that for the high-difficulty regime, the only optimal quantum mining strategy for Alice and Bob is to allocate all of their resources into their final quantum measurements.

To illustrate that the optimal quantum mining strategies are not unique, we compute an additional Nash equilibrium for~$c=3$ in the low-difficulty regime. 
This strategy is displayed in Fig.~\ref{fig:nasheqheatmaps_3c_low2}.
\begin{figure}[ht]
    \centering
    \includegraphics[width=0.49\textwidth]{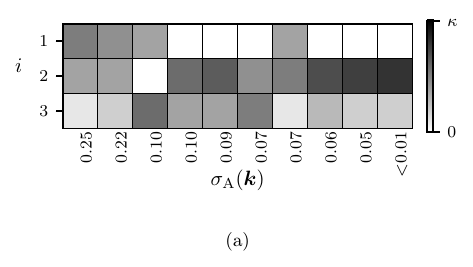}
    \includegraphics[width=0.49\textwidth]{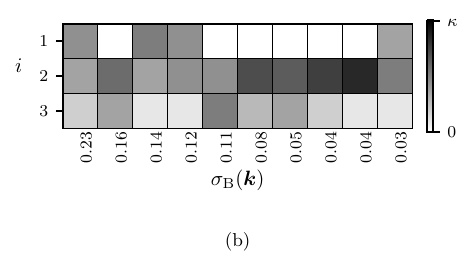}
    \caption{ The support of the mixed strategies for (a) Alice and (b) Bob in the second Nash equilibrium for~$c=3$ in the low-difficulty regime, shown in the same format as Fig.~\ref{fig:nasheqheatmaps_2c}.
    }
    \label{fig:nasheqheatmaps_3c_low2}
\end{figure}
\subsubsection{One measurement and expected payoffs}

Now we turn to the~$c=1$ case, which serves as a baseline used to compare how the quantum miners should optimally allocate their resources for multiple measurements. 
We first describe the parameters used in this computation and then report the computed optimal quantum mining strategies. 
Then, we report the expected payoffs for all the previously computed optimal quantum mining strategies.

For computing these optimal quantum mining strategies, we use the same parameters as the computation for~$c=2$. 
However, here we assign
\begin{equation}
\label{eqn:assignnqc1}
n_q \gets 200,
\end{equation}
which in this case also gives the dimension of Alice's and Bob's payoff matrices.

We report the computed optimal quantum mining strategies for~$c=1$. 
As the optimal quantum mining strategies for~$c=1$ contain many pure strategies in their support, we present the distribution of Grover iterations performed in the support of these optimal strategies as a histogram rather than a heatmap as for~$c \geq 2$.
Figure~\ref{fig:nasheqheatmaps_1c_low} 
\begin{figure}[ht]
    \centering
    \includegraphics[width=\textwidth]{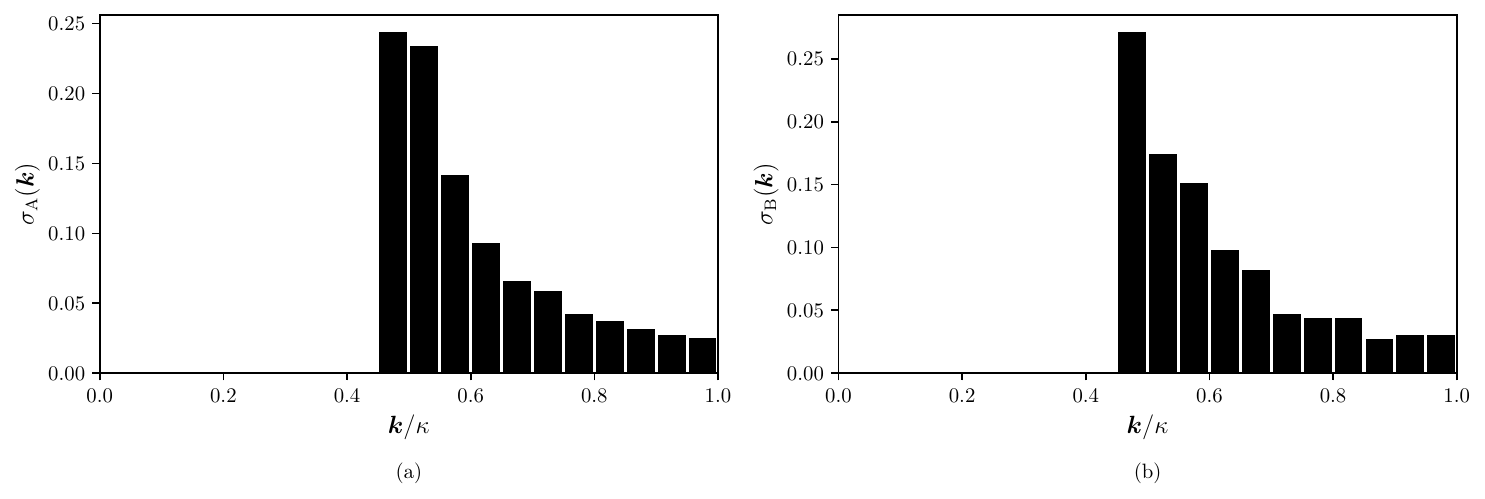}
    \caption{ The distribution of Grover iterations in the support of the mixed strategies for (a) Alice and (b) Bob in the Nash equilibrium for~$c=1$ in the low-difficulty regime. 
    The normalized number of Grover iterations~$\bm{k} /\kappa$ is shown on the abscissa, divided into equal-width bins.
    The probability of performing the number of Grover iterations corresponding to each bin in the support of the Nash equilibrium~$\sigma_{\text{A,B}}(\bm{k})$, is shown on the ordinate, with the probabilities adding to one.
    }
\label{fig:nasheqheatmaps_1c_low}
\end{figure}
shows an optimal quantum mining strategy for the low-difficulty regime, whereas Fig.~\ref{fig:nasheqheatmaps_1c_ideal}
\begin{figure}[ht]
    \centering
    \includegraphics[width=\textwidth]{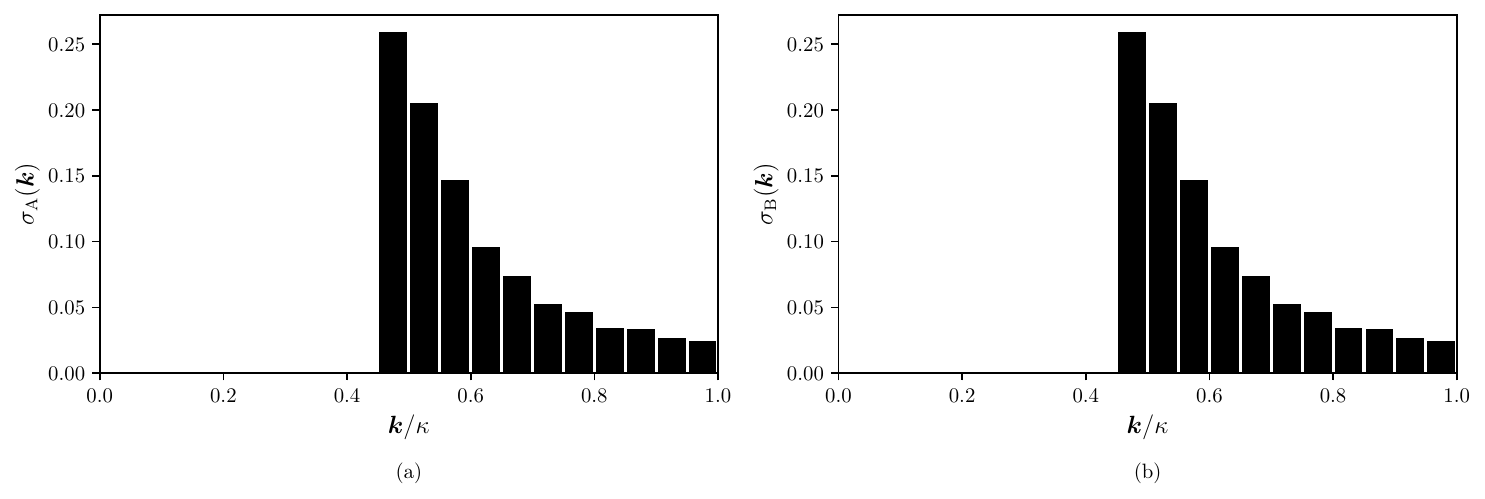}
    \caption{The distribution of Grover iterations in the support of the mixed strategies for (a) Alice and (b) Bob in the Nash equilibrium for~$c=1$ in the ideal regime, shown in the same format as Fig.~\ref{fig:nasheqheatmaps_1c_low}.
    }
\label{fig:nasheqheatmaps_1c_ideal}
\end{figure}
shows an optimal quantum mining strategy for the ideal regime. 
For the high-difficulty regime, we find that the only optimal quantum mining strategy is where Alice and Bob perform~$k_\text{opt}^{32}$~(\ref{eqn:kopt32}) Grover iterations before measuring.

Now we report the quantum miners' expected payoffs for each of the previously computed optimal quantum mining strategies (Remark~\ref{remark:expectedpayoff}).
Tables~\ref{table:expectedpayoffs_low} and~\ref{table:expectedpayoffs_ideal} summarize Alice's and Bob's expected payoffs for the low-difficulty and ideal regimes, respectively.
The expected payoffs for all difficulty regimes are reported to six decimal places, as any differences beyond this precision occur beyond the numerical precision of Python's 64-bit floating-point arithmetic. 
In the high-difficulty regime, all computed optimal strategies yield an expected payoff of $8.9 \times 10^{-5}$.
\begin{table}[H]
    
    \centering
      \begin{tabular}{c c c}
          $c$ & Quantum Miner & $U(\sigma^*)$ \\
          1 & Alice & 0.537708 \\
          1 & Bob   & 0.534115 \\
          \hline
          2 & Alice & 0.625250 \\
          2 & Bob   & 0.625250 \\
          \hline
          $3^{(1)}$ & Alice & 0.617527 \\
          $3^{(1)}$ & Bob   & 0.617527 \\
          $3^{(2)}$ & Alice & 0.607639 \\
          $3^{(2)}$ & Bob   & 0.596411 \\
          \hline
          4 & Alice & 0.639309 \\
          4 & Bob   & 0.580668 \\
          \hline
      \end{tabular}
    \caption{ Alice's and Bob's expected payoffs~$U(\sigma^*)$ for each computed optimal quantum mining strategy in the low-difficulty regime.
    The superscripts $(1)$ and $(2)$ denote the two distinct Nash equilibria computed for $c=3$.}
    \label{table:expectedpayoffs_low}
\end{table}
\begin{table}[H]
    \centering
      \begin{tabular}{c c c}
          $c$ & Quantum Miner & $U(\sigma^*)$ \\
          1 & Alice & 0.532788 \\
          1 & Bob   & 0.532788 \\
          \hline
          2 & Alice & 0.624900 \\
          2 & Bob   & 0.624900 \\
          \hline
          3 & Alice & 0.633620 \\
          3 & Bob   & 0.633620 \\
          \hline
          4 & Alice & 0.639528 \\
          4 & Bob   & 0.580464 \\
          \hline
      \end{tabular}
    \caption{ Alice's and Bob's expected payoffs~$U(\sigma^*)$ for each computed optimal quantum mining strategy in the ideal regime.}
    \label{table:expectedpayoffs_ideal}
\end{table}
\FloatBarrier
\subsection{Decrease of 51\% attack threshold}

We report our results for estimating~$p_\text{stale}$ from simulating the deployment of the computed optimal quantum mining strategies in the Bitcoin network. 
We begin by presenting the estimated values of~$p_\text{stale}$ for all previously computed optimal quantum mining strategies.
These estimates are obtained by simulating the deployment of the optimal quantum mining strategies in the Bitcoin network by the quantum miners.
Then, we show estimates of $p_\text{stale}$ over a range of network difficulties for~$c \in [4]$.

First we report the results of performing a Monte Carlo simulation to estimate~$p_\text{stale}$.
Table~\ref{table:stalerates} reports the 95th and 99th percentiles~$P_{95}$ and~$P_{99}$~(\ref{eqn:P95P99}), the maximum observed per-day stale-rate, the empirical probability~$P_{>\nicefrac13}$~(\ref{eqn:empprob}), the empirical detection probability~$P_\text{det}$~(\ref{eqn:pdet}), and the average blocks and forks per day for all computed optimal quantum mining strategies, organized by the number of quantum measurements~$c$.
\begin{table}[ht]
    \centering
\begin{tabular}{| c | l | d{1.4} | d{1.4} | d{1.4} | d{1.6} | d{1.6} | d{3.2} | d{2.3} |}
\hline
$c$ & diff & \multicolumn{1}{c|}{$P_{95}$} & \multicolumn{1}{c|}{$P_{99}$} & \multicolumn{1}{c|}{max} & \multicolumn{1}{c|}{$P\left(p_\text{stale}>\nicefrac13\right)$} & \multicolumn{1}{c|}{$P_\text{det}$} & \multicolumn{1}{c|}{abd} & \multicolumn{1}{c|}{afd} \\
\hline
1 & High  & 0.0000 & 0.0000 & 0.0135 & 0.000000 & 0.000001 & 144.02 & 0.004 \\
1 & Low   & 0.2899 & 0.3011 & 0.3387 & 0.000008 & 1.000000 & 376.74 & 98.731 \\
1 & Ideal & 0.2894 & 0.3005 & 0.3350 & 0.000001 & 1.000000 & 376.47 & 98.464 \\
\hline
2 & High  & 0.0000 & 0.0000 & 0.0146 & 0.000000 & 0.000004 & 144.01 & 0.004 \\
2 & Low   & 0.2639 & 0.2753 & 0.3126 & 0.000000 & 1.000000 & 388.05 & 91.659 \\
2 & Ideal & 0.2644 & 0.2757 & 0.3232 & 0.000000 & 1.000000 & 388.19 & 91.873 \\
\hline
3 & High    & 0.0000 & 0.0000 & 0.0142 & 0.000000 & 0.000005 & 144.03 & 0.004 \\
3 & Low$^1$ & 0.2722 & 0.2834 & 0.3271 & 0.000000 & 1.000000 & 393.31 & 96.167 \\
3 & Low$^2$ & 0.2588 & 0.2703 & 0.3067 & 0.000000 & 1.000000 & 386.28 & 89.167 \\
3 & Ideal   & 0.2600 & 0.2714 & 0.3143 & 0.000000 & 1.000000 & 387.36 & 89.945 \\
\hline
4 & High  & 0.0000 & 0.0000 & 0.0154 & 0.000000 & 0.000005 & 144.01 & 0.004 \\
4 & Low   & 0.2543 & 0.2656 & 0.3198 & 0.000000 & 1.000000 & 384.22 & 87.123 \\
4 & Ideal & 0.2545 & 0.2659 & 0.3063 & 0.000000 & 1.000000 & 384.29 & 87.205 \\
\hline
\end{tabular}

    \caption{Monte Carlo simulation for all computed optimal quantum mining strategies with averages taken over 1\,000\,000 simulated days.
    All values are reported to the fewest decimal digits to ensure distinguishability. 
    Each row corresponds to a specific combination of the number of quantum measurements~$c$ and network difficulty regime (diff), where High/Low/Ideal denotes the difficulty regime and the superscripts distinguish the first and second Nash equilibria for the~$c=3$ low-difficulty case. 
    The reported quantities are the 95th and 99th percentiles~$P_{95}$ and $P_{99}$, the maximum observed stale rate, the empirical probability~$P_{>\nicefrac13}$ that any given day's stale rate exceeds the $\nicefrac13$ benchmark, the empirical detection probability~$P_\text{det}$, and the average blocks (abd) and forks per day (afd).
    }
    \label{table:stalerates}
\end{table}

We now report on how~$p_\text{stale}$ is affected by the value of the network difficulty~$D$.
Our simulations provide insight into how Bitcoin's security is affected as the quantum miners allocate their resources over additional quantum measurements.
For each~$c\in[4]$, we compute an optimal quantum mining strategy and estimate~$P_{95}$~(\ref{eqn:P95P99}) and~$P_\text{det}$~(\ref{eqn:pdet}) resulting from the quantum miners employing a corresponding optimal quantum mining strategy over 50\,000 days. 
Figure~\ref{fig:difficultyplots} shows~$P_{95}$ and~$P_\text{det}$ as a function of~$D \in 2^{[30,45]}$~(\ref{eqn:valuesD})
in increments of powers of two, for each value of~$c$.
\begin{figure}
    \centering
    \includegraphics[width=\textwidth]{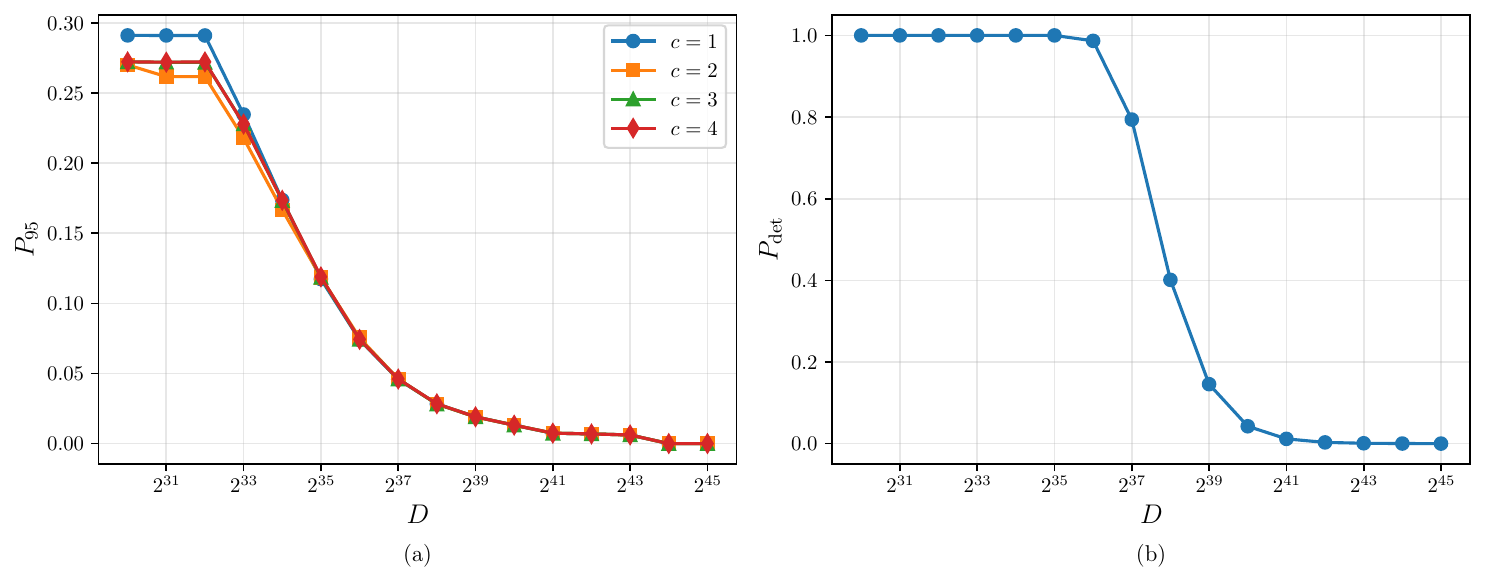}
\caption{ Estimated $P_{95}$ (a) and~$P_\text{det}$ (b) as functions of~$D$ for quantum miners employing an optimal quantum mining strategy over 50\,000 simulated days.
    Four curves are shown on each plot, each representing a specific number of quantum measurements~$c$ over which the quantum miners allocate their resources.
    In (b), all four curves lie on top of each other.}
    \label{fig:difficultyplots}
\end{figure}
\section{Discussion}
\label{sec:discussion}

Now we discuss our results.
We first discuss the payoff derived for quantum miners.
Next, we provide insight into the optimal quantum mining strategies computed for~$c \in [4]$~(\ref{eqn:[x]}) across the three network difficulty regimes~(Def.~\ref{def:networkdifficultyregimes}), and discuss the expected payoffs associated with these optimal quantum mining strategies. 
Subsequently, we discuss our simulation results for estimating~$p_\text{stale}$ when the computed optimal quantum mining strategies are deployed and discuss the broader security implications for the Bitcoin network.

Here we discuss our result that Alice and Bob each have constructed their respective payoff matrices for the case that each only performs one quantum measurement.
Alice and Bob then use their knowledge of their payoff matrices to compete in a quantum race with the goal of finding a valid proof-of-work before the other.
Alice's and Bob's payoff matrices represent their preferences over every combination of pure strategies and fully summarize the strategic game (Def.~\ref{def:strategicgame}) representing the quantum race.
In the S2QR model for Bitcoin~\cite{LRS19}, Alice's and Bob's payoff matrices were determined for the setting where they perform only one quantum measurement between consecutive blocks and are peaceful~(Remark~\ref{remark:peacefulregulation}).
The measurement-only terms in Alice's and Bob's payoff matrices~(\ref{eqn:payoffrestartgeneral}) are similar to the S2QR model for Bitcoin: each entry corresponds to the joint probability that a quantum miner succeeds in their current quantum measurement, given that all previous quantum measurements from either quantum miner failed.
For~$c=1$ and~$\gamma_\text{S} \to 0$, we recover the payoff matrices of the S2QR model for Bitcoin~\cite{LRS19}.

Extending this single-quantum-measurement case for payoff matrices to the general case of~$c$ multiple quantum measurements each, the entries of~$A^\text{meas}$ and~$B^\text{meas}$ capture every possibility in which Alice and Bob can win the quantum race respectively across their~$c$ quantum measurements each.
Likewise, the entries of~$A^\text{AQMS}$ and~$B^\text{AQMS}$ reflect the number of Grover iterations that Alice or Bob have accumulated when the other yields a valid proof-of-work, triggering the employment of Sattath's AQMS, depending on the relative timing of Alice's and Bob's~$c$ quantum measurements. 
We validate our result for~$c=2$ by showing that~$A_1^\text{meas}$~(\ref{eqn:alicepayoffrestartA1}) and~$A_2^\text{meas}$~(\ref{eqn:alicepayoffrestartA2}) agree with the result derived in Ray's PhD thesis~\cite{Ray2020},
which is the sole prior case of studying a quantum race between two competing quantum Bitcoin miners with more than one quantum measurement.

Alice and Bob use their payoff matrices to compute optimal quantum mining strategies corresponding to Nash equilibria.
Given the identical resources of the quantum miners, we expect the optimal quantum mining strategies to be symmetric equilibria.
Of our 13 computed optimal strategies for~$c \in [4]$, the equilibria are symmetric in nine cases.
We observe asymmetric equilibria for~$c \in \{1, 4\}$ in the low-difficulty regime, $c=4$ in the ideal regime, and the second optimal strategy computed for~$c=3$ in the low-difficulty regime.
As symmetric equilibria are guaranteed to exist in symmetric games (Theorem~\ref{theorem:nashsymm}), these cases of asymmetric equilibria likely reflect the path-dependence of the Lemke-Howson algorithm.

In the low-difficulty and ideal regimes, Alice and Bob should allocate their fixed resources over multiple quantum measurements. 
Specifically, for~$c=2$, Alice and Bob should allocate their resources across both quantum measurements to balance the high success probability of the FQS algorithm in these regimes against the temporal advantage of one measuring before the other.
However, for~$c \geq 3$, sometimes Alice and Bob should perform zero Grover iterations before a quantum measurement (Remark~\ref{remark:zeroiterations}) to maximize the success probability of other quantum measurements.
For all computed optimal strategies with~$c \geq 3$, the strategies involve pure strategies in which Alice and Bob concentrate their resources into fewer quantum measurements.
This outcome is influenced by the Grover iteration budget~$\kappa$~(\ref{eqn:groveriterbudget}) and reflects that spreading this budget over many quantum measurements is suboptimal, as the success probability of the FQS algorithm grows quadratically with the number of Grover iterations performed. 
For~$c=1$, Figs.~\ref{fig:nasheqheatmaps_1c_low} and~\ref{fig:nasheqheatmaps_1c_ideal} show that, most of the time, Alice and Bob should perform approximately half of their budgeted Grover iterations before measuring.
We note that this distribution is specific to the computed optimal quantum mining strategy and may not generalize.

In the high-difficulty regime, the optimal strategy is for the quantum miners to allocate all of their resources into their final quantum measurement.
This result is counterintuitive, as one might expect that it would be more advantageous for Alice or Bob to allocate all of their resources into their first quantum measurement and gain a temporal advantage.
This outcome is likely a result of the combination of the near-zero success probability of the FQS algorithm in this regime and payoff contributions from Sattath's AQMS~(\ref{eqn:alicepayoffAQMSterm}), which could slightly favour later quantum measurements.
Although we do not have a rigorous explanation of this behaviour, including Sattath's AQMS terms omitted from Alice's and Bob's payoff~(\ref{subsec:mathematics}) could remove this bias and result in the more intuitive strategy of allocating all resources to the first quantum measurement.

With respect to the expected payoff of the computed optimal quantum mining strategies, allocating resources into two quantum measurements each yields the highest expected payoff (Tables~\ref{table:expectedpayoffs_low} and \ref{table:expectedpayoffs_ideal}).
However, the computed optimal quantum mining strategies are not unique, as we demonstrate by computing two distinct optimal quantum mining strategies for~$c=3$ in the low-difficulty regime~(Figs.~\ref{fig:nasheqheatmaps_3c_low}
and~\ref{fig:nasheqheatmaps_3c_low2}).
The Lemke-Howson algorithm traces a singular path through the space of strategy profiles of Alice and Bob, returning the first Nash equilibrium found~\cite{LH64}.
Therefore, other optimal quantum mining strategies with higher expected payoffs might exist but were not returned by the Lemke-Howson algorithm.

Alice and Bob deploy their optimal quantum mining strategies to beat each other at the race without regard to beating the classical Bitcoin network.
Their quantum race has the side effect of increasing~$p_\text{stale}$ due to the employment of Sattath's AQMS~\cite{Sat20}.
This increased stale rate poses a threat to the entire network by creating stale blocks through forks,
but the significance of this threat decreases and even becomes negligible in the high-difficulty regime.
For all cases in the high-difficulty regime, 99\% of the simulated days exhibit no forks at all, as the 95th and 99th percentiles are zero up to machine precision.
Similarly, $P_\text{det}$ is negligible in this regime, meaning that the presence of the quantum miners is statistically indistinguishable from the classical network.
The maximum observed~$p_\text{stale}$ in the high-difficulty regime across all cases is $0.0154$,
which is approximately 20 times higher than the classical stale rate but still not significant in the sense that the 51\% attack is still infeasible.
The empirical probability that~$p_\text{stale} > \nicefrac13$ is zero up to machine precision.
Despite that two quantum miners are racing each other,
the Bitcoin network produces the expected 144 blocks per day (Remark~\ref{remark:averageblocks}), confirming the negligible impact of quantum miners in this regime.
This result that quantum miners have negligible effect in the high-difficulty regime is expected, as the success probability of the FQS algorithm is near-zero in this regime. \par

Contrariwise, in the ideal and low-difficulty regimes, a single execution of the FQS algorithm is sufficient to yield a proof-of-work with high probability.
A single execution suffices in the low-difficulty regime as the expected number of nonces to check is fewer than $2^{32}$, and, in the ideal regime, the quantum oracle encodes the entire block header. 
As a general trend, increasing the value of~$c$ decreases~$P_{95}$ and~$P_{99}$, as allocating resources over more quantum measurements decreases the success probability of those quantum measurements.
Across all nine cases in the ideal and low-difficulty regimes, the empirical probability that~$p_\text{stale}$ exceeds the~$\nicefrac13$ benchmark (Prop.~\ref{prop:quantumminingthreshold}) is zero empirically in all but two cases.
These two exceptions to being zero arise for either $c=1$ in the low-difficulty regime with probability~$8 \times 10^{-6}$
or~$c=1$ in the ideal regime with probability~$1 \times 10^{-6}$.
These cases do not contradict Prop.~\ref{prop:quantumminingthreshold}, as a 51\% attack requires~$p_\text{stale}$ to exceed the threshold~(\ref{eqn:proposition}) throughout the duration of the attack, not on a single day (Remark~\ref{remark:51duration}). 
Even then, these cases correspond to a threat emerging once every~342 or~2738 years, respectively, meaning that the Bitcoin network remains secure against this quantum attack.
Finally, we find~$P_\text{det}=1$ across all nine of these cases, meaning that it is statistically certain to detect the presence of quantum miners in the ideal and low-difficulty regimes.

Finally, to assess the impact of optimal quantum mining strategies on Bitcoin's security, we study how~$P_{95}$ and~$P_\text{det}$ vary as functions of~$D$ for~$c \in [4]~(\ref{eqn:[x]})$ shown in  Fig.~\ref{fig:difficultyplots}.
For~$D \leq 2^{32}$, $P_{95}$ plateaus, as a single execution of the FQS algorithm is sufficient to yield a proof-of-work with high probability.
For~$D > 2^{32}$, $P_{95}$ approaches zero, which is consistent with the fact that the success probability~(\ref{eqn:groversuccessbitcoin}) decreases for increasing~$D$ whereas the number of Grover iterations is restricted to at most~$k_\text{opt}^{32}$~(\ref{eqn:kopt32}).
$P_\text{det}$ follows the same trend, but drops off at a higher value of~$D$ than~$P_{95}$. 
This offset is because even a small increase in~$p_\text{stale}$ above the classical baseline is statistically detectable by the network due to the small value of that baseline.
For all values of~$D$ and~$c$ considered, ~$P_{95} < \nicefrac1{3}$, reinforcing that the Bitcoin network remains secure against this quantum threat for our model.

We now explain that the threat posed to the Bitcoin network by a pair of quantum miners racing each other is negligible for the foreseeable future. 
In the high-difficulty regime, which reflects both current and projected values of the network difficulty~$D$, the optimal quantum mining strategies of Alice and Bob produce a~$p_\text{stale}$ that is statistically indistinguishable from the classical value of 0.0024~\cite{SSJ+18}, meaning that the Bitcoin network remains secure in this regime.
In the low-difficulty and ideal regimes, $p_\text{stale}$ increases significantly, approaching the~$\nicefrac{1}{3}$ benchmark on some days, which is evidenced by the nonzero~$P_{>\nicefrac{1}{3}}$ for~$c=1$.
However, neither of these regimes is likely to hold in practice.
The low-difficulty regime requires a network difficulty~$D$ at values not seen since the early days of Bitcoin, and the ideal regime requires a quantum-phase oracle capable of encoding the entire 640-bit block header (instead of just the 32-bit nonce), which is unrealistic given that even the 32-bit quantum-phase oracle is known to be expensive to construct~\cite{ABL+17}.
Furthermore, even if these conditions were met in reality, we find that~$P_\text{det}=1$ in both regimes, meaning that one could detect quantum mining with statistical certainty.
\section{Conclusions}
\label{sec:conclusion}

We solved Problem~\ref{prob:quantumthreat} by determining optimal quantum mining strategies for two quantum miners and estimating the impact of these strategies on Bitcoin's security against a 51\% attack.
We extended the S2QR model for Bitcoin~\cite{LRS19} to include multiple quantum measurements between successive blocks and Sattath's AQMS~\cite{Sat20}.
We derived the payoff matrices of the two quantum miners in this novel setting, computed optimal quantum mining strategies across three network difficulty regimes, and estimated the effect of these optimal strategies on the stale rate~$p_\text{stale}$.

We find that two quantum miners cannot raise~$p_\text{stale}$ sufficiently to render the Bitcoin network vulnerable to a 51\% attack according to our model.
In other words,
we show that the Bitcoin network is safe against our adversarial model.
In the high-difficulty regime, which reflects both the current and projected values of the network difficulty~$D$, the effect of quantum miners on~$p_\text{stale}$ is negligible and statistically indistinguishable from that of classical Bitcoin miners. 
In the low-difficulty and ideal regimes, quantum mining produces a significant increase in~$p_\text{stale}$ compared to classical values.
However we find that~$p_\text{stale} < \nicefrac{1}{3}$ on all but a vanishingly-small fraction of simulated days, meaning that the Bitcoin network remains secure according to the historical benchmark.
Furthermore, these regimes are unlikely to hold in practice, and it is statistically certain that the presence of the quantum miners would be detected by the classical network.
The method we develop for studying two quantum miners sets the foundation for studying the threat posed by three or more quantum miners.

We now present our views on next steps for this work.
Our model for two quantum miners only accounts for query-related overhead and not for overhead associated with other tasks such as quantum-state preparation and constructing the quantum-phase oracle for each candidate block.
Thus, we have only modelled a best-case scenario for the quantum miners.
Hence, the quantum miners' combined effect on~$p_\text{stale}$
will be less than we have predicted. 
Directions for future work could include generalizing 
to colluding quantum miners or to more than two non-colluding quantum miners, as well as determining the minimum number of quantum miners racing each other that would create a dangerous~$p_\text{stale}$ given a network difficulty~$D$.
Other future work could 
incorporate the classical Bitcoin network as a single or multiple players joining the quantum race.
Future work could account properly for overhead associated with building and querying the oracle.
Each of these investigations would be valuable to connect the potential threat of quantum mining to the actual Bitcoin network.
Overall, we conclude that the Bitcoin network remains secure against the threat posed by two quantum miners on~$p_\text{stale}$.
\section{Acknowledgments}
\label{sec:acknowledgements}
This work has been supported by Canada's Natural Sciences and Engineering Research Council (NSERC). 
ZM thanks R.\ R.\ Nerem for his insightful conversations in the nascent stage of our work.
We acknowledge the traditional owners of the land on which this work was performed at the University of Calgary: the Treaty 7 First Nations and the M\'{e}tis Nation of Alberta.
\newpage
\bibliographystyle{unsrt}
\bibliography{ref}
\end{document}